%% file: main.tex
\documentclass[11pt]{article}
\usepackage{graphicx} %
\usepackage[margin=1in]{geometry}
\usepackage{amsmath}
\usepackage{amsthm} %
\usepackage{amsfonts}
\usepackage{tcolorbox}
\usepackage{authblk}
\usepackage{hyperref}
\hypersetup{
    colorlinks=true,
    linkcolor=blue,
    filecolor=blue,      
    urlcolor=blue,
    pdftitle={Computationally Efficient Collaborative Communication},
    }
\usepackage{cleveref}
\usepackage{nicefrac}
\usepackage{url}
\usepackage{xcolor}
\usepackage{pgfplots}
\usepackage{parskip}
\usepackage{enumitem}
\usepackage{mathrsfs}
\usepackage{dsfont}
\usepackage{thmtools}
\usepackage{mathtools}
\usepackage{thm-restate}
\usepackage{algorithm}
\usepackage{algpseudocode}
\usepackage{tikz}
\usepackage{mathdots}
\usepackage{wrapfig}
\pgfplotsset{compat=1.18}
\usepgfplotslibrary{fillbetween}
\usetikzlibrary{calc, positioning, arrows.meta, backgrounds, patterns, trees, shapes, patterns, fit}

\usepackage{adjustbox}
\pgfplotsset{compat=1.17}

\allowdisplaybreaks

\definecolor{darkgreen}{RGB}{0, 100, 0} %
\definecolor{darkred}{RGB}{139, 0, 0}   %

\usepackage{natbib}
\makeatletter
\def\input@path{{}{sections/}{figures/}}
\def\bibfile{bibliography}
\makeatother

\input{macros}

\title{
Computationally Efficient Collaborative Communication\thanks{Authors are listed in alphabetical order. For correspondence, please contact Mark Bedaywi and Nika Haghtalab at
\texttt{\{mark\_bedaywi,nika\}@berkeley.edu}}
\\
\large{
Via Regularity-Based Coarsening}

}

\author{
Mark Bedaywi
\quad
Scott Emmons
\quad
Nika Haghtalab
\quad
Stuart Russell
}
\affil{University of California, Berkeley}

\date{}

\newtheoremstyle{openproblem}
  {3pt} %
  {3pt} %
  {\itshape} %
  {} %
  {\bfseries} %
  {.} %
  {.5em} %
  {} %

\newtheoremstyle{goalstyle}
  {3pt} %
  {3pt} %
  {} %
  {} %
  {\bfseries} %
  {.} %
  {.5em} %
  {} %

\theoremstyle{goalstyle}

\theoremstyle{openproblem}
\newtheorem{theorem}{Theorem}[section]

\newtheorem{definition}[theorem]{Definition}
\newtheorem{proposition}[theorem]{Proposition}

\newtheorem{lemma}[theorem]{Lemma}
\newtheorem{claim}[theorem]{Claim}

\newtheorem{remark}[theorem]{Remark}

\begin{document}

\maketitle

\begin{abstract}
 
Our results show that the existence of a short high-utility protocol already suffices for efficient communication.

In particular, in a game with $n$ possible observations and $m$ actions:
\begin{itemize}
    \item For any achievable target utility $\alpha$, we give an algorithm with $\mathrm{poly}(n, m, 1/\epsilon)$  runtime that designs a protocol achieving utility at least $\alpha-\epsilon$ using only $2^{\mathcal O(CC_\alpha(G))}/\epsilon^2$  bits of communication.
    Here,  $CC_\alpha(G)$ is the minimum number of bits used by any protocol, even a computationally inefficient one, to achieve utility $\alpha$.

    \item We prove that this exponential dependence on $CC_\alpha(G)$ is tight up to a constant. That is, unless $\mathrm P=\mathrm{NP}$, no polynomial-time algorithm can in general find optimal protocols using fewer than $2^{CC_\alpha(G) -2}$ bits.
\end{itemize}

We note that our results strictly weaken the assumptions required by prior work in the multi-agent information aggregation literature, filling a gap that had remained elusive even for games with constant $CC_\alpha(G)$. In particular, prior guarantees for agreement-based information aggregation rely on structural assumptions such as informational substitutes \citep{frongillo2023agreement} or weak learnability \citep{collina2025collaborative}. We show that these assumptions already imply $CC_\alpha(G) = O(1)$ and are therefore more restrictive conditions than required by our protocol to succeed.
Moreover, we show that Aumann agreement~\cite{aumann1976agreeing,aaronson_complexity_2005} can be exponentially suboptimal even in low-complexity games: there are games of communication complexity $CC(G)$ where agents reach agreement in $O(1)$ rounds but obtain no more than $2^{-CC(G)/2}$ utility. We also introduce a notion of lasting agreement and show that while lasting agreement guarantees at least $2^{-CC(G)/2}$ utility, reaching lasting agreement may require $\Omega(n)$ rounds.

On a technical level, our results involve a novel strengthening of the Frieze--Kannan weak regularity lemma \citep{frieze1999quick} and yield the following powerful polynomial-time transformation tool: for every communication game $G$, it constructs a game $\hat G$ that is a ``coarsening'' of the agents' observation spaces into constant-size partitions, such that $G$ and $\hat G$ are ``indistinguishable'' with respect to every short communication protocol.
This coarsening theorem is the engine behind our algorithm and may be of independent interest.

\end{abstract}
\newpage

\tableofcontents
\newpage

\input{sections/01-introduction}
\input{sections/02-roadmap}
\input{sections/03-Preliminaries}
\input{sections/04-Frieze-Kannan-Communication}

\input{sections/05-Hardness-of-Communication}
\input{sections/06-Aumann-Agreement-Guarantees}

\input{sections/07-related-work}

\input{sections/08-Discussion}

\bibliography{\bibfile}
\appendix

\input{sections/09-extra-proofs}

\end{document}

%% file: macros.tex
\newcommand{\E}{\mathop{\mathbb{E}}}
\newcommand{\abs}[1]{\left|#1\right|}
\newcommand{\norm}[1]{\lVert#1\rVert}

\newcommand{\bO}[1]{\mathcal{O}\left(#1\right)}

\newcommand{\acts}{\mathcal{A}}
\newcommand{\dist}{\mu}

\DeclareMathOperator*{\argmax}{arg\,max}

\newcommand{\cutnorm}[1]{\left\lVert#1\right\rVert_{\square, \mu}}

\newcommand{\ceil}[1]{\left\lceil #1\right\rceil}
\newcommand{\floor}[1]{\left\lfloor #1\right\rfloor}

\newcommand{\inner}[1]{\left\langle #1\right\rangle}

\newcommand{\rank}{\mathrm{rank}}

%% file: sections/01-introduction.tex
\section{Introduction}

How should agents communicate when the goal is not to compute a prescribed function exactly, but to make a good decision? And how should a long-context agent decide what information from earlier computation is worth remembering for completing an ongoing task?

These questions arise whenever relevant information is distributed, either across several agents with common interests or across the stages of a single long computation by an agent: A physician and a diagnostic model may each observe different, complementary evidence about the same patient and need to arrive, after a short and cognitively manageable interaction, at a good treatment decision. 
Drones in a swarm may each observe local data and need to exchange only a few bits before committing to a coordinated maneuver. 
Or an agentic assistant, such as Claude~\citep{anthropic_claude} or Codex~\citep{openai_codex}, used for coding or scientific research, has to decide what findings from its earlier reasoning trajectory to preserve under a limited context window.
In all of these examples, the goal is not to reconstruct all hidden information, but to communicate just enough while respecting communication bandwidth and computational (or cognitive) ease
---  to other agents or to a future self --- so that decisions with high utility can be taken based on the shared information.

A protocol for such a task must balance three goals that are usually studied separately. It should lead to a \emph{high-utility action}, use \emph{little communication}, and be simple enough to \emph{find and execute in polynomial time.} Classical information theory is well suited to questions about reconstructing distributed information, but here full reconstruction may be unnecessary. On the other hand, while communication complexity asks how many bits are needed to compute a prescribed function exactly \citep{kushilevitz1996communication}, it can use \emph{computationally intractable} protocols to do so. 

Once we care simultaneously about utility, brevity of communication, and the protocol itself being computationally or cognitively easy, neither framework by itself addresses the full question.

One influential recent line of work has approached decision-centric aggregation of decentralized information through the lens of Aumann agreement, which states that in the classical Bayesian model, two agents with a common prior and common knowledge of one another's posteriors cannot agree to disagree \citep{aumann1976agreeing}. Subsequent work showed that approximate agreement can be reached by exchanging posteriors back and forth \citep{geanakoplos_we_1982}, and that this only requires a small number of rounds \citep{aaronson_complexity_2005}. More recent work has developed tractable variants of this perspective \citep{collina2025collaborative} without the need for priors. This makes agreement-style communication appealing as a model of ``natural'' decentralized interaction: agents repeatedly report their current beliefs or best responses, and useful coordination is hoped to emerge from the conversation.
However, agreement does not in general imply high utility. It has long been understood that agents can agree while still failing to aggregate all useful information \citep{geanakoplos_we_1982}. So to ensure that agreement protocols are useful for decision-making, recent positive results have required additional structure, such as assumptions on the shared prior or on the geometry of the induced decision problem \citep{kong_false_2022, frongillo2023agreement, collina2025collaborative}. While these works provide important motivation, our aim is to study the problem at a broader algorithmic level rather than focus only on agreement-style protocols. Instead, we ask the following question: 
\begin{quote}
    \emph{Among all communication protocols, when can one \textbf{efficiently} find one that is simultaneously \textbf{short} and \textbf{useful}?
    }
\end{quote}

More concretely, we study \emph{common-interest} communication games in which Alice and Bob receive correlated private observations about the state of the world and may communicate before an action is taken.
We consider cases where one of the informed agents is the action taker --- for example, a doctor may consult an AI system and then personally decide on the treatment --- or a bystander, Charlie, who has no other private information takes the action --- e.g., a command center that controls the swarm of drones.\footnote{Our formal model uses Charlie as the action taker because it cleanly separates information revelation from decision-making, but the two viewpoints are very close and they differ by at most a small communication overhead.} Once an action is taken, all agents receive the same utility.
We seek protocols satisfying:
\begin{enumerate}[leftmargin=2em]
    \item \textbf{Utility.} The resulting action should have high expected reward.
    \item \textbf{Communication Efficiency.} The number of communicated bits should be small whenever the problem admits short useful protocols.
    \item \textbf{Computational Efficiency.} The protocol itself should be computable in polynomial time from the explicit description of the game.
\end{enumerate}

Any two of these goals are easy to satisfy in isolation. Revealing all private information is useful and easy to compute, but can take a number of bits that is arbitrarily larger than necessary. Sending nothing is certainly brief and computationally trivial, but may not  lead to high-utility actions. Brute-force search over all short protocols optimizes utility and communication length, but is computationally hopeless. The core problem is therefore to understand when one can approximately satisfy all three desiderata at once.

\subsection{Our Model and Results}

We study a communication game
$
    G=\langle \acts,\Omega_A,\Omega_B,\dist,r\rangle
$
in which Alice observes $\omega_A \in \Omega_A$ and Bob observes $\omega_B \in \Omega_B$ where
the observation pair is drawn from a known joint distribution
$
    \dist \in \Delta(\Omega_A \times \Omega_B)$.
We write $n$ for the size of the observation space per player and $m$ for the size of the action set. Charlie takes an action $a \in \acts$
and the agents receive a shared reward
$
    r(a;\omega_A,\omega_B) \in [-1,1].
$

For a target expected utility level $\alpha$, we write $CC_\alpha(G)$ for the minimum number of communicated bits required to achieve expected utility at least $\alpha$ when computational efficiency is ignored. If expected utility $\alpha$ is unattainable, we set $CC_\alpha(G)=\infty$.
We also define a decision-theoretic analogue of the monochromatic tiling number from communication complexity. The quantity $N_\alpha(G)$ is the partition number of communication game $G$: it is the minimum number of rectangles in a partition of Alice's and Bob's observation space $\Omega_A\times \Omega_B$ into
\[
    R_1 \times C_1,\dots,R_{N_\alpha(G)} \times C_{N_\alpha(G)},
\]
such that if Charlie is told only which rectangle (i.e., $R_i\times C_i$) contains the realized observation pair, then Charlie can still choose an action of expected utility at least $\alpha$.

Our main result is that whenever $CC_\alpha(G)$ is constant, there is a communication protocol that can be found in polynomial time using only a constant number of bits while still achieving utility close to $\alpha$.

\begin{theorem}[\Cref{thm:one-way} Informally]
\label{thm:informal:main:upper}
    For any choice of $\alpha$ and $\delta > 0$, there exists a communication protocol that runs in time $poly(n,m,1/\delta)$ and allows Charlie to take an action that achieves expected utility at least $\alpha - \delta$, while using
     \[
    \bO{\frac{N_\alpha(G)\min\{m,N_\alpha(G)\}}{\delta^2}} \leq \bO{\frac{2^{CC_\alpha(G)}\min\{m,2^{CC_\alpha(G)}\}}{\delta^2}}
\]
bits.
\end{theorem}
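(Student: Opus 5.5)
We will construct the protocol from a regularity-type decomposition of the game, in the spirit of the Frieze--Kannan weak regularity lemma, treating the reward as a collection of $m$ matrices $r_a(\omega_A,\omega_B)=r(a;\omega_A,\omega_B)\mu(\omega_A,\omega_B)$ over $\Omega_A\times\Omega_B$.

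\textbf{Step 1 (reduction to a partition).} First we record the inequality $N_\alpha(G)\le 2^{CC_\alpha(G)}$. Any deterministic protocol with $c$ bits partitions $\Omega_A\times\Omega_B$ into at most $2^c$ combinatorial rectangles, and Charlie's action depends only on the leaf reached. A randomized protocol can be handled by fixing the best random string, since utility is linear. So it suffices to achieve $\alpha-\delta$ using roughly $N\min\{m,N\}/\delta^2$ bits, where $N=N_\alpha(G)$ and the benchmark is an unknown optimal partition $\{R_i\times C_i\}_{i\le N}$ with actions $a_i$.

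\textbf{Step 2 (cut-norm approximation).} Next we run a Frieze--Kannan style greedy procedure on the reward matrices, with cut norm weighted by $\mu$. At each step we find, in polynomial time via the Alon--Naor SDP approximation of the cut norm, a rectangle $S\times T$ and an action (or a linear combination of actions) on which the current residual has large correlation. We then subtract its projection. The standard energy-increment argument bounds the number of steps by $O(1/\epsilon^2)$.

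The refinement we need is to measure the residual only against functions of the form $\sum_i \mathbf 1_{R_i\times C_i}\, r_{a_i}$. Such functions are the only test objects relevant to a partition with $N$ rectangles and $\min\{m,N\}$ distinct actions. With this restriction, the residual error compounds over at most $N\min\{m,N\}$ rectangle--action pairs. Setting $\epsilon=\delta/(N\min\{m,N\})^{1/2}$, or running the greedy step for $O(N\min\{m,N\}/\delta^2)$ iterations, ensures that every $N$-rectangle strategy has its value on the approximation within $\delta$ of its true value.

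\textbf{Step 3 (the protocol).} The protocol itself is simple. Each greedy step $t$ yields a rectangle $S_t\times T_t$. Alice sends the bit $\mathbf 1[\omega_A\in S_t]$ and Bob sends $\mathbf 1[\omega_B\in T_t]$, so the total is $O(N\min\{m,N\}/\delta^2)$ bits. Charlie then knows the cell of the common refinement, which is a coarsening $\hat G$ of the game. He best-responds by choosing the action maximizing the conditional expected reward in that cell, which is computable in $\mathrm{poly}(n,m)$ time. Because the optimal partition's strategy is approximately preserved on $\hat G$, and Charlie's best response on the refinement is at least as good as any strategy measurable with respect to the refinement, Charlie obtains at least $\alpha-\delta$.

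\textbf{Main obstacle.} The difficulty is the comparison step: the optimal rectangles $R_i\times C_i$ are not measurable with respect to the refinement, so we cannot directly evaluate the optimal strategy on $\hat G$. We will attempt to bound the gap by the cut-norm discrepancy of the residual against each $\mathbf 1_{R_i\times C_i} r_{a_i}$. To close it, we compare the conditional-expectation projection onto the refinement with a Charlie strategy that mixes over cells, and use linearity in the action indicators. The price of each rectangle--action pair is what produces the $N\min\{m,N\}$ factor. Making this discrepancy bound hold uniformly, while keeping every greedy step polynomial-time despite the approximation constant of the cut-norm SDP, is the step we expect to require the novel strengthening of the regularity lemma.
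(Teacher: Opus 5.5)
Your outline has the paper's architecture: greedy boosting driven by Alon--Naor, Alice and Bob sending membership bits in the found sets, a cellwise best response by Charlie, and comparison with an optimal $N$-rectangle policy $\pi^*$. But the step that carries the proof is the one you leave open, and Step~2 as written does not supply it.

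If ``subtract its projection'' means projecting the residual onto the single new cut $\mathds{1}_{S_t\times T_t}$, you get the usual Frieze--Kannan guarantee, but for a sum-of-cuts approximant. Your $\hat G$ in Step~3 instead uses cell averages $\mathbb{E}_\mu[r(a;\cdot,\cdot)\mid\mathcal F]$, with $\mathcal F=\sigma(\mathcal P)\otimes\sigma(\mathcal Q)$. For non-product $\mu$ the standard passage from the first to the second breaks down. Using the sum-of-cuts approximant itself as $\hat G$ does not help either: Charlie's policy distinguishes up to $2^{2t}$ cells, so its value no longer transfers back to $G$.

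The missing idea is to run the greedy loop on the coarsening residual itself. Let $\mathcal F_t=\sigma(\mathcal P_t)\otimes\sigma(\mathcal Q_t)$ and $\bar r^{t}_a:=\mathbb{E}_\mu[r(a;\cdot,\cdot)\mid\mathcal F_t]$. Apply Alon--Naor to $r(a;\cdot,\cdot)-\bar r^{t-1}_a$ for every $a$. If some returned $S_t\times T_t$ carries residual $\mu$-mass at least $c\epsilon$ in absolute value, add $S_t$ to $\mathcal P$ and $T_t$ to $\mathcal Q$; otherwise every residual has weighted cut norm below $\epsilon$. The potential argument then goes as follows.
\begin{itemize}
\item Since $S_t\times T_t$ is now $\mathcal F_t$-measurable, the new residual integrates to zero on it. So $\bar r^t_a-\bar r^{t-1}_a$ has mass at least $c\epsilon$ there, and by Cauchy--Schwarz its second moment is at least $(c\epsilon)^2$.
\item Orthogonality of martingale increments makes $\sum_a\mathbb{E}_\mu[(\bar r^t_a)^2]$ grow by at least this much per step.
\item Jensen bounds this potential by $\sum_a\mathbb{E}_\mu[r(a;\cdot,\cdot)^2]\le 1$.
\end{itemize}
This gives $O(1/(c\epsilon)^2)$ steps for all actions at once, and the SDP constant is harmless.

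With a genuine coarsening in hand, your ``main obstacle'' disappears, and your mixing-over-cells idea is what closes it. First, $\pi^*$ can be evaluated in $G_{\mathcal P,\mathcal Q}$ even though it is not $\mathcal F$-measurable. Second, Charlie's cellwise best response is pointwise optimal in $G_{\mathcal P,\mathcal Q}$, because the coarsened rewards are cell-constant, so it beats $\pi^*$ there. Third, being $\mathcal F$-measurable, it has exactly the same value in $G$ by the tower property.

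The second gap is the accounting behind $\min\{m,N\}$. Tests of the form $\sum_i\mathds{1}_{R_i\times C_i}r_{a_i}$ over unknown partitions cannot be searched efficiently; that is the original problem. What you can enforce is single rectangle--action tests, i.e.\ cut norm at most $\epsilon$ per action. From these, the triangle inequality over the $N$ rectangles of $\pi^*$ gives loss $\epsilon N$ (there are only $N$ rectangle--action pairs, not $N\min\{m,N\}$). That forces $\epsilon=\delta/N$ and $O(N^2/\delta^2)$ bits, which matches the statement only when $m\ge N$. Your choice $\epsilon=\delta/\sqrt{N\min\{m,N\}}$ presupposes a square-root compounding you never justify. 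With the linear compounding you describe, it would give loss $\delta\sqrt{N\min\{m,N\}}$.

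For $m<N$ the paper uses Grothendieck's inequality:
\begin{itemize}
\item The indicator $M^a$ of $\{\pi^*=a\}$ is a $0/1$ sum of $|D^a|$ disjoint rectangles, so its factorization norm satisfies $\gamma(M^a)\le\sqrt{\mathrm{rank}(M^a)}\le\sqrt{|D^a|}$.
\item The $\mu$-weighted reward difference $\Delta^a$ satisfies $\gamma^*(\Delta^a)\le K_{\mathrm{Gr}}\|\Delta^a\|_{\ell_\infty\to\ell_1}\le 4K_{\mathrm{Gr}}\epsilon$, since its cut norm is at most $\epsilon$.
\item Hence $|\langle\Delta^a,M^a\rangle_F|\le 4K_{\mathrm{Gr}}\epsilon\sqrt{|D^a|}$, and Cauchy--Schwarz over the $m$ actions gives total loss $4K_{\mathrm{Gr}}\epsilon\sqrt{mN}$.
\end{itemize}
Taking $\epsilon=\delta/\min\{N,4K_{\mathrm{Gr}}\sqrt{mN}\}$ yields $O(N\min\{m,N\}/\delta^2)$ bits.

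Finally, $N=N_\alpha(G)$ is unknown to the algorithm and hard to compute, so you also need a search over $\epsilon$. For example, halve $\epsilon$ until the exactly computable utility of the resulting protocol reaches $\alpha-\delta$. This costs only a constant factor in bits and logarithmically many extra runs.
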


\begin{remark}
Theorem~\ref{thm:informal:main:upper} shows that the bit complexity of efficient communication is independent of the size of the observation space or the action set.

By definition, if computational efficiency is ignored there are protocols using only $CC_\alpha(G)$ bits. Importantly, Theorem~\ref{thm:informal:main:upper} shows that with computational efficiency as a constraint, the number of bits communicated is still independent of the size of the observation space but exponential in $CC_\alpha(G)$. We provide evidence that this exponential blowup is unavoidable in general.
\end{remark}

\begin{theorem}[\Cref{thm:perfect-hardness} Informally]
    Unless $\mathrm{P} = \mathrm{NP}$, any efficient communication protocol allowing Charlie to make perfect decisions must, in the worst case, send roughly $2^{CC(G)}$ bits.
\end{theorem}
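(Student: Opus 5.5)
We prove hardness by a polynomial-time reduction from an NP-hard problem in which a short optimal protocol is promised to exist but is hard to find. The natural source is graph coloring, specifically the NP-hardness of coloring $3$-colorable (or, more generally, $k$-colorable) graphs with few colors. The plan is to build a game $G$ in which perfect utility is achievable with $CC(G)=O(\log k)$ bits. Any protocol attaining perfect utility with $b$ bits should then let us read off, in polynomial time, a coloring of the input graph with at most $2^{b}$ colors.

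\textbf{Construction.} Given a graph $H=(V,E)$ on $n$ vertices, let $\Omega_A=\Omega_B=V$ and let $\dist$ be uniform over ordered pairs $(u,v)$ with $\{u,v\}\in E$, together with diagonal pairs $(v,v)$. Charlie's action set is $\{\text{same},\text{diff}\}$. The reward is $1$ for the correct answer to ``is $\omega_A=\omega_B$?'' and $0$ (or $-1$) otherwise.

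\textbf{Upper bound on $CC(G)$.} If $H$ is $k$-colorable, Alice and Bob each send the $\lceil\log k\rceil$-bit color of their vertex. Since edge endpoints receive distinct colors and diagonal pairs receive equal colors, Charlie decides perfectly. Hence $CC(G)\le 2\lceil\log k\rceil$.

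\textbf{Extracting a coloring.} Take any protocol with perfect utility using $b$ bits. The transcript partitions $V\times V$ into at most $2^b$ rectangles, and each rectangle in the support must be monochromatic for the same/diff answer. For each vertex $v$, let the label of $v$ be the transcript leaf on input $(v,v)$, computable in polynomial time by simulating the protocol. Suppose $\{u,v\}\in E$ but $u$ and $v$ have the same leaf. By the rectangle property, $(u,v)$ then also lands in that leaf, which is labeled ``same'', giving a mistake on a positive-probability input. So the labels form a proper coloring with at most $2^b$ colors.

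Conclusion. If a polynomial-time algorithm always produced perfect protocols with fewer than $2^{CC(G)-2}$ bits, it would color $H$ with at most $2^{2^{CC(G)-2}}$ colors. With $CC(G)\le 2\log k+2$ this is $2^{k^{2}/\text{const}}$ colors. That is far more colors than the known hardness results forbid, so the naive parameters fail.

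\textbf{Main obstacle.} The real difficulty is obtaining the gap $b$ versus $2^{CC}$ rather than $b$ versus $CC$. The coloring reduction alone only shows that polynomially many bits of blowup are forced. To get the doubly exponential statement, I would first construct a game where Charlie's decision requires identifying an element of a hidden structure. The rectangle partition number $N(G)$ should be small, about $2^{CC}$, while any efficiently found partition must be refined into roughly $2^{N(G)}$ pieces, unless an NP-hard object is recovered. A natural candidate is to encode a $3$-SAT or 3-coloring instance on $N$ variables so that the optimal $N$ rectangles correspond to a satisfying assignment. I would then argue that a protocol with fewer than $2^{CC-2}\approx N/4$ bits can only be perfect if, read as above, it yields a witness, which is NP-hard to find. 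Making this step rigorous requires matching $CC(G)\approx\log N$ from above by an explicit protocol that depends on the hidden witness, and that is the step I expect to require the most care.
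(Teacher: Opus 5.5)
Your game and your extraction step (label $v$ by the leaf reached on $(v,v)$, then apply the rectangle property to an edge $(u,v)$) match the paper's construction and its fooling-set argument, and both are correct. The gap is that you never derive the exponential lower bound. You conclude that the coloring reduction forces only a small blowup, and you fall back on an unspecified 3-SAT encoding. Its crucial claim, that any perfect protocol with fewer than about $N(G)/4$ bits exposes a witness, is asserted rather than proved. So the proposal as written does not establish the statement. The diagnosis is also mistaken: the reduction is fine, and what is missing are two quantitative ingredients.

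First, your bound $CC(G)\le 2\lceil\log_2 q\rceil$ for a $q$-colorable $H$ is off by a factor of two, which squares $2^{CC(G)}$. Alice can send $c(u)$ and Bob can reply with the single bit $\mathds{1}(c(u)=c(v))$, which already lets Charlie play perfectly. Hence $CC(G)\le 1+\lceil\log_2 q\rceil$ and $2^{CC(G)}=O(q)$ rather than $O(q^2)$.

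Second, the hardness result you plug in must forbid a number of colors \emph{exponential} in the chromatic number. The paper uses the theorem of Wrochna and \v{Z}ivn\'y: for $q\ge 4$, it is NP-hard to color a $q$-colorable graph with $\binom{q}{\lfloor q/2\rfloor}-1$ colors.

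With these two ingredients the argument closes. Take $q=2^{k-1}$, so $CC(G)\le k$. Suppose a polynomial-time algorithm output perfect protocols with $b<2^{k-2}=q/2$ bits. Through your extraction it would color $H$ with $2^b<2^{q/2}\le\binom{q}{q/2}$ colors, which is impossible unless $\mathrm{P}=\mathrm{NP}$.

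Even with your two-sided protocol, this hardness already forces $b\ge\log_2\binom{q}{q/2}\ge q/2\ge 2^{CC(G)/2-1}$ for $q$ a power of two. So the gap is exponential either way, and the one-bit reply is what sharpens it to $2^{CC(G)-2}$.

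In particular, the structure you hoped to engineer in your last paragraph is already present in the graph game. There $N(G)\le 4q$ is small, yet perfect partitions that an efficient algorithm can find need $\binom{q}{\lfloor q/2\rfloor}=2^{\Omega(N(G))}$ parts.
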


\paragraph{Comparison to Recent Work on Agreement}
Let us compare Theorem~\ref{thm:informal:main:upper} with recent results that use Aumann \emph{agreement} (or related dynamics approximating agreement) as a mechanism for decision-centric information aggregation. The point of comparison is the strength of the assumptions: taking  $CC_\alpha(G)$ to be a constant is a weaker constraint than  the structural conditions under which prior work shows that agreement itself leads to good decisions.

Since agreement protocols are known not to imply high utility in general \citep{geanakoplos_we_1982}, recent positive results have imposed additional structure on the information environment. For example, \cite{frongillo2023agreement} show that agreement implies accuracy under a strong substitutes-type condition they call \emph{rectangle substitutes}, while \cite{kong_false_2022} obtain a related positive result under a specialized Bayesian substitutes condition based on conditional independence of signals given the state. More recently, \cite{collina2025collaborative} introduce a weaker and more operational condition, which they call \emph{$w$-weak learnability}. Among these assumptions, weak learnability is the weakest assumption leading to the most general guarantee that agreement-style procedures obtain.

In \Cref{prop:cc-weak}, we show that $w$-weak learnability already implies bounded communication complexity: for every fixed target utility level $\alpha$, if a game is $w$-weak learnable then $CC_\alpha(G)=O(1)$ (with the constant depending on $w$ and $\alpha$). Thus the low-communication regime in which Theorem~\ref{thm:informal:main:upper} is strongest already contains the settings captured by the agreement-based results of \cite{collina2025collaborative,frongillo2023agreement}. In this sense, assuming that $CC_\alpha(G)$ is small is a strictly more permissive starting point than past alternatives.\footnote{However, the exact results of our paper and those of \cite{collina2025collaborative} in the Bayesian setting are incomparable. \cite{collina2025collaborative} aims for optimality with respect to a class of functions over the joint feature space, and the runtime of their algorithm depends on how easy it is to optimize over this function class. On the other hand, we compete with the utility of the optimal action (analogous to allowing any function), and as a result our runtime depends on the size of the observation and action spaces.
}

We complement this with an analysis of the performance of Aumann agreement on low-complexity games. In \Cref{prop:AA-upper-bound}, we exhibit communication games of complexity $CC(G)$ in which the agents are immediately in agreement over what the best action is, but only achieve a utility of $2^{-CC(G)/2}$. While it is known that agreement can be reached in constant time \cite{aaronson_complexity_2005, collina_tractable_2024}, agreement is brittle and may not be lasting. We introduce a notion of lasting agreement, defined as the fixed point of an agreement protocol, the point at which the agreed-upon action never changes. In \Cref{thm:AA-guarantee}, we prove that when the agents are in lasting agreement, this bound is tight: the utility agents achieve in lasting agreement over the best action to take in a communication game $G$ is at least $2^{-CC(G)/2}$. In \Cref{prop:AA-too-long}, we give communication games with constant communication complexity in which lasting agreement requires $\Omega(n)$ steps. In such a game, running an agreement protocol requires exponentially more communicated bits to arrive at lasting agreement than simply communicating the full observation outright.

By contrast, whenever $CC_\alpha(G)$ is constant for a target utility level $\alpha$, our protocol achieves utility close to $\alpha$ using a number of bits independent of the size of the game.

\paragraph{Our Strengthening of the Frieze--Kannan Regularity Lemma}
In Section~\ref{sec:1.2}, we describe our technical tools in more detail. Here, let us highlight just one of our contributed technical theorems, a strengthening of the Frieze--Kannan weak regularity lemma \citep{frieze1999quick}. This theorem (stated more formally in \Cref{lem:grid-FK}) serves as a technical engine for our paper and we believe it may be of independent interest to the community.

\begin{theorem}[Informal generalized coarsening theorem]
Let $A^1,\dots,A^k\in[-1,1]^{n_1\times n_2}$ be matrices, let $\dist$ be any distribution over their entries, and let $s$ denote their total $\dist$-weighted squared mass.
There are collections of row sets $\mathcal P$ and column sets $\mathcal Q$, each of size $\bO{s/\epsilon^2}$, such that if $A^i_{\mathcal P,\mathcal Q}$ is obtained by replacing each entry of $A^i$ by its $\dist$-conditional average on the row and column membership pattern induced by $\mathcal P,\mathcal Q$, then for every matrix $i$ and every rectangle $R\times C$,
\[
    \left|
    \E_{(x,y)\sim \dist}
    \left[
        \mathds{1}\{x\in R,\,y\in C\}
        \left(A^i(x,y)-A^i_{\mathcal P,\mathcal Q}(x,y)\right)
    \right]
    \right|
    \leq \epsilon .
\]
Moreover, the sets $\mathcal P,\mathcal Q$ can be found in polynomial time.
\end{theorem}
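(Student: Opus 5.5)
We prove the generalized coarsening theorem by an energy-increment argument in the weighted space $L^2(\dist)$, running the Frieze--Kannan iteration jointly for all $k$ matrices. Throughout we use the inner product $\inner{f,g} = \E_{(x,y)\sim\dist}[f(x,y)g(x,y)]$. We maintain row sets $\mathcal P$ and column sets $\mathcal Q$, both initially empty. They induce a partition $\mathcal F$ of $\Omega_A\times\Omega_B$ into cells: each cell is a product of a row atom (rows sharing a membership pattern in $\mathcal P$) and a column atom.

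The coarsening $A^i_{\mathcal P,\mathcal Q}$ is exactly the conditional expectation $\E_\dist[A^i\mid \mathcal F]$, that is, the orthogonal projection of $A^i$ onto functions constant on cells. We measure progress with the potential
\[
\Phi(\mathcal P,\mathcal Q)=\sum_{i=1}^k \norm{A^i_{\mathcal P,\mathcal Q}}_{L^2(\dist)}^2 .
\]
This potential is nondecreasing under refinement and bounded above by $s=\sum_i \norm{A^i}^2_{L^2(\dist)}$.

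\textbf{Increment step.} Suppose some $i$ and some rectangle $R\times C$ violate the conclusion, so that $\abs{\inner{A^i-A^i_{\mathcal P,\mathcal Q},\,\mathds 1_{R\times C}}}>\epsilon$. Add $R$ to $\mathcal P$ and $C$ to $\mathcal Q$, giving a refined partition $\mathcal F'$ in which $\mathds 1_{R\times C}$ is $\mathcal F'$-measurable. We then bound the gain for the violating index. By Pythagoras,
\[
\norm{A^i_{\mathcal F'}}^2-\norm{A^i_{\mathcal F}}^2=\norm{A^i_{\mathcal F'}-A^i_{\mathcal F}}^2 .
\]
Since $\mathds 1_{R\times C}$ is $\mathcal F'$-measurable,
\[
\inner{A^i_{\mathcal F'}-A^i_{\mathcal F},\mathds 1_{R\times C}}=\inner{A^i-A^i_{\mathcal F},\mathds 1_{R\times C}} .
\]
Cauchy--Schwarz, together with $\norm{\mathds 1_{R\times C}}\le 1$, then gives a gain of at least $\epsilon^2$. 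The other indices cannot lose energy, since refinement only increases each projection's norm. So $\Phi$ rises by at least $\epsilon^2$ per iteration, and the process halts after at most $s/\epsilon^2$ iterations. At that point $\abs{\mathcal P},\abs{\mathcal Q}\le s/\epsilon^2$, which gives the $\bO{s/\epsilon^2}$ bound. At termination no violating pair $(i,R\times C)$ exists, which is precisely the claimed conclusion.

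\textbf{Polynomial time.} The obstacle is that finding a violating rectangle exactly is a cut-norm computation, and that problem is NP-hard. We therefore follow Frieze--Kannan and Alon--Naor and use an approximate oracle. Fix $i$ and form the weighted matrix $M(x,y)=\dist(x,y)\,(A^i-A^i_{\mathcal F})(x,y)$. Its cut norm is $\max_{R,C}\abs{\sum_{x\in R,y\in C}M(x,y)}$, and the Alon--Naor SDP rounding finds, in polynomial time, a rectangle achieving at least a constant fraction $\rho\approx 0.56$ of this value. Rectangles are then handled in two cases:
\begin{enumerate}[leftmargin=2em]
\item If for every $i$ the oracle's rectangle has value at most $\rho\epsilon$, then every rectangle satisfies the bound $\epsilon$, and we stop.
\item Otherwise the oracle's rectangle has value greater than $\rho\epsilon$. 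We refine by it, and $\Phi$ rises by at least $\rho^2\epsilon^2$.
\end{enumerate}
The iteration count is therefore at most $s/(\rho^2\epsilon^2)=\bO{s/\epsilon^2}$. Each iteration makes $k$ SDP calls, and recomputing the conditional averages over at most $2^{\abs{\mathcal P}}\cdot 2^{\abs{\mathcal Q}}$ cells takes time polynomial in $n_1n_2$, because we only need the nonempty cells, of which there are at most $n_1n_2$. Alternatively, one can avoid SDP altogether with the Frieze--Kannan sampling approach, which gives an additive-error approximation; we would use whichever one the later sections require.

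I expect the main subtlety to be this algorithmic step, namely tolerating the constant-factor loss of the approximate cut-norm oracle. A second point to check is that the $\dist$-weighting is handled correctly, with $\mathds 1_{R\times C}$ normalized in $L^2(\dist)$, so that the potential bound is exactly $s$ and the result holds for arbitrary non-product $\dist$.
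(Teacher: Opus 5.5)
Your proposal is correct and follows essentially the paper's proof of the formal version (\Cref{lem:grid-FK}). It uses the same greedy refinement driven by an approximate Alon--Naor cut-norm oracle, the same potential $\sum_i \E_\dist[(A^i_{\mathcal P,\mathcal Q})^2]\le s$, and the same Pythagorean (martingale-orthogonality) identity; your Cauchy--Schwarz bound against $\mathds{1}_{R\times C}$ is just a more compact form of the paper's conditional-expectation-plus-Jensen increment bound, and the only caveat is that $\rho\approx 0.56$ is the randomized Alon--Naor guarantee (the paper uses the deterministic version with $c\approx 0.03$), which changes only constants.
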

For communication games, the relevant value of $s$ is constant, so the number of row and column sets is $\bO{\epsilon^{-2}}$ rather than depending on the number of actions.

Applied simultaneously to the reward matrix of every action, this theorem gives a polynomial-time transformation \emph{from any communication game $G$ to a coarsened game $\hat G$ whose reward discrepancies are small on every rectangle-action test.} This is the \emph{indistinguishability} test that then lets us prove that $G$ and $\hat G$ are indistinguishable on every short protocol. Communication in the coarsened game proceeds by having each agent reveal only the membership pattern of their observation in $\mathcal{P}$ or $\mathcal{Q}$, leading to short communication.

At a high level, this result can be viewed as a multicalibration variant of the Frieze--Kannan cut decomposition, in the spirit of recent work connecting regularity and multicalibration \citep{casacuberta2024complexity}. The important strengthening is that the approximation is itself a genuine ``coarsening'' --- i.e., a conditional expectation of the original matrices on the row and column membership patterns --- and applies to joint distributions $\dist$. Our approach constructs the coarsened object directly, which allows our communication protocol to use the coarsened observations within the framework of tests protected by the regularity and indistinguishability argument.

%% file: sections/02-roadmap.tex
\subsection{Techniques and Proof Overview}
\label{sec:1.2}

Our aim is to prove that whenever a short communication protocol exists, there is a short protocol that is efficiently computable.

\paragraph{Search Is Hard, Even For One Bit.}

Let us note the computational challenges that arise from the combinatorial nature of the problem. Even deciding how to send a single bit of information in games whose communication complexity is $1$ is non-trivial as the space of candidate one-bit protocols is exponential: each one-bit protocol for Alice
is analogous to a partition of her observation space into two sets --- one set of observations leading to message $\texttt{1}$ being sent and the other to message $\texttt{0}$. This exponentially large search space implies that low communication complexity does not, by itself, make brute-force search over the short protocol space tractable.

To make matters worse, we will show in \Cref{sec:hardness} that even on games with at most a constant communication complexity, computing perfect communication protocols without needing an exponential blowup in the number of bits is NP-hard --- one can solve graph coloring problems with a good communication scheme.

\paragraph{Learning a Compressed Abstraction for Communication Problems}

Instead, notice that the brute-force hardness of communication search completely disappears if the observation space has constant size. This is because each agent's private signal can be encoded in a constant number of bits, and the number of constant-bit protocols on a constant-size observation space is itself constant. This means that the agents can iterate over all such protocols in constant time.

At a high level, our key result is to show that tractable communication is possible in general because \textbf{\emph{any communication problem is approximately ``effectively equal'' to a constant-observation game}.} We will formalize this insight and what ``effectively equal'' means through the language of indistinguishability.

 Two communication games $G$ and $\hat{G}$ having the same reward on all possible actions and observation pairs $\omega_A, \omega_B \in \Omega_A, \Omega_B$ is enough, but is too strong a notion of equality. Instead, we ask for approximate equality of the expected reward of taking any action, conditioned on any rectangle of observations $S \times T$ that has sufficiently high likelihood (where $S\subseteq \Omega_A$ and $T\subseteq  \Omega_B$) to be $\epsilon$ close.

Formally, every action $a \in \acts$ and every rectangle $S \times T \subseteq \Omega_A \times \Omega_B$ defines a test function $f_{a, S \times T}$ that we can run on any communication game $G$:
\begin{equation*}
    f_{a, S \times T}(G) = \E_{\omega_A, \omega_B \sim \dist}\left[\mathds{1}(\omega_A \in S, \omega_B \in T) \cdot r(a; \omega_A, \omega_B)\right].
\end{equation*}
We say two communication games $G$ and $\hat{G}$ are $\epsilon$-indistinguishable when for \emph{every single possible choice of $S\subseteq\Omega_A$, $T\subseteq\Omega_B$ and $a \in \acts$}, the two test values do not differ by more than $\epsilon$:
\begin{equation*}
    \abs{f_{a, S \times T}(G) - f_{a, S \times T}(\hat{G})} \leq \epsilon.
\end{equation*}

Why is this set of tests natural? It is possible to show that if $G$ and $\hat{G}$ are $\epsilon$-indistinguishable with respect to the set of tests defined by rectangles, then the expected utility of any one-bit protocol run on $G$ and $\hat{G}$ differs by no more than $4\epsilon$. Such games are approximately indistinguishable to Alice, Bob, and Charlie if Alice and Bob are not allowed to send more than a single bit.

\paragraph{Warm-up Algorithm for a Somewhat Tractable Communication Protocol.}
This already suggests a natural warm-up algorithm for designing \emph{somewhat tractable} communication protocols that compete with the optimal single-bit protocol, even when that optimal protocol may itself be computationally inefficient: Compute the compressed game
$\hat G$, find the best one-bit protocol $\hat \pi^*$ in $\hat G$, and then run
$\hat \pi^*$ back in the original game $G$. Then by the
indistinguishability argument, when $\pi^*$ is the optimal one-bit protocol in
$G$ that achieves value $\alpha$, we have that $\pi^*$ achieves value at least
$\alpha-4\epsilon$ in $\hat G$. Since $\hat \pi^*$ is optimal in $\hat G$, it must be at least as good as $\pi^*$ in $\hat{G}$ and so also achieves value at least $\alpha-4\epsilon$ there. Applying
indistinguishability once more, $\hat \pi^*$ achieves value at least
$\alpha-8\epsilon$ back in $G$. Thus, the warm-up algorithm computes an additive
$8\epsilon$-approximation to the best one-bit protocol in $G$.

However, this warm-up algorithm is not yet computationally satisfactory. Even
after compression, the number of effective observations is
$2^{\bO{1/\epsilon^2}}$. Since a one-bit messaging rule for Alice is a subset of her
compressed observation space, the number of possible one-bit messaging rules
for Alice is roughly $2^{2^{\bO{1/\epsilon^2}}}$, and similarly for Bob. Thus
brute-forcing all one-bit protocols in the compressed game gives running time
$2^{2^{\bO{1/\epsilon^2}}}$, independent of the original observation size but
still doubly exponential in $1/\epsilon$.

\paragraph{Why Indistinguishability (Weak Regularity) is not Enough.}
A natural attempt to avoid this brute-force step is to have Alice and Bob simply communicate their full observations in the compressed game. Since the compressed game has only $2^{\bO{1/\epsilon^2}}$ observations, this would use only $\bO{1/\epsilon^2}$ bits. The hope would be that this protocol is an $O(\epsilon)$-additive approximation to the best one-bit protocol.

This is where the indistinguishability (weak-regularity) abstraction runs into a vicious cycle. The compressed game is guaranteed to be close to $G$ for one-bit protocols, but the protocol that communicates the compressed observation uses $\bO{1/\epsilon^2}$ bits. Asking for indistinguishability against all $\bO{1/\epsilon^2}$-bit protocols requires a finer approximation, which increases the size of the compressed game, which increases the number of bits needed to communicate inside it, and so on. The bit requirement explodes.

In what follows, we make a crucial change to this indistinguishability argument, yielding a stronger form of game compression. This stronger compression allows us to compete with the best one-bit protocol using only $\bO{1/\epsilon^2}$ bits, while running in time polynomial in $1/\epsilon$. We then show how the same ideas extend to competing with optimal (intractable) protocols of arbitrary length.

\paragraph{Strengthened Compression via Indistinguishable
Coarsenings}
Our goal is to ensure that communicating the full observation from the compressed game is equivalent to communicating coarse information about the true observation (such as what set it belongs to) in the original game. For this to happen, we need a compression algorithm that leaks no additional information about the original observation space beyond the abstraction itself. We achieve this by introducing the concept of ``indistinguishable coarsening''.

Given a communication game $G$ and a family of sets $\mathcal{P}=\{P_1,\dots,P_\ell \}\subseteq 2^{\Omega_A}$, a coarsening hides Alice's exact observation and reveals only the membership pattern of $\omega_A$ in these sets. The same is done for Bob using sets $\mathcal{Q}=\{Q_1,\dots,Q_\ell\}\subseteq 2^{\Omega_B}$. We write the resulting game as $G_{\mathcal{P},\mathcal{Q}}$, where the reward of any observation is its conditional expected reward computed on the $\sigma$-algebra induced by the membership pattern of $\mathcal{P},\mathcal{Q}$. See Definition~\ref{def:coarsening} for more details.

We then prove a strengthened regularity lemma and, with it, a more powerful version of the indistinguishability result that uses coarsened games.

\begin{theorem}[Informal version of \Cref{thm:small-coarsening}]
For every communication game $G$ and every $\epsilon>0$, there are collections of sets $\mathcal{P}\subseteq 2^{\Omega_A}$ and $\mathcal{Q}\subseteq 2^{\Omega_B}$ with $\abs{\mathcal{P}},\abs{\mathcal{Q}}=\bO{\epsilon^{-2}}$ such that the coarsened game $G_{\mathcal{P},\mathcal{Q}}$ is $\epsilon$-indistinguishable from $G$. Moreover, such collections can be found in time $\mathrm{poly}(n,m,\epsilon^{-1})$.
\end{theorem}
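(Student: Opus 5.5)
The plan is an energy-increment (potential-function) argument in the style of Frieze--Kannan, run simultaneously over all $m$ reward matrices but with the potential summed across actions. For each action $a$, view $r(a;\cdot,\cdot)$ as a matrix $A^a\in[-1,1]^{\Omega_A\times\Omega_B}$. For current collections $\mathcal P,\mathcal Q$, write $A^a_{\mathcal P,\mathcal Q}=\E_\dist[A^a\mid \sigma(\mathcal P)\otimes\sigma(\mathcal Q)]$, the conditional expectation on the membership-pattern atoms. Here $\sigma(\mathcal P)$ is the partition of $\Omega_A$ into atoms by membership in each $P_j$, and similarly for $\mathcal Q$.

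Since the number of sets must not depend on $m$, the potential cannot be a sum over actions of squared norms; that sum has mass $m$. Instead I would use a potential of total mass $s=O(1)$. A natural choice is $\Phi=\E_\dist\big[\max_a$-free$\,\cdot\big]$. More concretely, I would exploit that each test only involves one action: refine by the violating action alone and measure progress via $\Phi(\mathcal P,\mathcal Q)=\E_\dist[\|\vec A_{\mathcal P,\mathcal Q}(x,y)\|^2_w]$ for a weighting in which the vector of rewards has total bounded mass. In the formal Lemma~\ref{lem:grid-FK} the quantity $s$ is the total weighted squared mass, and for games it is stated to be constant. I would take that as given, i.e.\ work with whatever normalization makes $\sum_a\|A^a\|^2\le s$, and simply track $\Phi=\sum_a\|A^a_{\mathcal P,\mathcal Q}\|_{L^2(\dist)}^2\le s$.

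\textbf{Step 1: the increment.} Suppose some $a,S,T$ has $|\E_\dist[\mathds 1_S\mathds 1_T(A^a-A^a_{\mathcal P,\mathcal Q})]|>\epsilon$. Add $S$ to $\mathcal P$ and $T$ to $\mathcal Q$. The new $\sigma$-algebra refines the old one and makes $\mathds 1_{S\times T}$ measurable. By the Pythagorean identity for nested conditional expectations,
\[
\|A^b_{\mathcal P',\mathcal Q'}\|^2=\|A^b_{\mathcal P,\mathcal Q}\|^2+\|A^b_{\mathcal P',\mathcal Q'}-A^b_{\mathcal P,\mathcal Q}\|^2
\]
holds for every $b$, so no term decreases. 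For $b=a$, Cauchy--Schwarz gives
\[
\epsilon<\inner{\mathds 1_{S\times T},A^a-A^a_{\mathcal P,\mathcal Q}}=\inner{\mathds 1_{S\times T},A^a_{\mathcal P',\mathcal Q'}-A^a_{\mathcal P,\mathcal Q}}\le\|A^a_{\mathcal P',\mathcal Q'}-A^a_{\mathcal P,\mathcal Q}\|.
\]
Hence $\Phi$ grows by more than $\epsilon^2$. As $\Phi\le s$, there are at most $s/\epsilon^2$ iterations, and each adds one set to each side, which gives $|\mathcal P|,|\mathcal Q|=O(s/\epsilon^2)$. Crucially, the refinement is a genuine coarsening, since the $\sigma$-algebra is generated by the sets, so the final object is exactly $G_{\mathcal P,\mathcal Q}$.

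\textbf{Step 2: algorithmics, the main obstacle.} Finding a maximally violating rectangle is a cut-norm problem on $D=(A^a-A^a_{\mathcal P,\mathcal Q})\circ\dist$, which is NP-hard to do exactly. I would instead use the Alon--Naor SDP rounding, which finds in polynomial time $S,T$ with value at least $c\,\|D\|_\square$ for an absolute constant $c>0$ (Grothendieck constant). I would run it for each of the $m$ actions. If no action yields a rectangle with value $\ge c\epsilon$, then every test is at most $\epsilon$ and we stop. Otherwise we refine with increment at least $c^2\epsilon^2$, so the iteration bound only changes by a constant factor. Each step costs $\mathrm{poly}(n,m)$. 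Computing the conditional expectations is also polynomial, because the number of nonempty atoms is at most $n$ per side, even though there are potentially $2^{|\mathcal P|}$ patterns. The total runtime is therefore $\mathrm{poly}(n,m,1/\epsilon)$.

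The step I would watch most carefully is the normalization giving $s=O(1)$ independent of $m$. The paper's informal statement asserts it for communication games, and I would adopt its definition of weighted mass. If it fails, the fallback is to reduce the reward matrices to an $O(1)$-mass family, for instance by centering and using the $[-1,1]$ bound with $\dist$-weights. The alternative is to accept $s\le m$, which still gives a polynomial-time algorithm but with set sizes depending on $m$.
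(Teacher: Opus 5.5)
Your proposal is correct and takes essentially the same route as the paper, which proves \Cref{lem:grid-FK} by this simultaneous energy-increment argument: the potential $\sum_a \E_\dist[(A^a_{\mathcal P,\mathcal Q})^2]\le s$, the martingale-orthogonality (Pythagorean) identity, an increment of $(c\epsilon)^2$ for each Alon--Naor-certified violating rectangle, and one set added per side per round; it then applies this to the action matrices $U^a$. The normalization you were unsure about is the paper's standing assumption $\sum_{a}\E_\dist[r(a;\omega_A,\omega_B)^2]\le 1$ from the preliminaries, which gives $s\le 1$ and hence $O(\epsilon^{-2})$ sets, so your centering fallback is unnecessary --- and it would not in general bring the mass down to $O(1)$ anyway.
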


The technical engine behind this statement is a stronger variant of the Frieze--Kannan weak regularity lemma we introduce in this paper. Unlike the usual weak regularity decomposition, the approximation here must be produced by actual row and column coarsenings, because we later need the compressed object to be a genuine communication game.

\begin{theorem}[Informal version of our Generalized Frieze-Kannan Regularity Lemma (\Cref{lem:grid-FK})]
Let $A^1,\dots,A^k\in[-1,1]^{n_1\times n_2}$ be matrices, let $\dist$ be any distribution over their entries, and let $s$ denote their total $\dist$-weighted squared mass.
There are collections of row sets $\mathcal P$ and column sets $\mathcal Q$, each of size $\bO{s/\epsilon^2}$, such that every residual $A^i-A^i_{\mathcal{P},\mathcal{Q}}$ has weighted cut norm at most $\epsilon$. The collections can be found efficiently.
\end{theorem}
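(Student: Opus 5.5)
We run an energy-increment (potential) argument in the weighted space $L^2(\dist)$, adapting the Frieze--Kannan iteration so that every approximant is a genuine conditional expectation. Write $\inner{f,g}_\dist=\E_{(x,y)\sim\dist}[f(x,y)g(x,y)]$. For collections $\mathcal P,\mathcal Q$, let $\mathcal F_{\mathcal P,\mathcal Q}$ be the $\sigma$-algebra on $\Omega_A\times\Omega_B$ generated by the rectangles $P\times\Omega_B$ and $\Omega_A\times Q$, for $P\in\mathcal P$ and $Q\in\mathcal Q$. Then $A^i_{\mathcal P,\mathcal Q}=\E_\dist[A^i\mid \mathcal F_{\mathcal P,\mathcal Q}]$ is the orthogonal projection of $A^i$ onto the $\mathcal F_{\mathcal P,\mathcal Q}$-measurable functions. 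The potential is
\[
\Phi(\mathcal P,\mathcal Q)=\sum_{i=1}^k \norm{A^i_{\mathcal P,\mathcal Q}}_{L^2(\dist)}^2 .
\]
We start from $\mathcal P=\mathcal Q=\emptyset$. By Pythagoras and the contraction property of projections, $0\le \Phi\le \sum_i\norm{A^i}^2_{L^2(\dist)}=s$ at every stage.

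Each step checks the residuals. Suppose some $i$ and some rectangle $R\times C$ satisfy $\abs{\inner{A^i-A^i_{\mathcal P,\mathcal Q},\mathds 1_{R\times C}}_\dist}>\epsilon$. We then set $\mathcal P\leftarrow\mathcal P\cup\{R\}$ and $\mathcal Q\leftarrow\mathcal Q\cup\{C\}$. The refined $\sigma$-algebra $\mathcal F'$ contains $\mathcal F_{\mathcal P,\mathcal Q}$, and $\mathds 1_{R\times C}$ is $\mathcal F'$-measurable. Orthogonality of nested projections gives
\[
\norm{A^i_{\mathcal F'}}^2-\norm{A^i_{\mathcal F}}^2=\norm{A^i_{\mathcal F'}-A^i_{\mathcal F}}^2 .
\]
The function $A^i_{\mathcal F'}-A^i_{\mathcal F}$ has inner product with $\mathds 1_{R\times C}$ equal to $\inner{A^i-A^i_{\mathcal F},\mathds 1_{R\times C}}$, since $\mathds 1_{R\times C}$ is $\mathcal F'$-measurable. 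By Cauchy--Schwarz, using $\norm{\mathds 1_{R\times C}}\le1$, the $i$-th energy therefore grows by more than $\epsilon^2$. The energies of the other matrices do not decrease, so $\Phi$ grows by more than $\epsilon^2$. Hence the loop stops after at most $s/\epsilon^2$ iterations. At that point $\abs{\mathcal P},\abs{\mathcal Q}\le s/\epsilon^2$, and every residual has weighted cut norm at most $\epsilon$. The key point is that adding $R$ and $C$ as \emph{separate} row and column sets already makes $R\times C$ measurable. This is why the count is $\bO{s/\epsilon^2}$ sets per side, even though the number of induced cells can be exponential.

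The main obstacle is efficiency: finding a rectangle that nearly maximizes the weighted cut norm of a residual is NP-hard exactly. We would instead use the Alon--Naor SDP/Grothendieck rounding, applied to the matrix $M^i(x,y)=\dist(x,y)\,(A^i-A^i_{\mathcal P,\mathcal Q})(x,y)$, whose ordinary cut norm equals the weighted cut norm. This gives, in polynomial time, a rectangle achieving at least a constant fraction $c$ of $\norm{M^i}_\square$. We then run the loop with threshold $c\epsilon$ on the rounded value. If no violation is found, every residual has cut norm at most $\epsilon$. Otherwise the energy increment is at least $c^2\epsilon^2$, so the iteration count becomes $\bO{s/\epsilon^2}$. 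Each iteration scans the $k$ matrices, which is $\mathrm{poly}(n_1,n_2,k)$.

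One more computation needs care: we must evaluate $A^i_{\mathcal P,\mathcal Q}$ without enumerating the possibly $2^{\abs{\mathcal P}+\abs{\mathcal Q}}$ atoms. To do this, we group observations by their membership pattern and keep only the nonempty atoms with positive $\dist$-mass. There are at most $n_1n_2$ of these, so each conditional expectation is computed in polynomial time. Atoms of $\dist$-measure zero do not affect any test, so their values can be set arbitrarily.
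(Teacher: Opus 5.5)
Your proposal is correct and follows essentially the same route as the paper. It runs an energy-increment argument on $\Phi=\sum_i \E_\dist[(A^i_{\mathcal P,\mathcal Q})^2]\le s$ and adds the Alon--Naor rectangle's row set and column set separately, so that the rectangle becomes measurable. This yields an increment of at least $(c\epsilon)^2$ per step and hence $\bO{s/\epsilon^2}$ sets per side. The only cosmetic difference is that you obtain the increment from self-adjointness of the projection plus Cauchy--Schwarz, whereas the paper conditions on $P_t\times Q_t$ and applies Jensen, which amounts to the same inequality.
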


The important point is that coarsening gives two kinds of control at once.

\begin{enumerate}
    \item For any rectangle $S\times T$ and action $a$, the expected reward of taking action $a$ on $S\times T$ differs by at most $\epsilon$ between $G$ and $G_{\mathcal{P},\mathcal{Q}}$.
    \item For any rectangle that is measurable with respect to the coarsening, the equality is exact: the average reward of every action on that rectangle is the same in $G$ and in $G_{\mathcal{P},\mathcal{Q}}$.
\end{enumerate}

The first property lets us compare the utility of rectangle-action communication between the original and coarsened games. The second property is what breaks the vicious cycle: once Alice and Bob communicate the coarsened observations, the protocol is measurable with respect to the coarsening, so its utility transfers back to the original game with no additional approximation loss.

This already gives a clean protocol for competing with the best one-bit communication scheme. Compute the $\epsilon$-indistinguishable coarsening $G_{\mathcal{P},\mathcal{Q}}$ with
$|\mathcal{P}|,|\mathcal{Q}| \in \bO{\epsilon^{-2}}$. Let Alice and Bob send the membership patterns of their observations in $\mathcal{P}$ and $\mathcal{Q}$, and let Charlie play the action with highest expected utility conditioned on those patterns. If $\pi^*$ is the best one-bit protocol in $G$, indistinguishability implies that $\pi^*$ has nearly the same value in $G_{\mathcal{P},\mathcal{Q}}$. Optimal play $\hat \pi$ in the coarsened game is at least as good as $\pi^*$ in the coarsened game. Using the fact that this is a coarsening, the value of $\hat \pi$ transfers back to $G$ with no additional loss. Thus coarsening gives an $\bO{\epsilon}$-additive approximation to the best one-bit protocol using only $\bO{1/\epsilon^2}$ bits and runtime $\mathrm{poly}(n, m, \epsilon^{-1})$.

\paragraph{From One-Bit Tests to General Short Protocols.}
The argument above focuses on competing with optimal one-bit protocols. To extend this to competing with protocols of any length we give a strong connection between indistinguishable communication games using the machinery of factorization norms and Grothendieck's inequality.

\begin{theorem}[Informal Combination of \Cref{lem:indist-weak} and \Cref{lem:indist-strong}]
\label{thm:informal:indist-strong-kbit}
    If $G$ and $\hat{G}$ are $\epsilon$-indistinguishable on the set of all test functions induced by one-bit protocols, then the value of running a $k$-bit protocol $\pi$ in $G$ and $\hat{G}$ cannot differ by more than $\epsilon\min\{2^k, 4K_{\mathrm{Gr}} \sqrt{m 2^k}\}$, where $m$ is the number of actions and $K_{\mathrm{Gr}}$ is the Grothendieck constant.
\end{theorem}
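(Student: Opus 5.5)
We must prove two bounds: the gap is at most $\epsilon 2^k$, and it is at most $4K_{\mathrm{Gr}}\epsilon\sqrt{m2^k}$. Both bounds start from the same decomposition. A $k$-bit deterministic protocol $\pi$ partitions $\Omega_A\times\Omega_B$ into at most $2^k$ leaf rectangles $R_\ell\times C_\ell$, and Charlie plays some action $a_\ell$ at leaf $\ell$. Its value in a game with reward $r$ is therefore
\[
V_G(\pi)=\sum_{\ell\le 2^k} f_{a_\ell,R_\ell\times C_\ell}(G).
\]
If Charlie best-responds in each game separately, we compare optimal values per leaf. Since $|\max_a x_a-\max_a y_a|\le\max_a|x_a-y_a|$, a bound for fixed actions transfers to the best-response values.

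For the first bound (\Cref{lem:indist-weak}), each leaf term differs by at most $\epsilon$ between $G$ and $\hat G$, because each leaf is a rectangle test. The one-bit-protocol tests contain the rectangle tests, or are equivalent to them up to constants. Summing over at most $2^k$ leaves gives $\epsilon 2^k$.

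For the second bound (\Cref{lem:indist-strong}), I would regroup the sum by action instead of by leaf. Let $D_a=r(a;\cdot)-\hat r(a;\cdot)$, viewed as a $\mu$-weighted matrix. Then
\[
V_G(\pi)-V_{\hat G}(\pi)=\sum_a \E_\mu\!\left[D_a(x,y)\sum_{\ell:a_\ell=a}\mathds 1_{R_\ell}(x)\mathds 1_{C_\ell}(y)\right].
\]
For each action $a$, the matrix $M_a=\sum_{\ell:a_\ell=a}\mathds 1_{R_\ell}\mathds 1_{C_\ell}^\top$ is a 0/1 matrix built from $k_a$ leaves, with $\sum_a k_a\le 2^k$. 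The hypothesis says that $D_a$ has $\mu$-weighted cut norm at most $\epsilon$. By Grothendieck's inequality this gives
\[
\Big|\E_\mu[D_a\, u_i v_j]\Big|\le 4K_{\mathrm{Gr}}\,\epsilon\cdot\gamma_2(M_a)
\]
for $\langle D_a,M_a\rangle$, where $\gamma_2$ is the factorization norm with unit-vector rows. The factor $4$ comes from passing between the $\{0,1\}$ cut norm and the $\pm1$ norm. The key claim is $\gamma_2(M_a)\le\sqrt{k_a}$, which holds because the leaves are disjoint. Take $u_x=e_{\ell}$-indicator vectors: set $u_x=(\mathds 1_{R_\ell}(x))_{\ell}$ and $v_y=(\mathds 1_{C_\ell}(y))_\ell$. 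Then $\langle u_x,v_y\rangle=M_a(x,y)$. But $u_x$ can have norm up to $\sqrt{k_a}$, since the rows $R_\ell$ of different leaves may overlap. So $\|u_x\|\|v_y\|\le k_a$, which is too weak. This is the step I expect to be the main obstacle.

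To fix it I would try to exploit disjointness of the leaves. For a given $x$, the sets $C_\ell$ with $x\in R_\ell$ are disjoint, so for fixed $(x,y)$ at most one $\ell$ contributes. The fix is to bound $\gamma_2(M_a)\le\sqrt{\text{rank}}$ via the known bound $\gamma_2\le\sqrt{\rank}$ for Boolean-like matrices, or by a Cauchy–Schwarz argument on the protocol tree. At each internal node, a message splits the rectangle, and we balance the norms of $u$ and $v$ across the tree, which gives $\|u_x\|^2\|v_y\|^2\le k_a$. Failing that, I would bound the total $\sum_\ell|\langle D_a,\mathds 1_{R_\ell}\mathds 1_{C_\ell}^\top\rangle|$ via Cauchy–Schwarz against sign choices, which yields $\sqrt{k_a}$ times a $\gamma_2$-bounded signed sum.

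Once $\gamma_2(M_a)\le\sqrt{k_a}$ is in hand, summing over actions gives $4K_{\mathrm{Gr}}\epsilon\sum_a\sqrt{k_a}$. Cauchy–Schwarz over the $m$ actions then gives $\sum_a\sqrt{k_a}\le\sqrt{m\sum_a k_a}\le\sqrt{m2^k}$. Taking the minimum with the first bound proves the theorem.
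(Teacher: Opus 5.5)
Your proposal follows the paper's proof: the weak bound comes from summing the per-leaf rectangle discrepancies, and the strong bound from regrouping by action, using $|\langle U^a-\hat U^a, M^a\rangle_F|\le \gamma^*(U^a-\hat U^a)\,\gamma(M^a)\le 4K_{\mathrm{Gr}}\epsilon\,\gamma(M^a)$ via Grothendieck and the factor-$4$ cut-norm comparison, then Cauchy--Schwarz over actions. The step you flagged as the obstacle is closed by the first fix you named, exactly as in the paper: the Linial et al.\ bound gives $\gamma(M^a)\le \|M^a\|_{\ell_1\to\ell_\infty}\sqrt{\mathrm{rank}(M^a)}\le\sqrt{k_a}$, because $M^a$ is a $0/1$ matrix written as a sum of $k_a$ rank-one terms, so the tree-balancing and sign-choice alternatives are not needed.
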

The first term in this bound can be achieved by a rather naive argument establishing that when communication games $G$ and $\hat{G}$ are $\epsilon$-indistinguishable on the set of all test functions induced by one-bit protocols, then $G$ and $\hat{G}$ are $(\epsilon \cdot 2^k)$-indistinguishable on the set of all test functions induced by $k$-bit protocols (see \Cref{lem:indist-weak}). The second term involves a more detailed analysis using factorization norms and Grothendieck's inequality to sharpen this bound to have a $\sqrt{2^k}$ dependence instead (see \Cref{lem:indist-strong}). This sharpening allows our upper and lower bounds to match in their dependence on the communication complexity $CC(G)$.

\paragraph{The General Tractable Protocol.}
 We can now state the full protocol for competing with the optimal protocol of any length.
Given a communication game $G$, let $CC_\alpha(G)$ be the number of bits that \emph{any protocol} (even one computed intractably) must communicate to ensure utility $\alpha$ in $G$. We aim to guarantee $\alpha-\delta$ utility with a computationally efficient protocol.
We first compute the indistinguishable coarsening $G_{\mathcal{P}, \mathcal{Q}}$ with $\abs{\mathcal{P}}, \abs{\mathcal{Q}} \leq \bO{1/\epsilon^2}$ for $$\epsilon^{-1} = \delta^{-1}\cdot
\min\left(4K_{\mathrm{Gr}} \sqrt{2^{CC_\alpha(G)}m},~ 2^{CC_\alpha(G)}\right).$$
Then, when Alice receives the observation $\omega_A$, she communicates to Charlie the membership pattern of $\omega_A$ in $\mathcal{P}$. Similarly, Bob communicates the membership pattern of $\omega_B$ in $\mathcal{Q}$. Charlie then takes the action with highest expected utility conditioned on this information. Because $\mathcal{P}$ and $\mathcal{Q}$ have at most $\bO{1/\epsilon^2}$ sets, the communication procedure uses $\bO{1/\epsilon^2}$ bits and runs in time polynomial in $n$, $m$, and $1/\epsilon$.

The proof that this procedure has high utility proceeds in three parts:
\begin{enumerate}
    \item Let $\pi^*$ be an optimal (intractable) protocol sending $CC_\alpha(G)$ bits and achieving utility $\alpha$ in $G$. By \Cref{thm:informal:indist-strong-kbit},
    $\pi^*$ achieves utility at least $\alpha-\epsilon\cdot\min\{2^{CC_\alpha(G)},4K_{\mathrm{Gr}}\sqrt{m 2^{CC_\alpha(G)}}\}\geq \alpha - \delta$ in $G_{\mathcal{P}, \mathcal{Q}}$.
    \item The protocol $\hat{\pi}^*$ that communicates the membership patterns in $\mathcal{P}$ and $\mathcal{Q}$ and then plays the best action on the resulting atom is optimal in $G_{\mathcal{P}, \mathcal{Q}}$, so it performs at least as well as $\pi^*$ there. Thus it achieves utility at least $\alpha-\delta$.
    \item By construction, the expected utility of $\hat{\pi}^*$ in $G$ is exactly the same as its utility in $G_{\mathcal{P},\mathcal{Q}}$, which is $\geq \alpha-\delta$.
\end{enumerate}

\begin{figure}
    \centering
    \makebox[\textwidth][c]{
        \input{figures/Frieze-Kannan}
    }
    \caption{A visual representation of our communication protocol. The arrows denote the observations Alice and Bob receive. In (a), \Cref{thm:small-coarsening} is used to compute a coarsening of the communication game that is $\epsilon$-indistinguishable. In (b), Alice and Bob then send their membership patterns in the coarsening to Charlie. The rightmost matrix represents the coarsened game after communication. Finally, Charlie takes the best action conditioned on this information.}
    \label{fig:FK-visualization}
\end{figure}

%% file: figures/Frieze-Kannan.tex
\begin{tikzpicture}[
    scale=0.8,
    every node/.style={font=\small},
    matrix_style/.style={draw, thick, minimum size=4cm},
    highlight/.style={fill=blue!20},
    intersection/.style={fill=blue!40, draw=blue!80!black, thick},
    coarse_highlight/.style={fill=red!20},
    coarse_intersection/.style={fill=red!40, draw=red!80!black, thick},
    grid_line/.style={draw=gray!30, very thin},
    partition_line/.style={draw=black, thick},
    arrow_style/.style={-Latex, thick},
    part_label/.style={font=\scriptsize, fill=white, inner sep=1.2pt}
]

    \def\matSize{4}
    \def\cellSize{0.2}

    \def\selRowY{1.4}
    \def\selColX{2.6}

    \def\coarseRowY{1}
    \def\coarseColX{2}
    \def\coarseCell{1}

    \begin{scope}[local bounding box=mat1]
        \node[above, align=center, font=\bfseries] at (\matSize/2, \matSize+0.2) { Given the initial \\ observation matrix};

        \fill[highlight] (0, \selRowY) rectangle (\matSize, \selRowY+\cellSize);
        \fill[highlight] (\selColX, 0) rectangle (\selColX+\cellSize, \matSize);
        \fill[intersection] (\selColX, \selRowY) rectangle (\selColX+\cellSize, \selRowY+\cellSize);

        \draw[grid_line, step=\cellSize] (0,0) grid (\matSize,\matSize);
        \draw[thick] (0,0) rectangle (\matSize,\matSize);

        \node[] at (-1.2, \selRowY +  \cellSize/2-0.05) {$\omega_A$};
        \node[] at (\selColX + \cellSize/2 + 0.05, -1) {$\omega_B$};
        \draw[arrow_style] (-0.8, \selRowY + \cellSize/2) -- (0, \selRowY + \cellSize/2);
        \draw[arrow_style] (\selColX + \cellSize/2, -0.8) -- (\selColX + \cellSize/2, 0);
    \end{scope}

    \begin{scope}[shift={(7,0)}, local bounding box=mat2]
        \node[above, align=center, font=\bfseries] at (\matSize/2, \matSize+0.55) {(a) Compute the \\indistinguishable coarsening};

        \fill[highlight] (0, \selRowY) rectangle (\matSize, \selRowY+\cellSize);
        \fill[highlight] (\selColX, 0) rectangle (\selColX+\cellSize, \matSize);
        \fill[intersection] (\selColX, \selRowY) rectangle (\selColX+\cellSize, \selRowY+\cellSize);

        \draw[grid_line, step=\cellSize] (0,0) grid (\matSize,\matSize);
        \draw[partition_line, step=\coarseCell] (0,0) grid (\matSize,\matSize);
        \draw[thick] (0,0) rectangle (\matSize,\matSize);

        \foreach \i/\lab in {3/P_1,2/P_2,0/P_\ell} {
            \node[part_label, anchor=east] at (-0.15, \i+0.5) {$\lab$};
        }
        \node[part_label] at (-0.38, 1.5) {$\vdots$};
        \foreach \i/\lab in {0/Q_1,1/Q_2,2/\cdots,3/Q_\ell} {
            \node[part_label, anchor=south] at (\i+0.5, 4.08) {$\lab$};
        }
    \end{scope}

    \begin{scope}[shift={(14,0)}, local bounding box=mat3]
        \node[above, align=center, font=\bfseries] at (\matSize/2, \matSize+0.55) {(b) Communicate \\the coarsened \\ observation};

        \fill[coarse_highlight] (0, \coarseRowY) rectangle (\matSize, \coarseRowY+\coarseCell);
        \fill[coarse_highlight] (\coarseColX, 0) rectangle (\coarseColX+\coarseCell, \matSize);
        \fill[coarse_intersection] (\coarseColX, \coarseRowY) rectangle (\coarseColX+\coarseCell, \coarseRowY+\coarseCell);

        \draw[partition_line, step=\coarseCell] (0,0) grid (\matSize,\matSize);
        \draw[thick] (0,0) rectangle (\matSize,\matSize);

        \foreach \i/\lab in {3/P_1,2/P_2,0/P_\ell} {
            \node[part_label, anchor=east] at (-0.15, \i+0.5) {$\lab$};
        }
        \node[part_label] at (-0.38, 1.5) {$\vdots$};
        \foreach \i/\lab in {0/Q_1,1/Q_2,2/\cdots,3/Q_\ell} {
            \node[part_label, anchor=south] at (\i+0.5, 4.08) {$\lab$};
        }
    \end{scope}

\end{tikzpicture}

%% file: sections/03-Preliminaries.tex
\section{Preliminaries}

\subsection{Notation}

For a set $X$, let $2^X$ denote the set of all subsets, and let $\Delta(X)$ denote the set of all possible probability distributions over $X$. For any natural number $\ell \in \mathbb{N}$, let $[\ell] = \{1, \dots, \ell\}$. Given a matrix $A \in \mathbb{R}^{n \times m}$, we denote its $(i, j)$ entry by $A(i, j)$.

Given two matrices, $A, B \in \mathbb{R}^{n_1 \times n_2}$, the Hadamard product $A \circ B$ is defined entrywise by $(A \circ B)(i, j) = A(i, j) \cdot B(i, j)$. Given a set $S$ in a universe $\mathcal{U}$, the complement $\mathcal{U} \setminus S$ is written $S^C$.

\subsection{The Communication Game}

The central object of study in this paper is the communication game. Three agents --- Alice, Bob, and Charlie --- are playing a game of common-interest together. Alice and Bob each receive some hidden observation, Alice observing $\omega_A \in \Omega_A$ and Bob observing $\omega_B \in \Omega_B$. Charlie is a bystander who receives no private information but must take an action from $\mathcal{A}$. A communication game starts with Alice and Bob receiving a draw from a distribution over observations $\dist \in \Delta(\Omega_A \times \Omega_B)$. Alice and Bob then exchange bits back and forth. Finally, Charlie takes an action $a \in \mathcal{A}$, and all the agents receive the common reward $r(a; \omega_A, \omega_B)$.\footnote{Modeling actions as being taken by an observer who only sees the transcript is akin to the open communication model of \cite{fontes2023communication}. Communication protocols derived for this model can be used in the setting where Charlie's identity is the same as either of Alice or Bob, i.e., the decision is taken by one of the agents with partial information.}

\begin{definition}[Communication Game]
    A communication game is a tuple $G = \langle \acts, \Omega_A, \Omega_B, \dist, r\rangle$, where
    \begin{itemize}
        \item $\acts$ is the set of actions,
        \item $\Omega_A$ is the set of observations for Alice and $\Omega_B$ is the set of observations for Bob,
        \item $\dist$ is a joint distribution over $\Omega_A \times \Omega_B$,
        \item $r: \acts \times \Omega_A \times \Omega_B \to [-1, 1]$ is the common objective the agents are attempting to maximize.
    \end{itemize}
\end{definition}

Throughout this paper, we refer to $\abs{\Omega_A}$ as $n_A$, to $\abs{\Omega_B}$ as $n_B$, to $\abs{\acts}$ as $m$, and we write $n = \max(n_A, n_B)$.
For convenience in the analysis below, we will assume rewards are $\ell_2$-normalized,\footnote{This assumption that the rewards are $\ell_2$-normalized holds, for example, in the standard formulation of communication complexity in which Alice and Bob jointly compute a function $f(\omega_A, \omega_B)$, where $r(f(\omega_A, \omega_B); \omega_A, \omega_B) = 1$ and the reward of any other action is 0.} so that
\begin{equation*}
    \sum_{a \in \mathcal{A}} \E_{\omega_A, \omega_B \sim \dist}\left[r(a; \omega_A, \omega_B)^2\right] \leq 1.
\end{equation*}
This assumption is not restrictive; our results can apply to arbitrary rewards after rescaling them, though the bounds then acquire an appropriate dependence.

For Charlie to take useful actions, Alice and Bob need to communicate their hidden information. They do this by running a communication policy.

\begin{definition}[Communication Policy]
    \label{def:communication-policy}
    A communication policy $\pi$ for a communication game $G$ is a finite rooted binary tree in which each internal node is labeled by
    \begin{enumerate}
        \item an agent $i \in \{A, B\}$, and
        \item a deterministic message function $\sigma: \Omega_i \to \{0, 1\}$,
    \end{enumerate}
    and each leaf is labeled by an action $a \in \mathcal{A}$.

    On an observation pair $(\omega_A, \omega_B)$ in a communication game $G$, the policy is run as follows. Alice and Bob start at the root of the binary tree. On a node labeled with agent $i$:
    \begin{enumerate}
        \item If the current node is internal, agent $i$ sends the message $m_i = \sigma(\omega_i)$. If $m_i$ is 0, the agents descend to the left subtree. If $m_i$ is 1, the agents descend to the right subtree.
        \item Otherwise, when the agents are on a leaf node instead, Charlie plays an action $a$ on the leaf node.
    \end{enumerate}

    The action played by $\pi$ on observation pair $(\omega_A, \omega_B)$ is written $\pi(\omega_A, \omega_B)$.
    The value of a communication policy $\pi$ on a game $G$ is  $$\pi(G) = \E_{\omega_A, \omega_B \sim \dist}[r(\pi(\omega_A, \omega_B); \omega_A, \omega_B)].$$ 
    The length of a communication policy $\pi$, written $CC(\pi)$, is the height of the associated binary tree. The number of leaves in the tree is written $N(\pi)$. In particular $$N(\pi) \leq 2^{CC(\pi)}.$$
\end{definition}

\begin{figure}
    \centering
    \input{figures/communication_tree}
    \caption{An illustration of a communication policy in binary tree form.}
    \label{fig:communication_tree}
\end{figure}

Without loss of generality, we do not allow randomness in Charlie's actions or in the bits Alice or Bob send. Indeed, any randomized policy can be written as a distribution over deterministic policies. So if there exists a randomized communication policy achieving utility $\alpha$, there must exist some deterministic policy achieving utility at least $\alpha$.

\begin{remark}[The Action Taker is Irrelevant]
    One could have defined a communication game in which either Alice or Bob takes the action instead. Because Charlie has no private information and only observes the public transcript, both Alice and Bob can always simulate the action Charlie would take. This means that any short policy in this model automatically translates to a short policy when either agent is an action taker. 

    Conversely, the shortest communication policy when Charlie is the action taker is never more than $\ceil{\log(m)}$ bits longer than the shortest policy when Alice or Bob is the action taker. Indeed, when the action to be taken is known, Alice or Bob can simply communicate that action to Charlie using $\ceil{\log(m)}$ bits.
\end{remark}

It will be useful to have a notion of a subgame of a communication game.
\begin{definition}[Subgames]
    Given a communication game $G = \langle \mathcal{A}, \Omega_A, \Omega_B, \dist, r\rangle$ and subsets of observations $\Omega_A' \subseteq \Omega_A$ and $\Omega_B' \subseteq \Omega_B$ where $\Pr_\dist(\Omega_A', \Omega_B')$ is non-zero, the subgame $G'$ of $G$ induced by $\Omega_A'$ and $\Omega_B'$ is the communication game $G' = \langle \mathcal{A}, \Omega_A', \Omega_B', \dist', r\rangle$, where $\dist'$ is the distribution $\dist$ conditioned on the event $\omega_A \in \Omega_A'$ and $\omega_B \in \Omega_B'$:
    \begin{equation*}
        \dist'(\omega_A, \omega_B) = \Pr_\dist(\omega_A, \omega_B \mid \Omega_A', \Omega_B').
    \end{equation*}
\end{definition}

In summary, we can describe the process of playing a communication game as in the box below.

\begin{tcolorbox}[colback=gray!5!white, colframe=gray!75!black, title=Running policy $\pi$ on game $G$]
A communication policy $\pi$ is run on a communication game $G$ as follows:
\begin{enumerate}
    \item \textbf{Observation phase:} The environment samples $(\omega_A, \omega_B) \sim \dist$, shows Alice $\omega_A$ and shows Bob $\omega_B$.
    \item \textbf{Cheap-talk phase:} The agents run the communication policy, taking turns relaying messages to each other.
    \item \textbf{Action phase:} Charlie observes the transcript, plays an action $a \in \acts$, and the agents receive the reward $r(a; \omega_A, \omega_B)$.
\end{enumerate}
\end{tcolorbox}

Finally, we define a \textit{communication protocol} to be a mapping from communication games to communication policies. The goal of this work is to describe communication protocols that can be run in polynomial time, that when given a communication game as input, output a short length communication policy of high value.

\subsection{Communication Complexity}

We now define the notion of the communication complexity of a communication game. This is a measure of how distributed the information required to take good decisions is throughout the system.

\begin{definition}[Communication Complexity]
    For any $\alpha \in [-1, 1]$ and communication game $G$, let $\mathcal{C}_\alpha(G)$ be the set of communication policies that achieve a reward of at least $\alpha$. The communication complexity of $G$ is the minimum-length communication policy that achieves an expected reward of at least $\alpha$,
    \begin{equation*}
        CC_\alpha(G) = \min_{\pi \in \mathcal{C}_\alpha(G)} CC(\pi).
    \end{equation*}
    If $\mathcal{C}_\alpha(G)$ is empty, then we set $CC_\alpha(G) = \infty$.
    Let $$\alpha^*(G) = \E_{\omega_A, \omega_B \sim \dist}\left[\max_{a \in \acts} r(a; \omega_A, \omega_B)\right]$$ be the utility achieved by playing perfectly in communication game $G$. We will write $CC(G)$ to mean $CC_{\alpha^*(G)}(G)$, the number of bits required to achieve maximal utility. 
\end{definition}

A particularly useful measure of the complexity of a communication game $G$ is the partition number. Imagine partitioning the observation space $\Omega_A \times \Omega_B$ with rectangles (axis-aligned subsets of the observation space).

The value of a rectangular partitioning is the best utility in expectation that Charlie can achieve when only being told the rectangle that the current observation pair lies in.

\begin{definition}[Rectangular Partition]
    \label{def:rectangular-partition}
    A rectangular partition of the observation space $\Omega_A \times \Omega_B$ is a set of products $R_1 \times C_1, \dots, R_\ell \times C_\ell$ where $R_i \subseteq \Omega_A$ and $C_i \subseteq \Omega_B$ for all $i \in [\ell]$, the rectangles are disjoint, so that for all $i \neq j$, $(R_i \times C_i) \cap (R_j \times C_j) = \emptyset$, and the rectangles cover the space,
    \begin{equation*}
        \bigcup_{i = 1}^\ell R_i \times C_i = \Omega_A \times \Omega_B.
    \end{equation*}
    Let $\mathcal{R}_\ell$ be the set of all possible rectangular partitions with $\ell$ rectangles.
\end{definition}

We will define the utility of a rectangular partition to be how actionable the partition is.

\begin{definition}[Partition Number]
    \label{def:covering-number}
    For a rectangular partition $D = \{R_1 \times C_1, \dots, R_\ell \times C_\ell\}$, let $\Pi_D$ be the set of policies $\pi:\Omega_A\times\Omega_B\to\acts$ that are constant on every rectangle of $D$. That is, for every $R_i\times C_i\in D$, there exists an action $a_i\in\acts$ such that $\pi(\omega_A,\omega_B)=a_i$ for all $(\omega_A,\omega_B)\in R_i\times C_i$. We call these the $D$-measurable policies.

    Given a policy $\pi \in \Pi_D$, define $\pi(G) = \E_{\omega_A, \omega_B \sim \dist}\left[r(\pi(\omega_A, \omega_B); \omega_A, \omega_B)\right]$. The value of the rectangle partition $D$ is defined to be
    \begin{equation*}
        U(D) = \max_{\pi \in \Pi_D} \pi(G).
    \end{equation*}

    The $\alpha$-partition number of a communication game $G$, written $N_\alpha(G)$, is the minimum $\ell$ such that a rectangular partition with $\ell$ parts has value at least $\alpha$. We will write $N(G)$ to mean $N_{\alpha^*(G)}(G)$, the size of the smallest rectangular partition that achieves maximal utility. If no policy achieves value $\alpha$, we write $N_\alpha(G) = \infty$.
\end{definition}

Every communication policy that achieves a utility of $\alpha$ on average induces an $\alpha$-partitioning of the observation space. The intuition for this, which we will make formal below, is that the subset of observation pairs that arrive at a leaf node of the binary tree corresponding to a policy $\pi$ is always a rectangle. This allows us to relate the following two quantities.

\begin{proposition}[Rectangle Decomposition of Communication Policies]
    \label{prop:communication-tiles}
    For any communication game $G$ and any communication policy $\pi$, there exists a rectangular partition $D = \{R_1 \times C_1, \dots, R_\ell \times C_\ell\}$ of $\Omega_A \times \Omega_B$ with $\ell = N(\pi)$ such that the action taken by $\pi$ is constant on each rectangle $R_i \times C_i$.
    
    It follows that for any communication game $G$ and any $\alpha \in [-1, 1]$,
    $$N_\alpha(G) \leq 2^{CC_\alpha(G)}.$$
\end{proposition}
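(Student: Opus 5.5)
The plan is to argue by structural induction on the binary tree of $\pi$, with the induction hypothesis that the set of observation pairs reaching any node $v$ is a rectangle. For a node $v$, let $S_v \subseteq \Omega_A \times \Omega_B$ denote the set of pairs $(\omega_A,\omega_B)$ on which the execution of $\pi$ passes through $v$. The claim to prove is that $S_v = R_v \times C_v$ for some $R_v \subseteq \Omega_A$ and $C_v \subseteq \Omega_B$.

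At the root, $S_{\mathrm{root}} = \Omega_A \times \Omega_B$, which is a rectangle. For the inductive step, suppose $v$ is internal, labeled by Alice with message function $\sigma$, and $S_v = R_v \times C_v$. The left child then has
\[
S_{v_0} = \{\omega_A \in R_v : \sigma(\omega_A)=0\} \times C_v,
\]
and the right child uses $\sigma^{-1}(1)\cap R_v$ in place of the first factor. Both are rectangles, since membership depends only on Alice's coordinate. The case of a Bob-labeled node is symmetric, splitting $C_v$ instead.

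Next I show that the leaf sets form a partition. The two children of each internal node split $S_v$ disjointly and exhaustively, because $\sigma$ takes exactly one value. By induction on depth, the leaf sets are therefore pairwise disjoint and their union is $\Omega_A \times \Omega_B$. Equivalently, every pair follows a unique root-to-leaf path, so it lands in exactly one leaf.

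This gives exactly $N(\pi)$ rectangles, one per leaf, and $\pi$ plays the leaf's label on each, so $\pi$ is constant on each rectangle. Some rectangles may be empty; the definition of a rectangular partition still admits them, since empty products keep the sets disjoint and covering. If one prefers nonempty parts, dropping the empty ones only decreases $\ell$, which is harmless for the inequality below.

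For the corollary, take a policy $\pi$ of length $CC_\alpha(G)$ with $\pi(G)\ge\alpha$; if $CC_\alpha(G)=\infty$ the claim is trivial. The induced partition $D$ contains $\pi$ among its $D$-measurable policies, so $U(D)\ge \pi(G)\ge\alpha$. Hence $N_\alpha(G)\le N(\pi)\le 2^{CC(\pi)}=2^{CC_\alpha(G)}$, where $N(\pi)\le 2^{CC(\pi)}$ holds because a binary tree of height $h$ has at most $2^h$ leaves.

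If a partition with fewer parts is produced after discarding empty rectangles, I would pad it back up or note that the minimum in $N_\alpha$ is monotone in this sense. No step here is a real obstacle; the only care needed is this bookkeeping of empty rectangles and the exact count $\ell=N(\pi)$.
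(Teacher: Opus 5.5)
Your proposal is correct and follows essentially the same route as the paper: induction from the root showing that each node's reaching set is a rectangle, then observing that the leaf sets partition $\Omega_A \times \Omega_B$ with a constant action on each, and finally using $N(\pi) \le 2^{CC(\pi)}$ for the corollary. Your extra bookkeeping for empty rectangles and for the case $CC_\alpha(G)=\infty$ is harmless; the paper simply leaves both points implicit.
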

\begin{proof}
    We first show that for every node $v$ of the policy tree, the set of observations passing through $v$ is a rectangle.

    Let $B = (V, E)$ be the binary tree corresponding to $\pi$, and define $f: V \to 2^{\Omega_A \times \Omega_B}$ to be the function mapping $v$ to the set of observation pairs passing through $v$ when the protocol is run on them.

    Let $v_{\mathrm{root}}$ be the root node of $B$. Because every observation pair in $\Omega_A \times \Omega_B$ starts here, $f(v_{\mathrm{root}}) = \Omega_A \times \Omega_B$.
    
    We continue by induction on the tree. Suppose that for a node $v$, we can write $f(v) = S \times T$ for $S \subseteq \Omega_A$ and $T \subseteq \Omega_B$. Let $v_0$ be the left child of $v$ and $v_1$ be the right child. Suppose, without loss of generality, that it is Alice's turn on node $v$ and let $\sigma_v: \Omega_A \to \{0, 1\}$ be its messaging function. Let $S_0 = \{\omega_A \in S \mid \sigma_v(\omega_A) = 0\}$ and $S_1 = \{\omega_A \in S \mid \sigma_v(\omega_A) = 1 \}$. By definition, because the agents descend to the left when $\omega_A \in S_0$ and to the right when $\omega_A \in S_1$, it must hold that $f(v_0) = S_0 \times T$ and $f(v_1) = S_1 \times T$. If it is Bob's turn, the same argument applies by splitting $T$ instead.
    
    Let $L = \{v_1, \dots, v_{N(\pi)}\}\subseteq V$ be the set of leaf nodes in $B$. This implies that $f(v_i) = R_i \times C_i$ for some $R_i \subseteq \Omega_A$ and $C_i \subseteq \Omega_B$. Because communication is deterministic, every observation pair must end at exactly one leaf node. Therefore, all the $R_i \times C_i$ are pairwise disjoint and cover all of $\Omega_A \times \Omega_B$.

    Because Charlie takes a deterministic action at every leaf node, the policy $\pi$ must play a fixed action at every set $R_i \times C_i$, and there are $N(\pi)$ such rectangles, as claimed.

    For the second claim, let $\pi$ be the $CC_\alpha(G)$-bit protocol achieving utility $\alpha$. By our argument above, there exists an $N(\pi) \leq 2^{CC_\alpha(G)}$ sized rectangular partition $D$ on which $\pi$ takes constant actions. Therefore there exists a $2^{CC_\alpha(G)}$ sized rectangular partition on which playing some constant action in each part results in average utility at least $\alpha$, i.e., $N_\alpha(G) \leq 2^{CC_\alpha(G)}$ as claimed.
\end{proof}

\begin{figure}
  \centering
  \input{figures/rectangular_tiling}
  \caption{An illustration of \Cref{prop:communication-tiles}, the associated rectangular partition of \Cref{fig:communication_tree}.}
  \label{fig:tiling}
\end{figure}

\subsection{\texorpdfstring{$\sigma$-Algebras}{Sigma-Algebras}}

The arguments in this paper will rely on the language of $\sigma$-algebras.
A $\sigma$-algebra on a set $\Omega$ is a nonempty collection $\mathcal{P}$ of subsets of $\Omega$ that are closed under complement, unions, and intersections. A universe $\Omega$ together with a $\sigma$-algebra $\mathcal{P}$ defines a measurable space. 

For a $\sigma$-algebra $\mathcal{P}$, we say that a set $S \subseteq \Omega$ is $\mathcal{P}$-measurable when $S \in \mathcal{P}$. The \emph{atoms} of a $\sigma$-algebra are the nonempty sets in the $\sigma$-algebra that cannot be split further into smaller nonempty measurable sets.
 
Equivalently, a $\sigma$-algebra can be thought of as being generated by membership patterns in a collection of subsets of $\Omega$ as we describe below.

\begin{definition}[$\sigma$-algebra]
    \label{def:sigma-algebra}
    Consider a list of sets $X_1, \dots, X_k \subseteq \Omega$.
    For any $x \in \Omega$, the membership pattern of $x$ with respect to $X_1, \dots, X_k$ is the binary string $b \in \{0, 1\}^k$ whose $i$th bit $b_i$ is 1 if $x \in X_i$ and 0 otherwise.

    For any membership pattern $b \in \{0, 1\}^k$, the atom induced by $b$, written $P_b \subseteq \Omega$, is the set of all elements in $\Omega$ with the membership pattern $b$. The $\sigma$-algebra generated by $X_1, \dots, X_k$ is defined to be the set of possible unions of atoms,
    \begin{equation*}
        \sigma(X_1, \dots, X_k) = \left\{\bigcup_{b \in B} P_b \mid B \subseteq \{0, 1\}^k\right\}.
    \end{equation*}
\end{definition}

Given two $\sigma$-algebras $\mathcal{P}$ and $\mathcal{Q}$, we can define the product $\sigma$-algebra, denoted $\mathcal{P} \otimes \mathcal{Q}$, as being the $\sigma$-algebra generated by the sets $P \times Q$ for all $P \in \mathcal{P}, Q \in \mathcal{Q}$. Equivalently, if $\mathcal{P}$ has atoms $P_1,\dots,P_r$ and $\mathcal{Q}$ has atoms $Q_1,\dots,Q_s$, then $\mathcal{P}\otimes\mathcal{Q}$ consists of all unions of atom products $P_i\times Q_j$.

We can define the operation of adding information to a $\sigma$-algebra with the notion of a refinement.

\begin{definition}[Refinement]
    Given two $\sigma$-algebras $\mathcal{P}$ and $\mathcal{P}'$ over a set $\Omega$, $\mathcal{P}$ is said to refine $\mathcal{P}'$, written $\mathcal{P}' \subseteq \mathcal{P}$ when every $\mathcal{P}'$-measurable set is $\mathcal{P}$-measurable. A sequence of $\sigma$-algebras $\mathcal{P}_1, \dots, \mathcal{P}_k$ is said to be a filtration when each $\mathcal{P}_{i} \subseteq \mathcal{P}_{i + 1}$.
\end{definition}

One key operation we will perform with $\sigma$-algebras is to take expectations conditioned on them.
Throughout this subsection, expectations are taken with respect to an ambient distribution $\mu$ on $\Omega$.

\begin{definition}[Expectation Conditioned on a $\sigma$-Algebra]
    Given a $\sigma$-algebra $\mathcal{P}$ over $\Omega$ with atoms $P_1, \dots, P_k$ and a random variable $X: \Omega \rightarrow \mathbb{R}$, the expectation of $X$ conditioned on $\mathcal{P}$, written $\E[X \mid \mathcal{P}]$, is the random variable that is constant on each atom of $\mathcal{P}$ and equal to the average value of $X$ on that atom. That is, it is a function mapping any $x \in \Omega$ to
    \begin{equation*}
        \E[X \mid \mathcal{P}](x) = \sum_{i \in [k]: \Pr_\dist(P_i) > 0} \E[X \mid P_i] \cdot \mathds{1}(x \in P_i),
    \end{equation*}
    where $\E[X \mid P_i]$ is taken with respect to an ambient distribution $\mu$.
\end{definition}

A random variable $X$ defined over a set $\Omega$ is said to be measurable with respect to a $\sigma$-algebra $\mathcal{P}$ when it is constant on every atom of $\mathcal{P}$. Equivalently, $\E[X \mid \mathcal{P}] = X$ with probability one.

Expectations conditioned on a $\sigma$-algebra satisfy the handy tower property: for any $\sigma$-algebras $\mathcal{P}$, $\mathcal{P}'$ such that one refines the other $\mathcal{P}' \subseteq \mathcal{P}$,
\begin{equation*}
    \E[\E[X \mid \mathcal{P}] \mid \mathcal{P}'] = \E[X \mid \mathcal{P}'].
\end{equation*}
It will be useful to have the notion of martingales defined.

\begin{definition}[Martingales]
    Let $\mathcal{P}_1, \mathcal{P}_2, \dots$ be a filtration, i.e., a sequence of $\sigma$-algebras such that
    $\mathcal{P}_i \subseteq \mathcal{P}_{i + 1}$. The sequence of random variables $X_1, X_2, \dots$ is called a martingale with respect to the filtration when for any $i$, $X_i$ is $\mathcal{P}_i$-measurable and
    \begin{equation*}
        \E[X_{i + 1} \mid \mathcal{P}_i] = X_i.
    \end{equation*}
\end{definition}

Finally, we will need the orthogonality of martingale increments. Suppose that the sequence of random variables $X_1, X_2, \dots$ satisfies $\E[X_i^2] < \infty$ for all $i$, and is a martingale with respect to the filtration $\mathcal{P}_1, \mathcal{P}_2, \dots$. The orthogonality of martingale increments is the statement that for all $i$,
\begin{equation*}
    \E[X_i (X_{i + 1} - X_i)] = 0.
\end{equation*}

\subsection{Cut Norms}

In the analysis below, we will rely on the notion of a cut norm. This measures the rectangle in the matrix with the largest absolute sum.
\begin{definition}[Cut Norm]
    \label{def:cut-norm}
    Given a matrix $A \in \mathbb{R}^{n_1 \times n_2}$,  the cut norm is defined as
    \begin{equation*}
        \norm{A}_\square := \max_{\substack{R \subseteq [n_1] \\C \subseteq[n_2]}} \abs{\sum_{\substack{x \in R\\ y \in C}} A(x, y)}.
    \end{equation*}
    Given a probability distribution $\dist  \in \Delta([n_1] \times [n_2])$, the weighted cut norm is defined as
    \begin{equation*}
        \cutnorm{A} := \max_{\substack{R \subseteq [n_1] \\C \subseteq[n_2]}} \abs{\sum_{\substack{x \in R\\ y \in C}} A(x, y) \dist(x, y)}.
    \end{equation*}
\end{definition}

Finding the cut norm exactly is computationally intractable in the worst case \citep{alon2004approximating}, but algorithms have been proposed for approximating the largest absolute sum rectangle in a matrix. We adapt these algorithms for computing the weighted cut norm too.

\begin{lemma}[Approximating the Weighted Cut Norm Efficiently based on  \citep{alon2004approximating}]
    \label{lem:Alon-Naor}
   There exists a universal constant $c > 0$ such that given a matrix $A \in \mathbb{R}^{n_1 \times n_2}$ and a probability distribution $\dist \in \Delta([n_1] \times [n_2])$, there exists a deterministic algorithm running in $\mathrm{poly}(n_1, n_2)$ time that outputs subsets $R \subseteq [n_1]$ and $C \subseteq [n_2]$ with
    \begin{equation*}
        \abs{\sum_{\substack{x \in R \\ y \in C}} A(x, y)\dist(x, y)} \geq c \cutnorm{A}.
    \end{equation*}
\end{lemma}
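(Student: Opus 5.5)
Weighting by $\mu$ fits into the unweighted framework of \citet{alon2004approximating}, so the plan is to reduce to that result. Define the matrix $B \in \mathbb{R}^{n_1 \times n_2}$ entrywise by $B(x,y) = A(x,y)\,\dist(x,y)$, i.e.\ $B = A \circ M_\dist$ where $M_\dist$ is the matrix of probabilities. By \Cref{def:cut-norm}, for every pair $R \subseteq [n_1]$, $C \subseteq [n_2]$ we have
\begin{equation*}
    \sum_{x \in R,\, y \in C} B(x,y) = \sum_{x \in R,\, y \in C} A(x,y)\dist(x,y),
\end{equation*}
so $\norm{B}_\square = \cutnorm{A}$, and any rectangle that is good for $B$ in the unweighted sense is equally good for $A$ in the weighted sense. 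Constructing $B$ takes $O(n_1 n_2)$ arithmetic operations.

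Next, we invoke the algorithm of \citet{alon2004approximating} on $B$. The form we need is: a deterministic polynomial-time procedure which, for any real matrix, returns $R, C$ whose rectangle sum has absolute value at least $c\norm{B}_\square$ for a universal constant $c>0$. It proceeds in three steps.
\begin{enumerate}
    \item Solve the semidefinite relaxation of $\max_{u_x, v_y \in \{\pm 1\}} \sum_{x,y} B(x,y) u_x v_y$, i.e.\ the $\infty\to 1$ norm $\norm{B}_{\infty\to1}$, which is within a constant factor of $\norm{B}_\square$ ($\norm{B}_\square \le \norm{B}_{\infty\to1} \le 4\norm{B}_\square$).
    \item Round the vector solution to signs $u, v$, losing only a constant factor by Grothendieck's inequality. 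Krivine-type rounding can be derandomized (Alon--Naor give a deterministic rounding), or one uses the method of conditional expectations.
    \item Pass from sign vectors to subsets. Write $u = 2\mathbf{1}_{R^+} - \mathbf{1}$ and similarly for $v$, and expand $u^\top B v$ into four rectangle sums over $R^{\pm} \times C^{\pm}$. At least one of these four has absolute value at least $\tfrac14 |u^\top B v|$, and we output that rectangle.
\end{enumerate}
Chaining the constants through these steps gives a universal $c$ (roughly $0.03$ in Alon--Naor).

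The one point needing care is that Alon--Naor's guarantee is stated for arbitrary real matrices, with no assumption of bounded entries or nonnegativity. That is exactly what lets the scaling by $\dist$ go through. Polynomial running time also needs checking: the SDP is solved to additive accuracy $\eta$ in time $\mathrm{poly}(n_1,n_2,\log(1/\eta))$. If $B = 0$ any output is fine. Otherwise we choose $\eta$ a small constant fraction of $\norm{B}_{\infty\to 1}$, which is at least $\max_{x,y}|B(x,y)|$ and hence has polynomial bit-size relative to the input, and absorb the additive error into $c$. I expect this numerical-precision bookkeeping and the derandomization to be the only nontrivial steps, and both are handled in \citet{alon2004approximating} itself. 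The lemma then follows with their constant, possibly slightly reduced.
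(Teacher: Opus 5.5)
Your proposal is correct and follows the paper's proof: form the Hadamard product of $A$ with the matrix of probabilities $\mu(x,y)$, note that its unweighted cut norm equals the $\mu$-weighted cut norm of $A$, and run the deterministic Alon--Naor algorithm on it. Your additional sketch of the algorithm's internals (the SDP, the rounding, the passage from sign vectors to subsets, and the precision bookkeeping) is sound but not required, since the paper simply invokes Alon--Naor as a black box for arbitrary real matrices.
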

\begin{proof}
    This is the Alon-Naor cut norm algorithm given in \cite[Section 5]{alon2004approximating}. The difference is that they prove this for the unweighted cut norm.

    Define $W_\dist$ to be the matrix of probabilities, where $(W_\dist)(i, j) = \dist(i, j)$. By definition, $\cutnorm{A} = \norm{W_\dist \circ A}_\square$. Therefore, in the distributional case, it is enough to run the Alon-Naor algorithm on the Hadamard product of $A$ and $W_\dist$ to compute an approximation of $\norm{W_\dist \circ A}_\square$.
\end{proof}

%% file: figures/communication_tree.tex
\begin{tikzpicture}[
    level distance=2cm,
    level 1/.style={sibling distance=7cm},
    level 2/.style={sibling distance=5cm},
    level 3/.style={sibling distance=4cm},
    edge from parent/.style={draw, -{Latex}, thick},
    player/.style={draw, circle, thick, minimum size=1.2cm, align=center, font=\sffamily\bfseries},
    leaf/.style={font=\sffamily\bfseries\Large},
    label/.style={midway, fill=white, font=\small, inner sep=2pt}
]

\node[player] {Alice}
    child {
        node[leaf] {1}
        edge from parent node[label] {0 if $\omega_A \in \{1,2\}$}
    }
    child {
        node[player] {Bob}
        child {
            node[player] {Alice}
            child {
                node[leaf] {0}
                edge from parent node[label] {0 if $\omega_A=3$}
            }
            child {
                node[leaf] {1}
                edge from parent node[label] {1 if $\omega_A=4$}
            }
            edge from parent node[label] {0 if $\omega_B=1$}
        }
        child {
            node[player] {Alice}
             child {
                node[leaf] {1}
                edge from parent node[label] {0 if $\omega_A=3$}
            }
            child {
                node[leaf] {0}
                edge from parent node[label] {1 if $\omega_A=4$}
            }
            edge from parent node[label] {1 if $\omega_B \in \{2,3,4\}$}
        }
        edge from parent node[label] {1 if $\omega_A \in \{3,4\}$}
    };

\end{tikzpicture}

%% file: figures/rectangular_tiling.tex
\begin{tikzpicture}[scale=1]
    \fill[gray!30] (0,0) rectangle (1,1);
    \fill[gray!30] (1,1) rectangle (4,2);
    \fill[gray!30] (0,2) rectangle (4,4);

    \draw[step=1cm, gray!20, thin] (0,0) grid (4,4);

    \draw[thick] (0,0) rectangle (4,4);
    
    \draw[very thick] (0,2) -- (4,2);
    
    \draw[very thick] (1,0) -- (1,2);
    
    \draw[very thick] (0,1) -- (4,1);

    \node[font=\Huge] at (2,3) {1};
    
    \node[font=\Large] at (0.5, 1.5) {0};
    \node[font=\Large] at (0.5, 0.5) {1};
    
    \node[font=\Large] at (2.5, 1.5) {1};
    \node[font=\Large] at (2.5, 0.5) {0};

    \foreach \i in {1,...,4} {
        \node at (\i-0.5, 4.2) {$\i$};
        \node at (-0.3, 4.5-\i) {$\i$};
    }

\end{tikzpicture}

%% file: sections/04-Frieze-Kannan-Communication.tex
\section{Indistinguishable Coarsenings and How to Find Them}
\label{sec:FK}
We begin by describing how to construct the abstractions used by our communication protocol. The starting point is that directly searching for a good short communication policy is combinatorially hard: even a one-bit policy for Alice corresponds to choosing an arbitrary subset of her observation space. Our approach is instead to replace the original game by a much smaller effective game that preserves the value of all short communication policies.

The right notion of approximation is not entrywise closeness of rewards. A short communication policy only accesses the games through more structured tests, such as rectangles of the observation space and actions taken on those rectangles. This motivates the following definition.

\subsection{Indistinguishable Coarsenings and the Main Construction Theorem}

\begin{definition}[$\epsilon$-Indistinguishable Communication Games]
    \label{def:indistinguishable}
    Consider two communication games $G$ and $\hat{G}$ with reward functions $r$ and $\hat{r}$ respectively, but over the same observation spaces $\Omega_A, \Omega_B$, distribution over observations $\dist$, and actions $\mathcal{A}$. For any $\epsilon >0$, $G$ and $\hat{G}$ are said to be $\epsilon$-indistinguishable when for any $R \subseteq \Omega_A$, $C \subseteq \Omega_B$, and $a \in \mathcal{A}$,
    \begin{equation*}
        \abs{\E_{\omega_A, \omega_B \sim \dist}\left[\mathds{1}(\omega_A \in R,~ \omega_B \in C)\cdot \left(r(a; \omega_A, \omega_B) - \hat{r}(a; \omega_A, \omega_B)\right)\right]} \leq \epsilon.
    \end{equation*}
\end{definition}

As we described in Section~\ref{sec:1.2}, indistinguishability alone is not enough for our protocol. If Alice and Bob communicate the full observation in the compressed game, that communication may reveal more information than the approximation was designed to preserve. We therefore introduce an abstraction that is a coarsening of the original game, so that the abstraction is faithful on communication policies much longer than a single bit. 

\begin{definition}[Coarsenings of Communication Games]
    \label{def:coarsening}
    Let $G$ be a communication game.
    Given collections of subsets $\mathcal{P} = \{P_1, \dots, P_{\ell_A}\}$ of $\Omega_A$ and $\mathcal{Q} = \{Q_1, \dots, Q_{\ell_B}\}$ of $\Omega_B$, the coarsening of $G$ defined by $\mathcal{P}, \mathcal{Q}$ is the communication game $G_{\mathcal{P}, \mathcal{Q}} = \langle \mathcal{A}, \Omega_A, \Omega_B, \mu, r_{\mathcal{P}, \mathcal{Q}}\rangle$ with reward function
    \begin{equation*}
        r_{\mathcal{P}, \mathcal{Q}}(a; \omega_A, \omega_B) = \E_\dist\left[r(a; \cdot, \cdot) \mid \sigma(\mathcal{P}) \otimes \sigma(\mathcal{Q})\right](\omega_A, \omega_B).
    \end{equation*}
\end{definition}

The main result of this section is that such coarsened abstractions always exist and can be found efficiently.

\begin{restatable}[Every Communication Game Has a Low Complexity $\epsilon$-Indistinguishable Coarsening]{theorem}{IndistCoarseningExists}
    \label{thm:small-coarsening}
    For any $\epsilon > 0$ and communication game $G$, there exist collections of subsets $\mathcal{P} \subseteq 2^{\Omega_A}$, $\mathcal{Q}\subseteq 2^{\Omega_B}$ with $\abs{\mathcal{P}}, \abs{\mathcal{Q}} \leq \bO{\epsilon^{-2}}$ such that the coarsening $G_{\mathcal{P}, \mathcal{Q}}$ is $\epsilon$-indistinguishable from $G$.
    Moreover, such $\mathcal{P}$ and $\mathcal{Q}$ can be found in $\mathrm{poly}(n, m, \epsilon^{-1})$ time.
\end{restatable}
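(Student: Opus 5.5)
We will derive the theorem from a simultaneous, coarsening-based version of the Frieze--Kannan energy-increment argument, run on the $m$ reward matrices $A^a(\omega_A,\omega_B)=r(a;\omega_A,\omega_B)$, $a\in\acts$, with respect to the weighted cut norm $\cutnorm{\cdot}$. Two observations reduce the theorem to that argument. First, $\epsilon$-indistinguishability of $G$ and $G_{\mathcal P,\mathcal Q}$ says exactly that $\cutnorm{A^a - A^a_{\mathcal P,\mathcal Q}}\le\epsilon$ for every $a$, where $A^a_{\mathcal P,\mathcal Q}=\E_\mu[A^a\mid\sigma(\mathcal P)\otimes\sigma(\mathcal Q)]$. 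Second, by the $\ell_2$-normalization assumption, the total energy satisfies $\sum_a \E_\mu[(A^a)^2]\le 1$.

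The algorithm starts from $\mathcal P_0=\mathcal Q_0=\emptyset$, so every $A^a_{\mathcal P_0,\mathcal Q_0}$ is the global mean. At step $t$ we have collections $\mathcal P_t,\mathcal Q_t$, and we form the residuals $D^a_t = A^a - A^a_{\mathcal P_t,\mathcal Q_t}$. For each $a$ we run the Alon--Naor routine (\Cref{lem:Alon-Naor}) on $D^a_t$ with weights $\mu$. If every returned rectangle has weighted sum below $c\epsilon$, then every $\cutnorm{D^a_t}<\epsilon$ and we stop. Otherwise some action $a$ yields $R,C$ with $\abs{\E_\mu[\mathds 1_{R\times C} D^a_t]}\ge c\epsilon$, and we set $\mathcal P_{t+1}=\mathcal P_t\cup\{R\}$ and $\mathcal Q_{t+1}=\mathcal Q_t\cup\{C\}$. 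Each step adds one set to each side, so after $T$ steps $\abs{\mathcal P},\abs{\mathcal Q}\le T$. The conditional expectations cost $\mathrm{poly}(n,m)$ per step, because the number of nonempty atoms is at most $n$ per side even though there are up to $2^T$ patterns. The whole procedure is therefore polynomial once we show $T=\bO{\epsilon^{-2}}$.

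For the iteration bound we use the potential $\Phi_t=\sum_a \E_\mu[(A^a_{\mathcal P_t,\mathcal Q_t})^2]$. The $\sigma$-algebras $\mathcal F_t=\sigma(\mathcal P_t)\otimes\sigma(\mathcal Q_t)$ form a filtration, so for each $a$ the sequence $A^a_{\mathcal P_t,\mathcal Q_t}$ is a martingale. By the orthogonality of martingale increments, $\Phi_t$ is nondecreasing, and $\Phi_t\le\sum_a\E[(A^a)^2]\le1$.

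The key step is to show that each refinement gains at least $(c\epsilon)^2$. Since $R\times C$ is $\mathcal F_{t+1}$-measurable, the tower property gives
\[
c\epsilon\le\abs{\E[\mathds 1_{R\times C}D^a_t]}=\abs{\E[\mathds 1_{R\times C}(A^a_{\mathcal F_{t+1}}-A^a_{\mathcal F_t})]}.
\]
Cauchy--Schwarz together with $\E[\mathds 1_{R\times C}^2]\le1$ then bounds the right-hand side by $\norm{A^a_{\mathcal F_{t+1}}-A^a_{\mathcal F_t}}_{L^2(\mu)}$. Orthogonality of increments gives $\E[(A^a_{\mathcal F_{t+1}})^2]-\E[(A^a_{\mathcal F_t})^2]=\norm{A^a_{\mathcal F_{t+1}}-A^a_{\mathcal F_t}}^2\ge c^2\epsilon^2$. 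Summing over actions, using that no action's energy decreases, gives $\Phi_{t+1}\ge\Phi_t+c^2\epsilon^2$. Hence $T\le 1/(c^2\epsilon^2)=\bO{\epsilon^{-2}}$.

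This is where normalization matters: the bound is uniform in $m$ because the energy is summed over all actions rather than counted per action. Finally, at termination the coarsened game is by definition $G_{\mathcal P,\mathcal Q}$, and it is $\epsilon$-indistinguishable from $G$.

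The step I expect to be delicate is the energy-increment inequality: we must check that the test rectangle is exactly measurable in the refined product $\sigma$-algebra, which is the reason $R$ and $C$ are both added as generators, so that the correlation transfers to an $L^2$ increment of the true conditional expectation rather than of some unstructured approximation. A secondary point is that the Alon--Naor constant $c$ loses only a constant factor, so the stopping certificate ($c\epsilon$ threshold) still implies cut norm at most $\epsilon$.
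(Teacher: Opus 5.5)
Your proposal is correct and follows the paper's proof. The paper proves the simultaneous weighted coarsening lemma (\Cref{lem:grid-FK}) with the same Alon--Naor-driven energy-increment algorithm, using the potential $\sum_a \E_\mu[(A^a_{\mathcal P_t,\mathcal Q_t})^2]$ bounded by $s\le 1$, and then applies it to the action-reward matrices to obtain $\epsilon$-indistinguishability; your direct Cauchy--Schwarz derivation of the $(c\epsilon)^2$ increment is just a cleaner version of the paper's conditioning-plus-Jensen computation in \Cref{claim:key-claim-2}.
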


Coarsened games are easy to communicate optimally in a few bits, at most $\bO{\epsilon^{-2}}$. This is because all Alice and Bob need to do is communicate which subsets in $\mathcal{P}$ and $\mathcal{Q}$ their observations lie in respectively. So \Cref{thm:small-coarsening} further implies that every communication game $G$ is $\epsilon$-indistinguishable from a communication game $\hat{G}$ with constant communication complexity $CC(\hat{G}) \leq \bO{\epsilon^{-2}}$.

The proof goes through our introduction of a generalized, weighted, simultaneous version of the Frieze-Kannan weak regularity lemma. In \Cref{sec:regularity}, we prove the matrix lemma behind this construction. Then, in \Cref{sec:indist-games}, we apply it simultaneously to all action-reward matrices of a communication game to prove \Cref{thm:small-coarsening}.

\subsection{A Generalized Weighted Simultaneous Frieze-Kannan Lemma}
\label{sec:regularity}

We will show a more general version of the coarsened weak regularity lemma that applies not just to communication but to matrices more generally. First, we must define what a coarsening of a matrix means.
\begin{definition}[Coarsenings of Matrices]
    \label{def:step-function-approximations}
    Consider an arbitrary matrix $A \in [-1, 1]^{n_1 \times n_2}$, a probability distribution $\dist \in \Delta([n_1] \times [n_2])$, a set of subsets of $[n_1]$, $\mathcal{P} = \{P_1, \dots, P_{\ell_1}\} \subseteq 2^{[n_1]}$, and a set of subsets of $[n_2]$, $\mathcal{Q} = \{Q_1, \dots, Q_{\ell_2}\} \subseteq 2^{[n_2]}$.

    The coarsening $A_{\mathcal{P}, \mathcal{Q}}$ with respect to $\dist$ is defined as the matrix such that for any $x, y \in [n_1] \times [n_2]$,
    \begin{align*}
        A_{\mathcal{P}, \mathcal{Q}}(x, y)
        &= \E_\dist\left[A \mid \sigma(\mathcal{P}) \otimes \sigma(\mathcal{Q})\right](x, y).
    \end{align*}
\end{definition}

The $\sigma$-algebra induced by $\mathcal{P} \times \mathcal{Q}$ could potentially contain $2^{2\ell}$ different atoms. Despite this, it is always possible, given collections of sets $\mathcal{P}$ and $\mathcal{Q}$ both of size at most $\ell$ each, to compute the coarsening in $\mathrm{poly}(\ell_1, \ell_2, n_1, n_2)$ time. One simple way of doing this is to take three passes over the matrix. In the first pass, all entries are labeled with a $2\ell$-bit membership string denoting whether or not the entry is contained in each $P_i$ and each $Q_i$. There are at most $n_1n_2$ entries in the matrix so at most $n_1n_2$ membership strings. In the second pass, compute the average value under $\dist$ for each membership string encountered, storing at most $n_1n_2$ averages. In the third pass, every entry is set to the average of entries with the same membership string.

With this notion in mind, we can state and prove the main technical tool that this work relies on.

\begin{theorem}[Efficiently Computing Simultaneous Regular Coarsenings]
    \label{lem:grid-FK}Consider an arbitrary list of matrices $A^1, \dots, A^k \in [-1, 1]^{n_1 \times n_2}$, a fidelity parameter $\epsilon > 0$, and a distribution $\dist \in \Delta([n_1] \times [n_2])$. Let $c > 0$ be a universal constant and let $s = \sum_{i = 1}^k \E_{x, y \sim \dist}\left[A^i(x, y)^2\right]$. There exist collections of subsets $\mathcal{P} \subseteq 2^{[n_1]}$ and $\mathcal{Q} \subseteq 2^{[n_2]}$ of size at most $\ceil{s/(c\epsilon)^2}$ each that simultaneously induce an $\epsilon$-indistinguishable coarsening for all $A^i$. That is, for all $i \in [k]$,
    \begin{equation*}
        \cutnorm{A^i - A^i_{\mathcal{P}, \mathcal{Q}}} \leq \epsilon.
    \end{equation*}
    These subsets can be found in $\bO{\mathrm{poly}(n_1, n_2) \cdot ks/\epsilon^2}$ time.
\end{theorem}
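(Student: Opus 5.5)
We will run an energy-increment (martingale) argument in which every refinement is an honest conditional expectation. Start with $\mathcal{P}_0=\mathcal{Q}_0=\emptyset$, so every $A^i_{\mathcal{P}_0,\mathcal{Q}_0}$ is the constant $\E_\dist[A^i]$. At step $t$, for each $i\in[k]$ we form the residual $D^i_t = A^i - A^i_{\mathcal{P}_t,\mathcal{Q}_t}$. We run the weighted Alon--Naor algorithm of \Cref{lem:Alon-Naor} on each $D^i_t$ and obtain a rectangle $R^i\times C^i$ with $\abs{\E_\dist[\mathds{1}_{R^i\times C^i} D^i_t]}\ge c\,\cutnorm{D^i_t}$.

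If for every $i$ the returned value is below $c\epsilon$, then $\cutnorm{D^i_t}\le \epsilon$ for all $i$, and we stop. Otherwise we pick one index $i$ with $\abs{\E_\dist[\mathds{1}_{R^i\times C^i} D^i_t]}> c\epsilon$ and set $\mathcal{P}_{t+1}=\mathcal{P}_t\cup\{R^i\}$ and $\mathcal{Q}_{t+1}=\mathcal{Q}_t\cup\{C^i\}$. Each step adds one set to each collection, so the sizes equal the number of steps taken.

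The potential is $\Phi_t=\sum_{i}\E_\dist[(A^i_{\mathcal{P}_t,\mathcal{Q}_t})^2]$. Since $\sigma(\mathcal{P}_t)\otimes\sigma(\mathcal{Q}_t)$ is a filtration, each sequence $A^i_{\mathcal{P}_t,\mathcal{Q}_t}$ is a martingale. By orthogonality of increments, $\Phi_t$ is nondecreasing and bounded by $\sum_i\E[(A^i)^2]=s$ (Jensen). The key gain comes from the chosen index $i$. The set $R^i\times C^i$ is measurable in the new algebra $\mathcal{F}_{t+1}$, so
\[
c\epsilon<\abs{\E[\mathds{1}_{R^i\times C^i}(A^i-A^i_t)]}=\abs{\E[\mathds{1}_{R^i\times C^i}(A^i_{t+1}-A^i_t)]}.
\]
Here the equality uses the tower property, replacing $A^i$ by $A^i_{t+1}=\E[A^i\mid\mathcal{F}_{t+1}]$. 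By Cauchy--Schwarz, with $\E[\mathds{1}^2]\le 1$, this is at most $\norm{A^i_{t+1}-A^i_t}_{L^2(\dist)}$. Orthogonality of increments then gives
\[
\E[(A^i_{t+1})^2]-\E[(A^i_t)^2]=\E[(A^i_{t+1}-A^i_t)^2]>(c\epsilon)^2.
\]
The other summands do not decrease, so $\Phi_{t+1}>\Phi_t+(c\epsilon)^2$. Hence the process stops within $\ceil{s/(c\epsilon)^2}$ steps, which gives the size bound.

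For the running time, each step computes $k$ coarsenings using the three-pass procedure described after \Cref{def:step-function-approximations}, and makes $k$ calls to \Cref{lem:Alon-Naor}. That is $k\cdot\mathrm{poly}(n_1,n_2)$ work per step, and with $s/(c\epsilon)^2$ steps this gives the stated $\bO{\mathrm{poly}(n_1,n_2)\,ks/\epsilon^2}$.

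The main subtlety I expect is the gain step. It must hold with respect to the weighted measure $\dist$, which is an arbitrary joint distribution and not a product. It must also use that the new rectangle is measurable in the refined algebra; adding a single set to each side does make $R\times C$ measurable. The second subtlety is simultaneity: we refine with only one matrix's rectangle per step. The monotonicity of the other summands, which follows from the martingale property, is what makes the shared potential argument work with a budget of $s$ rather than $k$.
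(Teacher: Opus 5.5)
Your proof is correct and follows the paper's argument essentially step for step. It uses the same greedy Alon--Naor refinement loop, the same potential $\sum_i \E_\dist[(A^i_{\mathcal{P}_t,\mathcal{Q}_t})^2]\le s$, and the same martingale-orthogonality identity. The energy gain likewise comes from $R\times C$ becoming measurable in the refined algebra; your one-line Cauchy--Schwarz against $\mathds{1}_{R\times C}$ is just a compressed form of the paper's conditional-Jensen computation.

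One cosmetic fix: make the two branches exhaustive by refining whenever some value is $\ge c\epsilon$ (not $> c\epsilon$). This still gives an increment of at least $(c\epsilon)^2$ and hence the $\lceil s/(c\epsilon)^2\rceil$ bound.
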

\begin{proof}
    We will run an energy increment argument in the language of martingales.

    \begin{algorithm}[ht]
    \caption{Efficiently Computing a Regular Coarsening}
    \label{alg:FK-grid}
    \begin{algorithmic}[1]
    \State Initialize $\mathcal{P}_0 = \mathcal{Q}_0 = \emptyset$, and set $t = 1$.
    \Loop
        \State Using the Alon-Naor algorithm (\Cref{lem:Alon-Naor}), for each $i \in [k]$, compute sets $P_t^i \subseteq [n_1]$ and $Q_t^i \subseteq [n_2]$ corresponding to the weighted cut norm of the residual $A^i - A^i_{\mathcal{P}_{t-1}, \mathcal{Q}_{t-1}}$. These satisfy
        \begin{equation*}
            V_t^i := \abs{\dist(P_t^i \times Q_t^i) \cdot \E_{x, y \sim \dist}\left[A^{i}(x, y) - A^{i}_{\mathcal{P}_{t - 1}, \mathcal{Q}_{t - 1}}(x, y) ~\middle|~ P_t^i \times Q_t^i\right]} \geq c \cdot \cutnorm{A^{i} - A^{i}_{\mathcal{P}_{t - 1}, \mathcal{Q}_{t - 1}}}.
        \end{equation*}
        \If{there exists some $i_t$ such that $V_t^{i_t} \geq c\epsilon$}
            \State Set $\mathcal{P}_{t} = \mathcal{P}_{t-1} \cup \{P_t^{i_t}\}$ and
            $\mathcal{Q}_{t} = \mathcal{Q}_{t-1} \cup \{Q_t^{i_t}\}$.
            \State Increment $t \gets t + 1$.
        \Else
            \State Break the loop and output $\mathcal{P}_{t-1}, \mathcal{Q}_{t-1}$.
        \EndIf
        \EndLoop
    \end{algorithmic}
    \end{algorithm}

    Consider \Cref{alg:FK-grid} for computing the partition with the desired property. The claim is that after $s/(c\epsilon)^2$ iterations, it must be that for all $i \in [k]$, the partition is a good approximation of the cut norm $\cutnorm{A^i - A^i_{\mathcal{P}_t, \mathcal{Q}_t}} < \epsilon$. The algorithm terminates quickly, and when it terminates, it must return a good coarsening by definition.

    Define the following potential function at time step $t$
    \begin{equation*}
        \Phi_t = \sum_{i = 1}^k \E_{x, y \sim \dist}\left[A^i_{\mathcal{P}_t, \mathcal{Q}_t}(x, y)^2\right],
    \end{equation*}
    to be the sum of the average squared entries under $\dist$ in the matrices $A^i_{\mathcal{P}_t, \mathcal{Q}_t}$.
    First, we will show that if there exists some $i$ such that $\cutnorm{A^i - A^i_{\mathcal{P}_t, \mathcal{Q}_t}} \geq \epsilon$, then the potential function must increase by $(c\epsilon)^2$. Then we will start by showing that this potential can never be larger than $s$. The result will follow from these two key facts.

    First, to study how the potential function increases at every iteration, we will prove the following claim.
    \begin{claim}
        \label{claim:key-claim}
        For every $i$ and iteration $t \geq 1$,
        \begin{equation*}
            \E_{x, y \sim \dist}\left[A^i_{\mathcal{P}_t, \mathcal{Q}_t}(x, y)^2\right] = \E_{x, y \sim \dist}\left[A^i_{\mathcal{P}_{t - 1}, \mathcal{Q}_{t - 1}}(x, y)^2\right] +
            \E_{x, y \sim \dist}\left[\left(A^i_{\mathcal{P}_t, \mathcal{Q}_t}(x, y) - A^i_{\mathcal{P}_{t - 1}, \mathcal{Q}_{t - 1}}(x, y)\right)^2\right].
        \end{equation*}
    \end{claim}
    The proof of \Cref{claim:key-claim} is deferred to \Cref{sec:proof-key-claim}.

    Notice that because $\E_{x, y \sim \dist}\left[\left(A^i_{\mathcal{P}_t, \mathcal{Q}_t}(x, y) - A^i_{\mathcal{P}_{t - 1}, \mathcal{Q}_{t - 1}}(x, y)\right)^2\right] \geq 0$, it follows from \Cref{claim:key-claim} that each refinement never decreases the second moment: for all $i$ and all $t$ it must be that $$\E_{x, y \sim \dist}\left[A^i_{\mathcal{P}_t, \mathcal{Q}_t}(x, y)^2\right] \geq \E_{x, y \sim \dist}\left[A^i_{\mathcal{P}_{t - 1}, \mathcal{Q}_{t - 1}}(x, y)^2\right].$$
    Refinement can never hurt the potential function.

    Now we will show that at every iteration, the potential function must increase by $(c\epsilon)^2$. We will do this by proving the following claim.
    \begin{claim}
        \label{claim:key-claim-2}
        For all iterations $t$, when $i_t$ is the index of the matrix chosen at iteration $t$,
        \begin{equation}
            \E_{x, y \sim \dist}\left[\left(A^{i_t}_{\mathcal{P}_{t}, \mathcal{Q}_{t}}(x, y) - A^{i_t}_{\mathcal{P}_{t - 1}, \mathcal{Q}_{t - 1}}(x, y)\right)^2\right] \geq (c\epsilon)^2.
        \end{equation}
    \end{claim}
    The proof of \Cref{claim:key-claim-2} is deferred to \Cref{sec:proof-key-claim-2}.

    Applying \Cref{claim:key-claim} and \Cref{claim:key-claim-2}, the potential function must increase by $(c\epsilon)^2$ each step:
    \begin{align*}
        \Phi_{t}
        &= \sum_{i = 1}^k \E_{x, y \sim \dist}\left[A^i_{\mathcal{P}_{t}, \mathcal{Q}_{t}}(x, y)^2\right] \\
        &= \sum_{i = 1}^k \left(\E_{x, y \sim \dist}\left[A^i_{\mathcal{P}_{t - 1}, \mathcal{Q}_{t - 1}}(x, y)^2\right] + \E_{x, y \sim \dist}\left[\left(A^i_{\mathcal{P}_{t}, \mathcal{Q}_{t}}(x, y) - A^i_{\mathcal{P}_{t - 1}, \mathcal{Q}_{t - 1}}(x, y)\right)^2\right]\right) && (\text{By \Cref{claim:key-claim}})\\
        &\geq \E_{x, y \sim \dist}\left[\left(A^{i_t}_{\mathcal{P}_{t}, \mathcal{Q}_{t}}(x, y) - A^{i_t}_{\mathcal{P}_{t - 1}, \mathcal{Q}_{t - 1}}(x, y)\right)^2\right] + \sum_{i = 1}^k \E_{x, y \sim \dist}\left[A^i_{\mathcal{P}_{t - 1}, \mathcal{Q}_{t - 1}}(x, y)^2\right] \\
        &\geq (c\epsilon)^2 + \Phi_{t - 1}. && (\text{By \Cref{claim:key-claim-2}})
    \end{align*}

    The final step of this argument is to prove that the potential function can never exceed $s$. Indeed, for any iteration $t$,
    \begin{align*}
        \Phi_t &= \sum_{i = 1}^k \E_{x, y \sim \dist}\left[\left(A^i_{\mathcal{P}_t, \mathcal{Q}_t}(x, y)\right)^2\right] \\
        &= \sum_{i = 1}^k \E_{x, y \sim \dist}\left[\E_\dist\left[A^i \mid \sigma(\mathcal{P}_t) \otimes \sigma(\mathcal{Q}_t)\right](x, y)^2\right] \\
        &\leq \sum_{i = 1}^k \E_{x, y \sim \dist}\left[\E_\dist\left[(A^i)^2 \mid \sigma(\mathcal{P}_t) \otimes \sigma(\mathcal{Q}_t)\right](x, y)\right] && (\text{By Jensen's inequality}) \\
        &= \sum_{i = 1}^k \E_{x, y \sim \dist}\left[A^i\left(x, y\right)^2\right] && (\text{By the tower rule})\\
        &= s.
    \end{align*}
    Therefore, after at most $t \leq s/(c\epsilon)^2$ iterations, there cannot exist any $i \in [k]$ such that $\cutnorm{A^i - A^i_{\mathcal{P}_t, \mathcal{Q}_t}} \geq \epsilon$.
\end{proof}

This result can be thought of as a multicalibration variant of the Frieze-Kannan cut decomposition (see \cite{hebert2018multicalibration} for an introduction to multicalibration) of the kind studied in work such as \cite{casacuberta2024complexity}.

\begin{remark}
    As a remark, the standard way to prove a coarsening variant of the regularity lemma in the unweighted case (e.g. in \cite{alon2002random, lovasz2007szemeredi}) is to first run the standard Frieze-Kannan regularity lemma \citep{frieze1999quick} to find an $\epsilon$-indistinguishable matrix that is the sum of $k \in \bO{1/\epsilon^2}$ matrices of the form $\mathds{1}_{S_1} \mathds{1}_{T_1}^T, \dots, \mathds{1}_{S_k} \mathds{1}_{T_k}^T$, where each $S_i \subseteq [n_1]$ and $T_i \subseteq [n_2]$ and $\mathds{1}_{S_i}$ and $\mathds{1}_{T_i}$ are the indicator vectors of the sets $S_i$ and $T_i$ respectively. Then, take the partitions induced by the $\sigma$-algebra $\mathcal{P}$ of the sets $S_1, \dots, S_k$ for the rows and $\mathcal{Q}$ of the sets $T_1, \dots, T_k$ for the columns. It can be proven that replacing the value of each grid cell in $\mathcal{P} \times \mathcal{Q}$ in the approximation with the average in the original matrix cannot worsen the approximation in the cut norm by more than a factor of 2, so coarsening by $\mathcal{P}$ and $\mathcal{Q}$ results in a coarsened $2\epsilon$-indistinguishable matrix.

    This argument fails when the distribution over entries in the matrix is not a product. Replacing the value that the approximation takes in each grid cell in $\mathcal{P} \times \mathcal{Q}$ with the weighted average in the original matrix can worsen the approximation in the weighted cut norm.

    Another interpretation of this failure is that when the distribution is a product, it can be shown that any $\epsilon$-multiaccurate partition is automatically a $2\epsilon$-multicalibrated one. This statement is not necessarily true for all possible distributions. It would be interesting to study other natural settings more generally in which multiaccuracy suffices for multicalibration.
\end{remark}

\begin{remark}
    The constant $c$ in this theorem is $0.03$. One can improve this to $c = 0.56$ with a randomized cut norm algorithm, as given in \cite{alon2004approximating}, that succeeds with high probability. In doing so, however, we suffer a multiplicative factor of $\log(1/\epsilon)$ in the runtime to get each step of boosting to succeed with high enough probability.
\end{remark}

\subsubsection{Proof of the Martingale Energy Identity Claim (Claim~\ref{claim:key-claim})}
\label{sec:proof-key-claim}

\begin{proof}[Proof of \Cref{claim:key-claim}]
    Notice that the sequence of $\sigma(\mathcal{P}_t) \otimes \sigma(\mathcal{Q}_t)$ defines a filtration by construction. This is because at every iteration of the algorithm, the row and column partition is refined.

    For every $i$, the sequence of $A^i_{\mathcal{P}_t, \mathcal{Q}_t}$ in fact defines a martingale.
    We show this below using the tower property and the fact that the sequence of $\sigma$-algebras is a filtration.
    \begin{align*}
        \E_\dist\left[A^i_{\mathcal{P}_t, \mathcal{Q}_t} \mid \sigma\left(\mathcal{P}_{t - 1}\right) \otimes \sigma\left(\mathcal{Q}_{t - 1}\right)\right]
        &= \E_\dist\left[\E_\dist\left[A^i \mid \sigma\left(\mathcal{P}_t\right) \otimes \sigma\left(\mathcal{Q}_t\right)\right] \mid \sigma\left(\mathcal{P}_{t - 1}\right) \otimes \sigma\left(\mathcal{Q}_{t - 1}\right)\right] \\
        &= \E_\dist\left[A^i \mid \sigma\left(\mathcal{P}_{t - 1}\right) \otimes \sigma\left(\mathcal{Q}_{t - 1}\right)\right] \\
        &= A^i_{\mathcal{P}_{t - 1}, \mathcal{Q}_{t - 1}}.
    \end{align*}
    Therefore, by the orthogonality of martingale increments, $$\E_{x, y \sim \dist}\left[A^i_{\mathcal{P}_{t - 1}, \mathcal{Q}_{t - 1}}(x, y)\cdot \left(A^i_{\mathcal{P}_t, \mathcal{Q}_t}(x, y) - A^i_{\mathcal{P}_{t - 1}, \mathcal{Q}_{t - 1}}(x, y)\right)\right] = 0.$$

    Hence, we get the key equality,
    \begin{align*}
        \E_{x, y \sim \dist}[A^i_{\mathcal{P}_t, \mathcal{Q}_t}(x, y)^2] &= \E_{x, y \sim \dist}\left[\left(A^i_{\mathcal{P}_{t - 1}, \mathcal{Q}_{t - 1}}(x, y) + \left(A^i_{\mathcal{P}_t, \mathcal{Q}_t}(x, y) - A^i_{\mathcal{P}_{t - 1}, \mathcal{Q}_{t - 1}}(x, y)\right)\right)^2\right]  \\
        &=\E_{x, y \sim \dist}\left[A^i_{\mathcal{P}_{t - 1}, \mathcal{Q}_{t - 1}}(x, y)^2\right] +
        \E_{x, y \sim \dist}\left[\left(A^i_{\mathcal{P}_t, \mathcal{Q}_t}(x, y) - A^i_{\mathcal{P}_{t - 1}, \mathcal{Q}_{t - 1}}(x, y)\right)^2\right] \\
        &\qquad+ 2\E_{x, y \sim \dist}\left[A^i_{\mathcal{P}_{t - 1}, \mathcal{Q}_{t - 1}}(x, y) \cdot \left(A^i_{\mathcal{P}_t, \mathcal{Q}_t}(x, y) - A^i_{\mathcal{P}_{t - 1}, \mathcal{Q}_{t - 1}}(x, y)\right)\right] \\
        &= \E_{x, y \sim \dist}\left[A^i_{\mathcal{P}_{t - 1}, \mathcal{Q}_{t - 1}}(x, y)^2\right] +
        \E_{x, y \sim \dist}\left[\left(A^i_{\mathcal{P}_t, \mathcal{Q}_t}(x, y) - A^i_{\mathcal{P}_{t - 1}, \mathcal{Q}_{t - 1}}(x, y)\right)^2\right]. && \qedhere
    \end{align*}
\end{proof}

\subsubsection{Proof of the Energy Increment Claim (Claim~\ref{claim:key-claim-2})}
\label{sec:proof-key-claim-2}

\begin{proof}[Proof of \Cref{claim:key-claim-2}]
    By the acceptance condition at iteration $t$, the algorithm has found $P_t, Q_t$ such that
    \begin{equation*}
        \abs{\dist(P_t \times Q_t) \cdot \E_{x, y \sim \dist}\left[A^{i_t}(x, y) - A^{i_t}_{\mathcal{P}_{t - 1}, \mathcal{Q}_{t - 1}}(x, y) ~\middle|~ P_t \times Q_t\right]} \geq c\epsilon.
    \end{equation*}
    Because we refine the partition at step $t$ to include $P_t$ and $Q_t$, it follows that $P_t \times Q_t$ is $\sigma(\mathcal{P}_t) \otimes \sigma(\mathcal{Q}_t)$-measurable. Therefore,
    \begin{equation*}
        \E_{x, y \sim \dist}\left[A^{i_t}(x, y) - A^{i_t}_{\mathcal{P}_{t}, \mathcal{Q}_{t}}(x, y) ~\middle|~ P_t \times Q_t\right] = 0.
    \end{equation*}
    Via the triangle inequality and some algebraic manipulation, we get
    \begin{align*}
        c\epsilon &\leq \dist(P_t \times Q_t) \cdot \left(\abs{\E_{x, y \sim \dist}\left[A^{i_t}(x, y) - A^{i_t}_{\mathcal{P}_{t - 1}, \mathcal{Q}_{t - 1}}(x, y) ~\middle|~ P_t \times Q_t\right]}\right) \\
        &= \dist(P_t \times Q_t) \cdot \left(\abs{\E_{x, y \sim \dist}\left[A^{i_t}(x, y) - A^{i_t}_{\mathcal{P}_{t - 1} , \mathcal{Q}_{t - 1}}(x, y) ~\middle|~ P_t \times Q_t\right]} - 0\right) \\
        &= \dist(P_t \times Q_t) \cdot \bigg(\abs{\E_{x, y \sim \dist}\left[A^{i_t}(x, y) - A^{i_t}_{\mathcal{P}_{t - 1}, \mathcal{Q}_{t - 1}}(x, y) ~\middle|~ P_t \times Q_t\right]} \\
        &\qquad\qquad\qquad\qquad- \abs{\E_{x, y \sim \dist}\left[A^{i_t}(x, y) - A^{i_t}_{\mathcal{P}_{t}, \mathcal{Q}_{t}}(x, y) ~\middle|~ P_t \times Q_t\right]}\bigg) \\
        &\leq \dist(P_t \times Q_t) \cdot \Bigg|\E_{x, y \sim \dist}\left[A^{i_t}(x, y) - A^{i_t}_{\mathcal{P}_{t - 1}, \mathcal{Q}_{t - 1}}(x, y) ~\middle|~ P_t \times Q_t\right] \\
        &\qquad\qquad\qquad\qquad- \E_{x, y \sim \dist}\left[A^{i_t}(x, y) - A^{i_t}_{\mathcal{P}_{t}, \mathcal{Q}_{t}}(x, y) ~\middle|~ P_t \times Q_t\right]\Bigg| \\
        &= \dist(P_t \times Q_t) \cdot \abs{\E_{x, y \sim \dist}\left[A^{i_t}_{\mathcal{P}_{t}, \mathcal{Q}_{t}}(x, y) - A^{i_t}_{\mathcal{P}_{t - 1}, \mathcal{Q}_{t - 1}}(x, y) ~\middle|~ P_t \times Q_t\right]}.
    \end{align*}
    Squaring both sides,
    \begin{equation*}
        (c\epsilon)^2 \leq \dist(P_t \times Q_t)^2 \cdot \E_{x, y \sim \dist}\left[A^{i_t}_{\mathcal{P}_{t}, \mathcal{Q}_{t}}(x, y) - A^{i_t}_{\mathcal{P}_{t - 1}, \mathcal{Q}_{t - 1}}(x, y) ~\middle|~ P_t \times Q_t\right]^2.
    \end{equation*}
    Because the cut norm is larger than $c\epsilon > 0$, it must be that $\dist(P_t \times Q_t) > 0$. This means that it is safe to divide both sides by $\dist(P_t \times Q_t)$ to get
    \begin{equation*}
        (c\epsilon)^2/\dist(P_t \times Q_t) \leq \dist(P_t \times Q_t) \cdot \E_{x, y \sim \dist}\left[A^{i_t}_{\mathcal{P}_{t}, \mathcal{Q}_{t}}(x, y) - A^{i_t}_{\mathcal{P}_{t - 1}, \mathcal{Q}_{t - 1}}(x, y) ~\middle|~ P_t \times Q_t\right]^2.
    \end{equation*}
    By Jensen's,
    \begin{equation*}
         (c\epsilon)^2/\dist(P_t \times Q_t) \leq \dist(P_t \times Q_t) \cdot \E_{x, y \sim \dist}\left[\left(A^{i_t}_{\mathcal{P}_{t}, \mathcal{Q}_{t}}(x, y) - A^{i_t}_{\mathcal{P}_{t - 1}, \mathcal{Q}_{t - 1}}(x, y)\right)^2 ~\middle|~ P_t \times Q_t\right]
    \end{equation*}
    Because it is non-negative, $$(1 - \dist(P_t \times Q_t))\E_{x, y \sim \dist}\left[\left(A^{i_t}_{\mathcal{P}_t, \mathcal{Q}_t}(x, y) - A^{i_t}_{\mathcal{P}_{t - 1}, \mathcal{Q}_{t - 1}}(x, y)\right)^2 ~\middle|~ (P_t \times Q_t)^C\right] \geq 0.$$ By the tower rule,
    \begin{align*}
        (c\epsilon)^2 / \dist(P_t \times Q_t)
        &\leq \dist(P_t \times Q_t) \cdot \E_{x, y\sim \dist}\left[\left(A^{i_t}_{\mathcal{P}_{t}, \mathcal{Q}_{t}}(x, y) - A^{i_t}_{\mathcal{P}_{t - 1}, \mathcal{Q}_{t - 1}}(x, y)\right)^2 ~\middle|~ P_t \times Q_t\right] \\
        &\qquad + (1 - \dist(P_t \times Q_t))\E_{x, y \sim \dist}\left[\left(A^{i_t}_{\mathcal{P}_{t}, \mathcal{Q}_{t}}(x, y) - A^{i_t}_{\mathcal{P}_{t - 1}, \mathcal{Q}_{t - 1}}(x, y)\right)^2 ~\middle|~ (P_t \times Q_t)^C\right] \\
        &= \E_{x, y \sim \dist}\left[\left(A^{i_t}_{\mathcal{P}_t, \mathcal{Q}_t}(x, y) - A^{i_t}_{\mathcal{P}_{t - 1}, \mathcal{Q}_{t - 1}}(x, y)\right)^2\right].
    \end{align*}
    Because $\dist(P_t \times Q_t) \leq 1$, we finally get the desired inequality
    \begin{equation*}
        (c\epsilon)^2 \leq \E_{x, y \sim \dist}\left[\left(A^{i_t}_{\mathcal{P}_{t}, \mathcal{Q}_{t}}(x, y) - A^{i_t}_{\mathcal{P}_{t - 1}, \mathcal{Q}_{t - 1}}(x, y)\right)^2\right]. \qedhere
    \end{equation*}
\end{proof}

\subsection{Indistinguishable Coarsenings of Communication Games}
\label{sec:indist-games}

With this powerful tool in hand, we can take any communication game $G$ and compute, in polynomial time, collections of subsets of Alice's and Bob's observation spaces $\mathcal{P}, \mathcal{Q}$ such that the coarsening $G_{\mathcal{P}, \mathcal{Q}}$ is indistinguishable from $G$.

For convenience, we restate the main coarsening theorem before proving it.
\IndistCoarseningExists*

\begin{proof}[Proof of \Cref{thm:small-coarsening}]
    The idea of the proof is to coarsen along the sets guaranteed to us by \Cref{lem:grid-FK}. Let $U^a$ be the matrix whose rows are indexed by $\Omega_A$ and whose columns are indexed by $\Omega_B$, where $U^a(\omega_A, \omega_B) = r(a; \omega_A, \omega_B)$.

    By applying \Cref{lem:grid-FK} on the matrices $U^a$ for all $a \in \mathcal{A}$, there exist collections $\mathcal{P}$ of subsets of $\Omega_A$ and $\mathcal{Q}$ of subsets of $\Omega_B$ such that for all $a \in \mathcal{A}$
    \begin{equation*}
        \cutnorm{U^a - U^a_{\mathcal{P}, \mathcal{Q}}} \leq \epsilon.
    \end{equation*}
    By the definition of the cut norm, for any $a \in \mathcal{A}$, any $R \subseteq \Omega_A$, and any $C \subseteq \Omega_B$, it follows that
    \begin{align*}
    \epsilon &\geq \abs{\sum_{\substack{\omega_A \in R \\\omega_B \in C}} \dist(\omega_A, \omega_B) \left(U^a(\omega_A, \omega_B) - U^a_{\mathcal{P}, \mathcal{Q}}(\omega_A, \omega_B)\right)} \\
            &= \abs{\E_{\omega_A, \omega_B \sim \dist}\left[\mathds{1}(\omega_A \in R,~ \omega_B \in C)\cdot \left(U^a(\omega_A, \omega_B) - U^a_{\mathcal{P}, \mathcal{Q}}(\omega_A, \omega_B)\right)\right]}.
    \end{align*}
    By definition, $U^a_{\mathcal{P}, \mathcal{Q}}$ is the reward of taking action $a$ under observations $\omega_A, \omega_B$ in the coarsening of $G$. It follows that $G$ and $G_{\mathcal{P}, \mathcal{Q}}$ are $\epsilon$-indistinguishable.
\end{proof}

\section{Communicating with Indistinguishable Coarsenings}

We begin the design of our communication protocol. The protocol will work by computing an $\epsilon$-indistinguishable coarsening of the communication game and solving the communication problem in the coarsened game instead. In \Cref{sec:planning}, we prove that doing so is safe: the value of a short policy on $G$ and an $\epsilon$-indistinguishable coarsening cannot differ too much. In \Cref{sec:one-way}, we describe our protocol and combine the results in \Cref{sec:planning} to prove that it is effective.

The final guarantee is the following. The number of bits communicated depends on the intrinsic partition number $N_\alpha(G)$, independently of the size of the observation spaces.

\begin{restatable}[Tractable Effective Communication]{theorem}{Communication}
    \label{thm:comm-exists}
    There exists a communication protocol that when given $\alpha \in [-1, 1]$, a communication game $G$ with $N_\alpha(G) < \infty$, and $\delta > 0$: runs in $\bO{\mathrm{poly}(n, m, 1/\delta)}$ time, achieves a utility of at least $\alpha - \delta$ in expectation, and sends no more than
    \[
        \bO{\frac{N_\alpha(G) \cdot \min(m, N_\alpha(G))}{\delta^2}}
    \]
    bits.
\end{restatable}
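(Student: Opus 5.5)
The plan is to compare three policies: an optimal $D$-measurable policy for the best partition, the coarsened game $G_{\mathcal{P},\mathcal{Q}}$ of \Cref{thm:small-coarsening}, and the protocol in which Alice and Bob reveal their membership patterns. Let $N = N_\alpha(G)$ and fix a rectangular partition $D = \{R_1\times C_1,\dots,R_N\times C_N\}$ together with a $D$-measurable policy $\pi_D$ that plays $a_i$ on $R_i\times C_i$ and has $\pi_D(G)\ge\alpha$. For a parameter $\epsilon$ chosen below, compute via \Cref{thm:small-coarsening} collections $\mathcal{P},\mathcal{Q}$ of size $\bO{\epsilon^{-2}}$; here $s\le 1$ by the $\ell_2$-normalization, so the bound has no dependence on $m$. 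The protocol is then:
\begin{itemize}
\item Alice sends the membership pattern of $\omega_A$ in $\mathcal{P}$.
\item Bob sends the membership pattern of $\omega_B$ in $\mathcal{Q}$.
\item Charlie plays the action maximizing $\E_\dist[r(a;\cdot,\cdot)\mid \text{atom}]$.
\end{itemize}
This uses $|\mathcal{P}|+|\mathcal{Q}| = \bO{\epsilon^{-2}}$ bits. Call this policy $\hat\pi$.

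\textbf{Step 1: $\hat\pi$ loses nothing when moved back to $G$.} The policy $\hat\pi$ is $\sigma(\mathcal{P})\otimes\sigma(\mathcal{Q})$-measurable. By the tower property, on each atom the average of $r(a)$ equals the average of $r_{\mathcal{P},\mathcal{Q}}(a)$, so $\hat\pi(G)=\hat\pi(G_{\mathcal{P},\mathcal{Q}})$ exactly. Moreover, $\hat\pi$ is optimal among all policies in $G_{\mathcal{P},\mathcal{Q}}$. The reason is that the rewards there are constant on atoms, so for any policy $\pi'$ the best action per atom is at least the atom-average of $r_{\mathcal{P},\mathcal{Q}}(\pi'(\cdot))$. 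In particular $\hat\pi(G) \ge \pi_D(G_{\mathcal{P},\mathcal{Q}})$.

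\textbf{Step 2: $\pi_D$ loses little in the coarsened game.} Write $\Delta^a = U^a - U^a_{\mathcal{P},\mathcal{Q}}$. Then
\[
\pi_D(G)-\pi_D(G_{\mathcal{P},\mathcal{Q}})=\sum_{i=1}^N \E\left[\mathds{1}(R_i\times C_i)\,\Delta^{a_i}\right].
\]
Each term is bounded by $\epsilon$ by indistinguishability, which gives loss at most $N\epsilon$. This suffices when $m\ge N$.

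When $m<N$, I would group the rectangles by action. For each action $a$, set $M^a=\sum_{i:a_i=a}\mathds{1}_{R_i}\mathds{1}_{C_i}^T$, a $0/1$ matrix that is a disjoint union of $N_a$ rectangles. The loss is then $\sum_a \langle M^a, W_\dist\circ\Delta^a\rangle$. A cut-norm bound of $\epsilon$ implies $|\langle xy^T, W_\dist\circ\Delta^a\rangle|\le 4\epsilon$ for all sign vectors $x,y$. Grothendieck's inequality upgrades this to $|\langle XY^T, W_\dist\circ\Delta^a\rangle|\le 4K_{\mathrm{Gr}}\epsilon$ for unit-vector factorizations $X,Y$. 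Hence each term is at most $4K_{\mathrm{Gr}}\epsilon\,\gamma_2(M^a)$.

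\textbf{Main obstacle.} The hardest step is showing $\gamma_2(M^a)\le\sqrt{N_a}$, or $\bO{\sqrt{N_a}}$, for a disjoint union of $N_a$ rectangles. A row may lie in many $R_i$, so the naive factorization $x_\omega=(\mathds{1}(\omega\in R_i))_i$ does not have bounded row norms. I would first try to exploit disjointness via a rescaled factorization. If $\gamma_2$ of the whole matrix turns out to be too weak, a fallback is to bound the multiplicative factor from the dual (trace-norm-type) side using disjointness. Granting the bound, Cauchy--Schwarz gives $\sum_a\sqrt{N_a}\le\sqrt{mN}$, so the loss is $\bO{\epsilon\sqrt{N\min(m,N)}}$ after combining with the $N\epsilon$ bound.

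\textbf{Step 3: choosing $\epsilon$ without knowing $N$.} Choosing $\epsilon=\Theta\bigl(\delta/\sqrt{N\min(m,N)}\bigr)$ makes the loss at most $\delta$, so $\hat\pi(G)\ge\alpha-\delta$ with $\bO{N\min(m,N)/\delta^2}$ bits. Since $N$ is not known, the algorithm doubles a guess $L=1,2,4,\dots$ for $N\min(m,N)$. For each guess it computes the coarsening with $\epsilon^{-2}=\Theta(L/\delta^2)$ and evaluates $\hat\pi(G)$ exactly in polynomial time. It stops once the value is at least $\alpha-\delta$. The search stops by the time $L \ge N\min(m,N)$, so it overshoots the needed size by at most a factor $2$. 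Because $N\le n^2$, only $\bO{\log n}$ rounds occur, each of which is $\mathrm{poly}(n,m,1/\delta)$ by \Cref{lem:grid-FK}.
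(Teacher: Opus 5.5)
\paragraph{Gap: the bound $\gamma_2(M^a)\le\sqrt{N_a}$ is assumed, not proved.} Your architecture is the paper's: exact transfer for the membership-pattern protocol together with its optimality in $G_{\mathcal{P},\mathcal{Q}}$ (\Cref{lem:coarsening-has-no-error}), the rectangle-by-rectangle $\epsilon N_\alpha(G)$ bound, a Grothendieck/factorization-norm bound grouped by action, and a doubling search. Steps~1 and~3 are fine.

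The gap is the step you flag and then ``grant''. It is the only source of the $\min(m,N_\alpha(G))$ factor. Without it your argument gives loss $\epsilon N_\alpha(G)$, hence only $\bO{N_\alpha(G)^2/\delta^2}$ bits. That is weaker than the statement whenever $m<N_\alpha(G)$.

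Neither suggested route closes the gap as written. You exhibit no rescaling that achieves $\sqrt{N_a}$. Bounding through the unweighted trace norm of $M^a$ is not dimension-free: for a single rectangle it already equals $\sqrt{\abs{R_i}\abs{C_i}}$. The optimally reweighted trace norm is $\gamma_2$ itself, so that fallback only restates the problem.

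\paragraph{The missing idea: use rank, not disjointness.} Since $M^a=\sum_{i:a_i=a}\mathds{1}_{R_i}\mathds{1}_{C_i}^T$, you have $\rank(M^a)\le N_a$. Disjointness is needed only to make the entries of $M^a$ lie in $\{0,1\}$. The general inequality $\gamma_2(M)\le\max_{x,y}\abs{M(x,y)}\sqrt{\rank(M)}$ (\cite[Lemma 9]{linial2007complexity}) then gives $\gamma_2(M^a)\le\sqrt{N_a}$ immediately. The paper uses exactly this in \Cref{lem:norm-relationships}, phrased with $\norm{M}_{\ell_1\to\ell_\infty}=\max_{x,y}\abs{M(x,y)}$.

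The inequality follows from John's theorem:
\begin{itemize}
\item Write $M=XY$ with inner dimension $r=\rank(M)$.
\item The symmetric convex hull of the columns of $Y$ is full-dimensional in $\mathbb{R}^r$, because $Y$ has rank $r$.
\item So there is an invertible $T$ such that the symmetric convex hull $K$ of the columns of $TY$ satisfies $B_2^r\subseteq K\subseteq\sqrt{r}\,B_2^r$, where $B_2^r$ is the Euclidean unit ball.
\item Use the factorization $M=(XT^{-1})(TY)$. Every column of $TY$ has norm at most $\sqrt r$.
\item The row $u_x$ of $XT^{-1}$ indexed by $x$ satisfies $\norm{u_x}_2=\sup_{z\in B_2^r}\inner{u_x,z}\le\sup_{z\in K}\inner{u_x,z}=\max_y\abs{M(x,y)}$.
\end{itemize}
With this lemma, your Cauchy--Schwarz step yields the $4K_{\mathrm{Gr}}\epsilon\sqrt{mN_\alpha(G)}$ loss, and the rest of your proof goes through as in the paper.
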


\subsection{Transfer Between the Original and Coarsened Games}
\label{sec:planning}

The proof follows a bidirectional argument regarding the utility of optimal policies in the original and coarsened games.
In the forward direction, because $G_{\mathcal{P}, \mathcal{Q}}$ is a coarsening of $G$, the expected utility of any policy based on coarsened information transfers back to $G$ with no loss of utility at all.
In the backwards direction, because $G_{\mathcal{P}, \mathcal{Q}}$ is $\epsilon$-indistinguishable with respect to rectangle tests (equivalently $1$-bit policies), any policy $\pi$ achieves utilities in $G$ and $G_{\mathcal{P}, \mathcal{Q}}$ at most $\epsilon \cdot \min\left(N(\pi), \bO{\sqrt{m N(\pi)}}\right)$ apart.

\subsubsection{Exact Transfer of Utility From Coarsened Observations to the Original Game}

We first show that the utility of any policy based on the coarsened observations --- that is, any coarsening-measurable policy --- in the coarsened game exactly matches its value in the original game. This exact transfer of utility motivates our choice of using coarsenings, as discussed in Section~\ref{sec:1.2}.

\begin{restatable}[Exact Utility Transfer Coarsening $\rightarrow$ Original Game]{lemma}{CoarseningNoError}
    \label{lem:coarsening-has-no-error}
    Let $G$ be a communication game and let $\mathcal{P}$ and $\mathcal{Q}$ be collections of subsets of the observation spaces. Let $\pi: \Omega_A \times \Omega_B \to \acts$ be a policy that is $(\sigma(\mathcal{P}) \otimes \sigma(\mathcal{Q}))$-measurable. Then,
    \begin{equation*}
        \pi(G) = \pi(G_{\mathcal{P}, \mathcal{Q}}).
    \end{equation*}
\end{restatable}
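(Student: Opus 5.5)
The plan is to decompose the value of $\pi$ over the atoms of the product $\sigma$-algebra and use the defining property of conditional expectation atom by atom. Write $\mathcal{F} = \sigma(\mathcal{P}) \otimes \sigma(\mathcal{Q})$. By Definition~\ref{def:sigma-algebra} and the description of product $\sigma$-algebras, its atoms are exactly the products $P_b \times Q_{b'}$. Here $P_b$ ranges over the atoms of $\sigma(\mathcal{P})$ (membership patterns in $\mathcal{P}$) and $Q_{b'}$ over the atoms of $\sigma(\mathcal{Q})$. These products partition $\Omega_A \times \Omega_B$. Since $\pi$ is $\mathcal{F}$-measurable, it is constant on every atom. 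Call its action there $a_{b,b'}$.

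First, expand the value in the original game by the law of total expectation over this finite partition:
\[
\pi(G) = \sum_{(b,b') : \dist(P_b \times Q_{b'})>0} \dist(P_b\times Q_{b'})\, \E_{\dist}\left[r(a_{b,b'};\omega_A,\omega_B) \mid P_b\times Q_{b'}\right].
\]
Atoms of zero probability contribute nothing to either side and can be discarded. The key point is that only one action is ever played inside each atom. So the inner conditional expectation involves only the single fixed reward function $r(a_{b,b'};\cdot,\cdot)$.

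Second, do the same expansion for $\pi(G_{\mathcal{P},\mathcal{Q}})$. By Definition~\ref{def:coarsening}, $r_{\mathcal{P},\mathcal{Q}}(a;\cdot,\cdot)$ is constant on each positive-probability atom, equal to $\E_\dist[r(a;\cdot,\cdot)\mid P_b\times Q_{b'}]$. Both games share the same $\dist$. Hence the conditional expectation of $r_{\mathcal{P},\mathcal{Q}}(a_{b,b'};\cdot,\cdot)$ on the atom equals that same constant. The two sums then agree term by term, giving $\pi(G)=\pi(G_{\mathcal{P},\mathcal{Q}})$.

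The only real subtlety, rather than an obstacle, is to use measurability so that the action is fixed on each atom. This lets the conditional expectation pass through the map $\omega \mapsto r(\pi(\omega);\omega)$. Without measurability, the action could vary within an atom and the equality would fail.

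Equivalently, one can phrase this with the tower property. Write $g(\omega)=\sum_a \mathds{1}(\pi(\omega)=a)\, r(a;\omega)$. Each indicator $\mathds{1}(\pi=a)$ is $\mathcal{F}$-measurable, so it pulls out of $\E[\cdot\mid\mathcal{F}]$. This gives
\[
\E[g]=\E\big[\E[g\mid\mathcal{F}]\big]=\E\Big[\sum_a \mathds{1}(\pi=a)\, r_{\mathcal{P},\mathcal{Q}}(a;\cdot)\Big]=\pi(G_{\mathcal{P},\mathcal{Q}}).
\]
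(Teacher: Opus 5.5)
Your proof is correct and is essentially the paper's argument. The paper writes $\pi(G_{\mathcal{P},\mathcal{Q}})$ as $\E\bigl[\sum_a \mathds{1}(\pi=a)\,\E_\dist[r(a;\cdot,\cdot)\mid\sigma(\mathcal{P})\otimes\sigma(\mathcal{Q})]\bigr]$, pulls the measurable indicators inside the conditional expectation, and applies the tower rule, which is exactly your second formulation; your atom-by-atom expansion is that same tower step written out over the finite partition into atoms $P_b\times Q_{b'}$, with zero-probability atoms correctly discarded.
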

\begin{proof}
    Let $r$ be the reward function of $G$ and $r_{\mathcal{P}, \mathcal{Q}}$ be the reward function of $G_{\mathcal{P}, \mathcal{Q}}$.

    For ease of notation, for any action $a \in \acts$, define $R^a: \Omega_A \times \Omega_B \to \mathbb{R}$ to be the function mapping $(\omega_A, \omega_B) \mapsto r(a; \omega_A, \omega_B)$.

    By definition,
    \begin{equation*}
        r_{\mathcal{P}, \mathcal{Q}}(a; \omega_A, \omega_B) = \E_{\dist}[R^a \mid \sigma\left(\mathcal{P}\right) \otimes \sigma\left(\mathcal{Q}\right)](\omega_A, \omega_B).
    \end{equation*}
    So the expected reward of running $\pi$ on game $G_{\mathcal{P}, \mathcal{Q}}$ is
    \begin{align*}
        \pi(G_{\mathcal{P}, \mathcal{Q}})
        &= \E_{\omega_A, \omega_B \sim \dist}\left[r_{\mathcal{P}, \mathcal{Q}}\left(\pi(\omega_A, \omega_B); \omega_A, \omega_B\right)\right] \\
        &= \E_{\omega_A, \omega_B \sim \dist}\left[\sum_{a \in \acts} \mathds{1}(\pi(\omega_A, \omega_B) = a) \cdot \E_{\dist}[R^a \mid \sigma\left(\mathcal{P}\right) \otimes \sigma\left(\mathcal{Q}\right)](\omega_A, \omega_B)\right]
    \end{align*}
    Because $\pi$ is $(\sigma(\mathcal{P}) \otimes \sigma(\mathcal{Q}))$-measurable, it follows that each indicator $\mathds{1}(\pi(\omega_A, \omega_B) = a)$ is $(\sigma(\mathcal{P}) \otimes \sigma(\mathcal{Q}))$-measurable. For ease of notation, let $\pi^a: \Omega_A \times \Omega_B \to \{0, 1\}$ be the indicator function mapping $(\omega_A, \omega_B) \mapsto \mathds{1}(\pi(\omega_A, \omega_B) = a)$.

    We can take the measurable function into the expectation to get
    \begin{align*}
        \pi(G_{\mathcal{P}, \mathcal{Q}}) &= \E_{\omega_A, \omega_B \sim \dist}\left[\E_{\dist}\left[\sum_{a \in \acts} R^a \cdot \pi^a \mid \sigma\left(\mathcal{P}\right) \otimes \sigma\left(\mathcal{Q}\right)\right](\omega_A, \omega_B)\right].
    \end{align*}
    By the tower rule,
    \begin{align*}
        \pi(G_{\mathcal{P}, \mathcal{Q}})
        &= \E_{\omega_A, \omega_B \sim \dist}\left[\sum_{a \in \acts} R^a \cdot \pi^a\right] \\
        &= \E_{\omega_A, \omega_B \sim \dist}\left[r(\pi(\omega_A, \omega_B); \omega_A, \omega_B)\right] \\
        &= \pi(G). \tag*{\qedhere}
    \end{align*}
\end{proof}

\subsubsection{Approximate Utility Transfer of
any Short Policy between the Coarsened and Original Games
}

Now we show that given two communication games $G$ and $\hat{G}$ that are $\epsilon$-indistinguishable, the expected utility of short communication policies cannot differ too much. The first bound loses linearly in the number of rectangles induced by the protocol. The second bound improves this dependence when the number of actions is smaller than the number of rectangles.
At their core, both results concern the fact that policies communicating $k$ bits of information are composed of simpler one-bit communications, which are the objects of the indistinguishability tests we designed using rectangle actions. Thus, both results can be understood as showing how indistinguishability with respect to one-bit policies can be boosted to indistinguishability with respect to $k$-bit policies.

\begin{restatable}[Weak $\epsilon$-indistinguishability Bound]{lemma}{IndistWeak}
    \label{lem:indist-weak}
    Let $G$ and $\hat{G}$ be two $\epsilon$-indistinguishable communication games. Let $D$ be an arbitrary rectangular partition of $\Omega_A \times \Omega_B$. When $\pi$ is a $D$-measurable policy,
    \begin{equation*}
        \abs{\pi(G) - \pi(\hat{G})} \leq \epsilon \cdot \abs{D}.
    \end{equation*}
    In particular, for an arbitrary communication policy $\pi$,
    \begin{align*}
        \abs{\pi(G) - \pi(\hat{G})} &\leq \epsilon \cdot N(\pi) \\
        &\leq \epsilon \cdot 2^{CC(\pi)}.
    \end{align*}
\end{restatable}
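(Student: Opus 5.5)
The plan is to decompose the value of a $D$-measurable policy into a sum of rectangle-action tests, one per part of $D$, and apply the definition of $\epsilon$-indistinguishability (\Cref{def:indistinguishable}) to each term separately. Write $D = \{R_1 \times C_1, \dots, R_\ell \times C_\ell\}$ with $\ell = \abs{D}$. Since $\pi$ is $D$-measurable, on each rectangle $R_j \times C_j$ it plays some fixed action $a_j$.

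Because the rectangles of $D$ are disjoint and cover $\Omega_A \times \Omega_B$, the indicators $\mathds{1}(\omega_A \in R_j, \omega_B \in C_j)$ sum to one pointwise. Hence
\begin{equation*}
    \pi(G) = \sum_{j=1}^{\ell} \E_{\omega_A, \omega_B \sim \dist}\left[\mathds{1}(\omega_A \in R_j,~\omega_B \in C_j)\, r(a_j; \omega_A, \omega_B)\right],
\end{equation*}
and the same expansion holds for $\pi(\hat{G})$ with $\hat r$ in place of $r$. Here we use that $G$ and $\hat G$ share the same $\dist$, the same observation spaces and the same actions. Subtracting term by term and applying the triangle inequality bounds $\abs{\pi(G) - \pi(\hat G)}$ by a sum of $\ell$ quantities of the form
\begin{equation*}
    \abs{\E\left[\mathds{1}(\omega_A \in R_j, \omega_B \in C_j)\left(r(a_j;\cdot) - \hat r(a_j;\cdot)\right)\right]}.
\end{equation*}
Each of these is at most $\epsilon$ by \Cref{def:indistinguishable}, applied with $R = R_j$, $C = C_j$ and $a = a_j$. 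This gives $\epsilon \abs{D}$.

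For an arbitrary communication policy $\pi$, I would invoke \Cref{prop:communication-tiles}. It supplies a rectangular partition $D$ with $\abs{D} = N(\pi)$ such that $\pi$ is constant on each rectangle, so $\pi$ is $D$-measurable and the first part gives the bound $\epsilon N(\pi)$. The inequality $N(\pi) \leq 2^{CC(\pi)}$ from \Cref{def:communication-policy} then finishes the proof.

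There is no real obstacle here. The only point requiring care is the exact pointwise decomposition of $\pi$'s reward over the parts, which is where disjointness and covering of $D$ are needed.
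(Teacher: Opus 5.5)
Your proposal is correct and follows the paper's proof essentially step for step. You expand $\pi(G)$ and $\pi(\hat G)$ as sums over the rectangles of $D$ with their fixed actions, bound each rectangle-action difference by $\epsilon$ via indistinguishability and the triangle inequality, and then use \Cref{prop:communication-tiles} with $N(\pi) \leq 2^{CC(\pi)}$ for general communication policies.
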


\begin{restatable}[Strong $\epsilon$-indistinguishability Bound]{lemma}{IndistStrong}
    \label{lem:indist-strong}
    Let $G$ and $\hat{G}$ be two $\epsilon$-indistinguishable communication games. Let $D$ be an arbitrary rectangular partition of $\Omega_A \times \Omega_B$. When $\pi$ is a $D$-measurable policy,
    \begin{equation*}
        \abs{\pi(G) - \pi(\hat{G})} \leq 4\epsilon K_{\mathrm{Gr}} \cdot \sqrt{m\abs{D}}.
    \end{equation*}
    Here, $K_{\mathrm{Gr}} \approx 1.7$ is Grothendieck's constant \citep{grothendieck1956resume}.
    In particular, for an arbitrary communication policy $\pi$,
    \begin{align*}
        \abs{\pi(G) - \pi(\hat{G})} &\leq 4\epsilon K_{\mathrm{Gr}} \cdot \sqrt{mN(\pi)} \\
        &\leq 4\epsilon K_{\mathrm{Gr}} \cdot \sqrt{m} \cdot 2^{CC(\pi)/2},
    \end{align*}
\end{restatable}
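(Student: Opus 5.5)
Write $\Delta^a(\omega_A,\omega_B) = r(a;\omega_A,\omega_B) - \hat r(a;\omega_A,\omega_B)$ and let $M^a$ be the $\Omega_A\times\Omega_B$ matrix with $M^a(\omega_A,\omega_B) = \dist(\omega_A,\omega_B)\,\Delta^a(\omega_A,\omega_B)$. By \Cref{def:indistinguishable}, $\epsilon$-indistinguishability says exactly that $\norm{M^a}_\square \le \epsilon$ for every $a$.

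\textbf{Step 1: group the partition by action.} A $D$-measurable $\pi$ plays a fixed action on each rectangle of $D$. For each $a$, let $D_a \subseteq D$ be the rectangles on which $\pi$ plays $a$, and set $n_a = \abs{D_a}$, so that $\sum_a n_a = \abs{D}$. Let $J_a$ be the $0/1$ indicator matrix of $\bigcup_{R\times C\in D_a} R\times C$. Then
\[
\pi(G)-\pi(\hat G) = \sum_{a\in\acts} \inner{M^a, J_a}, \qquad J_a = \sum_{R\times C\in D_a} \mathds{1}_R \mathds{1}_C^T,
\]
where the second identity uses that the rectangles of $D$ are disjoint. It therefore suffices to prove, for each $a$,
\[
\abs{\inner{M^a,J_a}} \le 4K_{\mathrm{Gr}}\,\epsilon\sqrt{n_a}.
\]
Summing these over $a$ and applying Cauchy--Schwarz gives $\sum_a \sqrt{n_a} \le \sqrt{m\sum_a n_a} = \sqrt{m\abs{D}}$, which is the claimed bound. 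The consequence for communication policies then follows from \Cref{prop:communication-tiles}, which supplies a partition with $\abs{D}=N(\pi)\le 2^{CC(\pi)}$.

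\textbf{Step 2: duality between the cut norm and $\gamma_2$.} First compare the cut norm with the $\infty\to1$ norm. Write any $u\in\{-1,1\}^{\Omega_A}$ and $v\in\{-1,1\}^{\Omega_B}$ as differences of indicators of a set and its complement. Expanding $u^T M^a v$ then gives four rectangle sums, so $\norm{M^a}_{\infty\to1}\le 4\norm{M^a}_\square\le 4\epsilon$.

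Next, let $J$ be any matrix with a factorization $J(x,y)=\inner{U(x),V(y)}$ satisfying $\max_x\norm{U(x)}\max_y\norm{V(y)}\le\gamma_2(J)$. Grothendieck's inequality gives
\[
\abs{\sum_{x,y} M(x,y)\inner{U(x),V(y)}} \le K_{\mathrm{Gr}}\,\norm{M}_{\infty\to1}\,\gamma_2(J).
\]
Combining the two, $\abs{\inner{M^a,J_a}}\le 4K_{\mathrm{Gr}}\,\epsilon\,\gamma_2(J_a)$.

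\textbf{Step 3 (main obstacle): bound $\gamma_2(J_a)\le\sqrt{n_a}$.} The naive factorization takes $U(x)=(\mathds{1}_R(x))_{R\times C\in D_a}$ and $V(y)=(\mathds{1}_C(y))_{R\times C\in D_a}$. This only yields $\gamma_2(J_a)\le n_a$, because a single row may lie in many of the row sets. That would reproduce the linear bound of \Cref{lem:indist-weak}, not the square root we need.

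Instead I would use two facts. First, $\operatorname{rank}(J_a)\le n_a$, since $J_a$ is a sum of $n_a$ rank-one matrices. Second, there is the known bound $\gamma_2(J)\le\sqrt{\operatorname{rank}(J)}\,\max_{x,y}\abs{J(x,y)}$ (Linial--Shraibman). Its proof takes a rank-$r$ factorization $J=XY^T$ and applies an invertible linear map $T$, chosen via John's theorem, that puts the rows of $XT$ into a position where their norms are at most $1$. The rows of $YT^{-T}$ then have norms bounded by $\sqrt r$ times the largest entry of $J$.

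Since $J_a$ is $0/1$-valued, this gives $\gamma_2(J_a)\le\sqrt{n_a}$, which completes Step 1. I would verify this John-ellipsoid step carefully, or cite it, since it is the only place the $\sqrt{\cdot}$ improvement enters. The rest is bookkeeping.
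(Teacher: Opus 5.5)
Your proposal is correct and matches the paper's proof essentially step for step. The paper writes $\pi(G)-\pi(\hat G)=\sum_a\inner{U^a-\hat U^a,M^a}_F$ with rank-$\le\abs{D^a}$ indicator matrices, bounds each term through $\norm{\cdot}_{\ell_\infty\to\ell_1}\le 4\norm{\cdot}_\square$, Grothendieck's inequality (phrased as $\gamma^*\le K_{\mathrm{Gr}}\norm{\cdot}_{\ell_\infty\to\ell_1}$), and the Linial--Shraibman bound $\gamma(B)\le\norm{B}_{\ell_1\to\ell_\infty}\sqrt{\rank(B)}$, then applies Cauchy--Schwarz over actions; the only difference is that you sketch the John's-theorem proof of the $\gamma_2$--rank bound, which the paper simply cites.
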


\subsubsection{Proof of the Weak Transfer Bound (Lemma~\ref{lem:indist-weak})}

\begin{proof}[Proof of \Cref{lem:indist-weak}]
    Let $r: \mathcal{A} \times \Omega_A \times \Omega_B \to [-1, 1]$ be the reward function for $G$, and $\hat{r}: \mathcal{A} \times \Omega_A \times \Omega_B \to [-1, 1]$ be the reward function for $\hat{G}$. Let the rectangular partition be $D = \{R_i \times C_i\}_{i = 1}^\ell$ and let $a_i$ be the constant action $\pi$ takes in rectangle $R_i \times C_i$.

    By $\epsilon$-indistinguishability, we know that on each rectangle $R_i \times C_i$ the average utility of taking any action $a \in \mathcal{A}$ in $G$, and therefore especially $a_i$, differs by at most $\epsilon$ from taking the action in $\hat{G}$. Formally,
    \begin{equation*}
        \abs{\sum_{\substack{\omega_A \in R_i\\\omega_B \in C_i}} \dist(\omega_A, \omega_B)\left(r(a_i; \omega_A, \omega_B) - \hat{r}(a_i; \omega_A, \omega_B)\right)} \leq \epsilon.
    \end{equation*}

    The utility of running policy $\pi$ on game $G$ is
    \begin{equation*}
        \pi(G) = \sum_{\substack{\omega_A \in \Omega_A\\\omega_B \in \Omega_B}} \dist(\omega_A, \omega_B) \cdot r(\pi(\omega_A, \omega_B); \omega_A, \omega_B) = \sum_{i = 1}^\ell \sum_{\substack{\omega_A \in R_i\\\omega_B \in C_i}} \dist(\omega_A, \omega_B) \cdot r(a_i; \omega_A, \omega_B).
    \end{equation*}
    Similarly, the utility of running policy $\pi$ on game $\hat{G}$ is
    \begin{equation*}
        \pi(\hat{G}) = \sum_{\substack{\omega_A \in \Omega_A\\\omega_B \in \Omega_B}} \dist(\omega_A, \omega_B) \cdot \hat{r}(\pi(\omega_A, \omega_B); \omega_A, \omega_B) = \sum_{i = 1}^\ell \sum_{\substack{\omega_A \in R_i\\\omega_B \in C_i}} \dist(\omega_A, \omega_B) \cdot \hat{r}(a_i; \omega_A, \omega_B).
    \end{equation*}
    So, by the triangle inequality, and by the bound on the absolute difference in each rectangle, the total difference in utility when running $\pi$ in two $\epsilon$-indistinguishable games is
    \begin{align*}
        \abs{\pi(G) - \pi(\hat{G})} &= \abs{\sum_{i = 1}^\ell \sum_{\substack{\omega_A \in R_i\\\omega_B \in C_i}} \dist(\omega_A, \omega_B) \cdot \left(r(a_i; \omega_A, \omega_B) - \hat{r}(a_i; \omega_A, \omega_B)\right)} \\
        &\leq \sum_{i = 1}^\ell \abs{\sum_{\substack{\omega_A \in R_i\\\omega_B \in C_i}} \dist(\omega_A, \omega_B) \cdot \left(r(a_i; \omega_A, \omega_B) - \hat{r}(a_i; \omega_A, \omega_B)\right)} \\
        &\leq \epsilon \abs{D},
    \end{align*}
    proving the first part of the claim.

    By \Cref{prop:communication-tiles}, there exists a rectangular partition $D = \{R_1 \times C_1, \dots, R_\ell \times C_\ell\}$ with $\ell = N(\pi)$, of the set $\Omega_A \times \Omega_B$ such that $\pi$ takes the constant action $a_i \in \mathcal{A}$ in each $R_i \times C_i$, for all $i \in [\ell]$. Applying the above to this rectangular partition, we get that $\abs{\pi(G) - \pi(\hat{G})} \leq \epsilon \cdot N(\pi) \le \epsilon \cdot 2^{CC(\pi)}$.
\end{proof}

\subsubsection{Proof of the Strong Transfer Bound (Lemma~\ref{lem:indist-strong})}

In the regime where $m$ is smaller than $N(\pi)$, we can show a much stronger result. When two games are $\epsilon$-indistinguishable, then they cannot differ by more than $\bO{\epsilon\sqrt{m N(\pi)}}$, as opposed to $\epsilon N(\pi)$. Proving this is more involved and will require the use of the notion of a factorization norm.

To show this result, we will first introduce the notion of an operator norm.

\begin{definition}[$\ell_a \to \ell_b$ norm]
    Given a matrix $A \in \mathbb{R}^{n_1 \times n_2}$, the $\ell_a \to \ell_b$ operator norm is defined as
    \begin{equation*}
        \norm{A}_{\ell_a \to \ell_b} = \max_{v \in \mathbb{R}^{n_2}:~ \norm{v}_{\ell_a} = 1} \norm{Av}_{\ell_b}.
    \end{equation*}
\end{definition}
    We also need to define the notion of a factorization norm.
    \begin{definition}[Factorization Norm]
        Given a matrix $A \in \mathbb{R}^{n_1 \times n_2}$, the factorization norm $\gamma$ is defined as
        \begin{equation*}
            \gamma(A) = \min_{X, Y :~ A = XY}~ \norm{X}_{\ell_2 \to \ell_\infty}\norm{Y}_{\ell_1 \to \ell_2}.
        \end{equation*}
        The dual factorization norm is defined as
        \begin{equation*}
            \gamma^*(A) = \sup_{B \neq 0} \frac{\abs{\inner{A, B}_F}}{\gamma(B)},
        \end{equation*}
        where $$\inner{A, B}_F = \sum_{\substack{i \in [n_1] \\ j \in [n_2]}} A(i, j) B(i, j)$$ is the Frobenius inner product of $A$ and $B$.
    \end{definition}
With the operator norm, we will show a handy lemma that will allow us to get a square root dependence on the partition number.

\begin{lemma}\citep{alon2004approximating, linial2007complexity}
    \label{lem:norm-relationships}
    For any matrices $A, B \in \mathbb{R}^{n_1 \times n_2}$,
    \begin{equation*}
        \abs{\inner{A, B}_F} \leq 4K_{\mathrm{Gr}} \norm{A}_\square\cdot \norm{B}_{\ell_1 \to \ell_\infty} \cdot \sqrt{ \mathrm{rank}(B)}.
    \end{equation*}
\end{lemma}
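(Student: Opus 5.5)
The plan is to split the Frobenius inner product through the factorization-norm duality. By the definition of $\gamma^*$, every nonzero $B$ satisfies
\[
\abs{\inner{A,B}_F}\le \gamma^*(A)\,\gamma(B).
\]
This reduces the lemma to two bounds:
\[
\text{(i) } \gamma^*(A)\le 4K_{\mathrm{Gr}}\norm{A}_\square, \qquad \text{(ii) } \gamma(B)\le \norm{B}_{\ell_1\to\ell_\infty}\sqrt{\rank(B)}.
\]
Multiplying (i) and (ii) gives the statement. (The case $B=0$ is trivial.)

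For (i), first unpack $\gamma(C)$ for a matrix $C$. Note that $\norm{X}_{\ell_2\to\ell_\infty}$ is the largest Euclidean row norm of $X$, and $\norm{Y}_{\ell_1\to\ell_2}$ is the largest Euclidean column norm of $Y$. Rescaling $X$ and $Y$ shows that $\gamma(C)\le 1$ exactly when $C(i,j)=\inner{u_i,v_j}$ for some vectors with $\norm{u_i}_2,\norm{v_j}_2\le 1$. So $\gamma^*(A)=\sup \abs{\sum_{i,j}A(i,j)\inner{u_i,v_j}}$ over such unit-ball vectors. Grothendieck's inequality bounds this by $K_{\mathrm{Gr}}\max_{x\in\{\pm1\}^{n_1},y\in\{\pm1\}^{n_2}}\abs{x^\top A y}=K_{\mathrm{Gr}}\norm{A}_{\ell_\infty\to\ell_1}$. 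To finish (i), write $x=\mathds{1}_{R}-\mathds{1}_{R^C}$ and $y=\mathds{1}_{C}-\mathds{1}_{C^C}$. Then $x^\top Ay$ splits into four signed rectangle sums, each at most $\norm{A}_\square$ in absolute value, so $\norm{A}_{\ell_\infty\to\ell_1}\le 4\norm{A}_\square$.

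For (ii), which is the Linial--Shraibman bound $\gamma_2(B)\le\sqrt{\rank B}\,\max_{i,j}\abs{B(i,j)}$, first note that $\norm{B}_{\ell_1\to\ell_\infty}=\max_{i,j}\abs{B(i,j)}=:\beta$. Let $r=\rank(B)$ and let $V\subseteq\mathbb{R}^{n_1}$ be the column space of $B$, which is $r$-dimensional. Let $K\subseteq V$ be the symmetric convex hull of the columns $b_j$; this is a symmetric convex body in $V$ because the columns span $V$. By John's theorem there is an ellipsoid $E$ with $E\subseteq K\subseteq\sqrt r\,E$. Take a linear isomorphism $T:V\to\mathbb{R}^r$ sending $E$ to the unit ball, and set $y_j=Tb_j$. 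Since $b_j\in\sqrt r E$, we get $\norm{y_j}_2\le\sqrt r$.

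For each row $i$, consider the functional $v\mapsto e_i^\top v$ on $V$. It is bounded by $\beta$ on every column $b_j$, hence on $K$, hence on $E$. So there is $x_i\in\mathbb{R}^r$ with $e_i^\top v=\inner{x_i,Tv}$ for all $v\in V$, and $\norm{x_i}_2\le\beta$. This gives $B(i,j)=\inner{x_i,y_j}$, that is, $B=XY$ with $\norm{X}_{\ell_2\to\ell_\infty}\le\beta$ and $\norm{Y}_{\ell_1\to\ell_2}\le\sqrt r$, which proves (ii).

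The main obstacle is step (ii), specifically getting $\sqrt{r}$ rather than $r$ or $\sqrt{n}$. The John-ellipsoid argument is where that factor comes from. The care needed is in working inside the $r$-dimensional space $V$ rather than in $\mathbb{R}^{n_1}$, and in checking that the dual vectors $x_i$ are bounded using only the inclusion $E\subseteq K$. Steps (i) and the final combination are routine given Grothendieck's inequality, which is cited.
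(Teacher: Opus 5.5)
Your proposal is correct and follows the paper's route: the duality bound $\lvert\langle A,B\rangle_F\rvert\le\gamma^*(A)\,\gamma(B)$, split into $\gamma^*(A)\le K_{\mathrm{Gr}}\norm{A}_{\ell_\infty\to\ell_1}\le 4K_{\mathrm{Gr}}\norm{A}_\square$ and $\gamma(B)\le\norm{B}_{\ell_1\to\ell_\infty}\sqrt{\rank(B)}$. The only difference is that the paper cites these three inequalities (Linial--Shraibman's Equation 4, Alon--Naor's Lemma 3.1, and Linial et al.'s Lemma 9), whereas you prove them yourself, using Grothendieck's inequality with the four-rectangle sign splitting and a John-ellipsoid factorization, and those proofs are sound.
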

\begin{proof}
We start by showing $\gamma^*(A) \leq 4K_{\mathrm{Gr}} \norm{A}_\square$. By \cite[Equation 4]{linial2007lower},
    \begin{equation*}
        \gamma^*(A) \leq K_{\mathrm{Gr}} \norm{A}_{\ell_\infty \to \ell_1}.
    \end{equation*}

    By \cite[Lemma 3.1]{alon2004approximating},
    \begin{equation*}
        \norm{A}_{\ell_\infty \to \ell_1} \leq 4\norm{A}_\square.
    \end{equation*}
    Therefore,
    \begin{equation}
        \label{eq:norm-relationship-1}
        \gamma^*(A) \leq 4K_{\mathrm{Gr}} \norm{A}_\square.
    \end{equation}

    Next, we note that
    \begin{equation}
        \label{eq:norm-relationship-2}
        \gamma(B) \leq \norm{B}_{\ell_1 \to \ell_\infty}\sqrt{\mathrm{rank}(B)}
    \end{equation}
    is proven in \cite[Lemma 9]{linial2007complexity}.\footnote{We note that \cite[Lemma 9]{linial2007complexity} state their bound as $\gamma(B)^2 \leq \norm{B}_{\ell_1 \to \ell_\infty}\mathrm{rank}(B)$ which appears to be a typo missing the squared norm.}
    To prove the lemma, by duality, we know that
    \begin{equation*}
        \abs{\inner{A, B}_F} \leq \gamma^*(A) \cdot \gamma(B).
    \end{equation*}
    Upper bounding the factorization norm and its dual using \Cref{eq:norm-relationship-1} and \Cref{eq:norm-relationship-2} proves the lemma.
    \begin{equation*}
        \abs{\inner{A, B}_F} \leq 4K_{\mathrm{Gr}} \norm{A}_\square\cdot \norm{B}_{\ell_1 \to \ell_\infty}\sqrt{\mathrm{rank}(B)}. \tag*{\qedhere}
    \end{equation*}
\end{proof}

With this handy lemma, we are ready to prove \Cref{lem:indist-strong}.

\begin{proof}[Proof of \Cref{lem:indist-strong}]
    Let $r: \mathcal{A} \times \Omega_A \times \Omega_B \to [-1, 1]$ be the reward function for $G$, and $\hat{r}: \mathcal{A} \times \Omega_A \times \Omega_B \to [-1, 1]$ be the reward function for $\hat{G}$. Let $D=\{R_i\times C_i\}_{i=1}^\ell$ be a rectangular partition of $\Omega_A \times \Omega_B$.

    For any action $a \in \acts$, let $D^a$ be the set of rectangles on which $\pi$ plays action $a$. Let $M^a \in \{0, 1\}^{\abs{\Omega_A} \times \abs{\Omega_B}}$ be the matrix indicating whether action $a$ is played when observing $\omega_A, \omega_B$:
    \begin{equation*}
        M^a(\omega_A, \omega_B) = \mathds{1}(\pi(\omega_A, \omega_B) = a).
    \end{equation*}
    Let $U^a \in \mathbb{R}^{\abs{\Omega_A} \times \abs{\Omega_B}}$ be the matrix of the reward of action $a$ in communication game $G$, scaled by the probability of the observation:
    \begin{equation*}
        U^a(\omega_A, \omega_B) = \dist(\omega_A, \omega_B) \cdot r(a; \omega_A, \omega_B).
    \end{equation*}
    Define $\hat{U}^a \in \mathbb{R}^{\abs{\Omega_A} \times \abs{\Omega_B}}$ for game $\hat{G}$ analogously.
    Notice that the expected reward of playing policy $\pi$ on game $G$ can be written as
    \begin{align*}
        \pi(G) &= \sum_{\omega_A, \omega_B} \dist(\omega_A, \omega_B) r(\pi(\omega_A, \omega_B); \omega_A, \omega_B) \\
        &= \sum_{a \in \acts} \sum_{\omega_A, \omega_B} \dist(\omega_A, \omega_B) r(a; \omega_A, \omega_B) \cdot \mathds{1}(\pi(\omega_A, \omega_B) = a) \\
        &= \sum_{a \in \acts} \sum_{\omega_A, \omega_B} U^a(\omega_A, \omega_B) \cdot M^a(\omega_A, \omega_B) \\
        &= \sum_{a \in \acts} \inner{U^a, M^a}_F.
    \end{align*}
    By the same argument,
    \begin{equation*}
        \pi(\hat{G}) = \sum_{a \in \acts} \inner{\hat{U}^a, M^a}_F.
    \end{equation*}
    Therefore,
    \begin{equation}
        \label{eq:sum-over-actions}
        \abs{\pi(G) - \pi(\hat{G})} = \abs{\sum_{a \in \acts} \inner{U^a - \hat{U}^a, M^a}_F} \leq \sum_{a \in \acts} \abs{\inner{U^a - \hat{U}^a, M^a}_F}.
    \end{equation}

    By \Cref{lem:norm-relationships}, for each action $a$, we can upper bound the right-hand side by
    \begin{equation}
        \label{eq:norm-relationship}
        \abs{\inner{U^a - \hat{U}^a, M^a}_F} \leq 4K_{\mathrm{Gr}} \norm{U^a - \hat{U}^a}_\square \cdot \norm{M^a}_{\ell_1 \to \ell_\infty}\cdot \sqrt{ \mathrm{rank}(M^a)}.
    \end{equation}
    Notice that $M^a$ can be written as the sum of $\abs{D^a}$ rectangles
    \begin{equation*}
        M^a = \sum_{R_i \times C_i \in D^a} \mathds{1}_{R_i}\mathds{1}_{C_i}^T,
    \end{equation*}
    where $\mathds{1}_{R_i} \in \{0, 1\}^{n_A}$ and $\mathds{1}_{C_i} \in \{0, 1\}^{n_B}$ are the indicator vectors of sets $R_i$ and $C_i$. This equality holds because the rectangles are disjoint. Therefore, $\rank(M^a) \leq \abs{D^a}$.

    Moreover, notice that $\norm{M^a}_{\ell_1 \to \ell_\infty} \leq 1$. This is because for any vector $v \in \mathbb{R}^{n_B}$,
    \begin{equation*}
        \norm{M^a v}_\infty = \max_{i \in [n_A]} \abs{\sum_{j = 1}^{n_B} M^a(i, j) v(j)} \leq \max_{i \in [n_A]} \sum_{j = 1}^{n_B} M^a(i, j)\abs{v(j)} \leq \sum_{j = 1}^{n_B} \abs{v(j)} \leq \norm{v}_1,
    \end{equation*}
    where the first inequality follows by the triangle inequality, and the second follows because $M^a(i, j) \in \{0, 1\}$. Therefore,
    \begin{equation*}
        \norm{M^a}_{\ell_1 \to \ell_\infty} = \sup_{v \in \mathbb{R}^{n_B}:~ \norm{v}_1 = 1} \norm{M^av}_\infty \leq 1.
    \end{equation*}
    Substituting this back into \Cref{eq:norm-relationship}, we get
     \begin{equation*}
        \abs{\inner{U^a - \hat{U}^a, M^a}_F} \leq 4K_{\mathrm{Gr}} \norm{U^a - \hat{U}^a}_\square \cdot \sqrt{\abs{D^a}}.
    \end{equation*}
    By the definition of $\epsilon$-indistinguishability, we can show $\norm{U^a - \hat{U}^a}_\square \leq \epsilon$. To see this, let $S \times T$ be the rectangle that realizes the cut norm in $U^a - \hat{U}^a$. By definition,
    \begin{align*}
        \norm{U^a - \hat{U}^a}_\square
        &= \abs{\sum_{\substack{\omega_A \in S \\ \omega_B \in T}} (U^a(\omega_A, \omega_B) - \hat{U}^a(\omega_A, \omega_B))} \\
        &= \abs{\sum_{\substack{\omega_A \in S \\ \omega_B \in T}} \dist(\omega_A, \omega_B) \cdot \left(r(a; \omega_A, \omega_B) - \hat{r}(a;\omega_A, \omega_B)\right)} \\
        &\leq \epsilon.
    \end{align*}
    All in all, we can say that
    \begin{equation}
        \label{eq:last-part}
        \abs{\inner{U^a - \hat{U}^a, M^a}_F} \leq 4\epsilon K_{\mathrm{Gr}} \sqrt{\abs{D^a}}.
    \end{equation}
    Combining everything,
    \begin{align*}
        \abs{\pi(G) - \pi(\hat{G})}
        &\leq \sum_{a \in \acts} \abs{\inner{U^a - \hat{U}^a, M^a}_F} && (\text{By \Cref{eq:sum-over-actions}}) \\
        &\leq \sum_{a \in \acts} 4\epsilon K_{\mathrm{Gr}} \sqrt{\abs{D^a}} && (\text{By \Cref{eq:last-part}}) \\
        &= 4\epsilon K_{\mathrm{Gr}}  \sum_{a \in \acts} \sqrt{\abs{D^a}} \\
        &\leq 4\epsilon K_{\mathrm{Gr}}\sqrt{m\sum_{a \in \acts}\abs{D^a}} && (\text{By Cauchy--Schwarz}) \\
        &= 4\epsilon K_{\mathrm{Gr}} \sqrt{m \abs{D}},
    \end{align*}
    as claimed!

    For communication policy $\pi$, let $D = \{R_1 \times C_1, \dots, R_k \times C_k\}$ be the rectangular partition associated with policy $\pi$, so that $\pi$ takes a constant action in each $R_i \times C_i$ rectangle. By \Cref{prop:communication-tiles}, $k = N(\pi) \leq 2^{CC(\pi)}$. Applying the above, we get that $\abs{\pi(G) - \pi(\hat{G})} \leq 4 \epsilon K_{\mathrm{Gr}} \cdot \sqrt{m N(\pi)} \leq 4\epsilon K_{\mathrm{Gr}} \sqrt{m} \cdot 2^{CC(\pi)/2}$.
\end{proof}

\subsection{A Provably Effective Communication Protocol}
\label{sec:one-way}

We can now specify an algorithm for communication that is computationally efficient, achieves constant utility, and sends a small number of bits when the communication complexity of the game is small.

We first compute $\mathcal{P} \times \mathcal{Q}$ for the communication game $G$ that makes the coarsened game $G_{\mathcal{P}, \mathcal{Q}}$ $\epsilon$-indistinguishable from the original game $G$. Then we have Alice send which subsets in $\mathcal{P}$ her observation lies in using $\bO{\epsilon^{-2}}$ bits, and Bob send which subsets of $\mathcal{Q}$ his observation lies in using $\bO{\epsilon^{-2}}$ bits. For any $\delta > 0$, when $\epsilon$ is appropriately set to $\bO{\delta/\min\left(N_\alpha(G), \sqrt{mN_\alpha(G)}\right)}$, this achieves a utility of $\alpha - \delta$ in $\bO{N_\alpha(G) \min(m, N_\alpha(G))/\delta^2}$ bits.

\begin{algorithm}[h]
\caption{Effective Communication}
\label{alg:One-Way-Communication}
\begin{algorithmic}[1]
\Require A fidelity parameter $\epsilon$, and a communication game $G$.
\State Use \Cref{thm:small-coarsening} to compute a collection $\mathcal{P} \times \mathcal{Q}$ of subsets of $\Omega_A \times \Omega_B$ such that $G_{\mathcal{P}, \mathcal{Q}}$ is $\epsilon$-indistinguishable from $G$. Write $\mathcal{P} = \{P_1, \dots, P_k\}$ and $\mathcal{Q} = \{Q_1, \dots, Q_k\}$ where $k \in \bO{\epsilon^{-2}}$.
\State Alice sends the binary string $b^A \in \{0, 1\}^k$ such that for all $i \in [k]$, $b^A_i = \mathds{1}(\omega_A \in P_i)$.
\State Bob sends the binary string $b^B \in \{0, 1\}^k$ such that for all $i \in [k]$, $b^B_i = \mathds{1}(\omega_B \in Q_i)$.
\State Charlie plays an action $a^*(b^A,b^B)$ maximizing expected utility given all available information
\begin{equation*}
    a^*(b^A,b^B) \in \argmax_{a \in \acts} \E_{\omega_A, \omega_B \sim \dist}\left[r(a; \omega_A, \omega_B) \mid b^A, b^B\right].
\end{equation*}
\end{algorithmic}
\end{algorithm}

At a high level, $\epsilon$-indistinguishability allows us to take the $k$-bit communication policy achieving the optimal utility of $\alpha^*$ in $G$ and use it to say that in $G_{\mathcal{P}, \mathcal{Q}}$, the optimal policy must achieve utility at least $\alpha^* - \epsilon \cdot 2^k$. By construction, getting perfect utility in $G_{\mathcal{P}, \mathcal{Q}}$ will be easy, only requiring $\bO{\epsilon^{-2}}$ bits. Finally, we will use the calibration condition afforded to us by the fact that we are planning in a coarsened game to say that this optimal policy in $G_{\mathcal{P}, \mathcal{Q}}$ will achieve exactly the same utility in $G$, without having to suffer the very large loss of $\epsilon \cdot 2^{\epsilon^{-2}}$ utility that $\epsilon$-indistinguishability alone would otherwise suggest is necessary.

\begin{theorem}[Tractable Efficient Communication]
    \label{thm:one-way}
    For any $\epsilon > 0$ and $\alpha \in [-1, 1]$, running \Cref{alg:One-Way-Communication} takes time $\mathrm{poly}(\epsilon^{-1}, n, m)$, sends $\bO{\epsilon^{-2}}$ bits, and achieves a utility of at least
    \begin{equation*}
        \alpha - \epsilon \cdot \min\left(N_\alpha(G), 4K_{\mathrm{Gr}}\sqrt{mN_\alpha(G)}\right).
    \end{equation*}
\end{theorem}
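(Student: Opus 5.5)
The plan follows the three-step outline in \Cref{sec:1.2}, with the runtime and bit count handled separately. \textbf{Resources first.} By \Cref{thm:small-coarsening}, $\mathcal{P},\mathcal{Q}$ with $|\mathcal{P}|,|\mathcal{Q}| = k \in \bO{\epsilon^{-2}}$ can be found in $\mathrm{poly}(n,m,\epsilon^{-1})$ time. Alice and Bob then send $k$ bits each, for $\bO{\epsilon^{-2}}$ bits in total. For Charlie's rule, I would use the three-pass computation described after \Cref{def:step-function-approximations}. It computes, for every realized pair $(b^A,b^B)$ of positive probability and every action $a$, the conditional average $\E_\dist[r(a;\cdot,\cdot)\mid b^A,b^B]$. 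There are at most $n^2$ realized patterns, so Charlie's argmax table is built in $\mathrm{poly}(n,m,\epsilon^{-1})$ time.

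\textbf{Utility: backward transfer.} If $N_\alpha(G)=\infty$ the bound is vacuous, so assume it is finite. Take a rectangular partition $D$ with $|D| = N_\alpha(G)$ and a $D$-measurable policy $\pi^*$ with $\pi^*(G)\ge\alpha$. By \Cref{thm:small-coarsening}, $G$ and $G_{\mathcal{P},\mathcal{Q}}$ are $\epsilon$-indistinguishable. The partition forms of \Cref{lem:indist-weak} and \Cref{lem:indist-strong} apply to any $D$-measurable policy, not only to tree protocols. Taking the better of the two gives
\[
\pi^*(G_{\mathcal{P},\mathcal{Q}}) \ge \alpha - \epsilon\min\left(N_\alpha(G), 4K_{\mathrm{Gr}}\sqrt{mN_\alpha(G)}\right).
\]

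\textbf{Optimality in the coarsened game.} Let $\hat\pi$ be the policy of \Cref{alg:One-Way-Communication}, so $\hat\pi(\omega_A,\omega_B)=a^*(b^A,b^B)$. In $G_{\mathcal{P},\mathcal{Q}}$ the reward $r_{\mathcal{P},\mathcal{Q}}(a;\cdot,\cdot)$ is constant on each atom of $\sigma(\mathcal{P})\otimes\sigma(\mathcal{Q})$. Its value there is exactly $\E_\dist[r(a;\cdot,\cdot)\mid b^A,b^B]$, since the atoms are precisely the pattern classes. Hence $\hat\pi$ picks a pointwise maximizer of $r_{\mathcal{P},\mathcal{Q}}(a;\omega_A,\omega_B)$ over $a$, for every $(\omega_A,\omega_B)$ of positive probability. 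Consequently $\hat\pi(G_{\mathcal{P},\mathcal{Q}})\ge\alpha^*(G_{\mathcal{P},\mathcal{Q}})\ge\pi(G_{\mathcal{P},\mathcal{Q}})$ for any policy $\pi$, and in particular for $\pi^*$.

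\textbf{Forward transfer.} $\hat\pi$ depends only on $(b^A,b^B)$, so it is $\sigma(\mathcal{P})\otimes\sigma(\mathcal{Q})$-measurable. \Cref{lem:coarsening-has-no-error} then gives $\hat\pi(G)=\hat\pi(G_{\mathcal{P},\mathcal{Q}})$ exactly, and chaining the three inequalities yields the claim.

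The only delicate step is the optimality step: one must check that the argmax of conditional expectations computed from the original $r$ coincides with the pointwise best response in $G_{\mathcal{P},\mathcal{Q}}$. This holds by the definition of the coarsening, with zero-probability atoms irrelevant to the expectation. I would also confirm that both transfer lemmas are applied in their partition form, so that $N_\alpha(G)$ appears rather than $2^{CC_\alpha(G)}$.
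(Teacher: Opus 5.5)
Your proposal is correct and follows the paper's proof step for step. It transfers the $D$-measurable $\pi^*$ backward via \Cref{lem:indist-weak} and \Cref{lem:indist-strong}, uses the optimality of Charlie's conditional argmax in $G_{\mathcal{P},\mathcal{Q}}$, and transfers forward exactly via \Cref{lem:coarsening-has-no-error}; your explicit accounting of how Charlie's argmax table is computed in polynomial time is a detail the paper leaves implicit.
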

\begin{proof}
    By \Cref{thm:small-coarsening}, the protocol can be run in $\textrm{poly}(\epsilon^{-1}, n, m)$ time. Because $\mathcal{P}$ and $\mathcal{Q}$ contain no more than $\bO{\nicefrac{1}{\epsilon^2}}$ sets, it only takes $\bO{\nicefrac{1}{\epsilon^2}}$ bits to communicate whether or not the observation lies in each part. It remains to show that the policy achieves high utility on average.

    Let $\mathcal{P} = \{P_1, \dots, P_\ell\}, \mathcal{Q} = \{Q_1, \dots, Q_\ell\}$ be the subsets of $\Omega_A$ and $\Omega_B$ produced by running \Cref{alg:One-Way-Communication}. By definition, there exists a rectangular partition $D$ of $\Omega_A \times \Omega_B$ with $\abs{D} \leq N_\alpha(G)$ such that a $D$-measurable policy $\pi^*$ achieves utility at least $\alpha$ in $G$. By \Cref{lem:indist-weak}, $\pi^*$ achieves a utility of at least $\alpha - \epsilon N_\alpha(G)$ in $G_{\mathcal{P}, \mathcal{Q}}$ and by \Cref{lem:indist-strong}, a utility of at least $\alpha - 4\epsilon K_{\mathrm{Gr}} \sqrt{mN_\alpha(G)}$ in $G_{\mathcal{P}, \mathcal{Q}}$.

    Let $\pi$ be the policy in \Cref{alg:One-Way-Communication} that has Alice and Bob send which subsets of $\mathcal{P}$ and $\mathcal{Q}$ contain their observations, then play the action that achieves the highest utility on average conditioned on this information. To continue, we will show that $\pi$ is perfect in $G_{\mathcal{P}, \mathcal{Q}}$, and therefore achieves utility at least as high as $\pi^*$ in the coarsened game.

    \begin{claim}
        \label{claim:claim-1-protocol}
        $\pi$ achieves optimal utility in the coarsened game $G_{\mathcal{P}, \mathcal{Q}}$. In particular, compared to $\pi^*$,
        $$\pi(G_{\mathcal{P}, \mathcal{Q}}) \geq \pi^*(G_{\mathcal{P}, \mathcal{Q}}).$$
    \end{claim}
    \begin{proof}
        Run $\pi$ on game $G_{\mathcal{P}, \mathcal{Q}}$, and let $\omega_A, \omega_B$ be the observations given to Alice and Bob at the start of the game.
        Because Charlie is taking the action that maximizes utility conditioned on which subsets in $\mathcal{P}, \mathcal{Q}$ that $\omega_A, \omega_B$ are contained in, it follows that the utility that Charlie achieves is
        \begin{equation*}
            \max_{a \in \acts} \E[r(a; \cdot, \cdot) \mid \sigma(\mathcal{P}) \otimes \sigma(\mathcal{Q})](\omega_A, \omega_B).
        \end{equation*}
        By the definition of a coarsening, it follows that the expectation is the reward function of the coarsened game
        \begin{equation*}
            \max_{a \in \acts} r_{\mathcal{P}, \mathcal{Q}}(a; \omega_A, \omega_B).
        \end{equation*}
        Thus, Charlie is taking the optimal action in the coarsened game after the communication policy $\pi$ terminates. Because it is optimal, $\pi$ must perform at least as well as $\pi^*$ in $G_{\mathcal{P}, \mathcal{Q}}$. That is,
        \begin{equation*}
            \pi(G_{\mathcal{P}, \mathcal{Q}}) \geq \pi^*(G_{\mathcal{P}, \mathcal{Q}}). \qedhere
        \end{equation*}
    \end{proof}
    For the second part of the proof, we will show that the utility of $\pi$ in $G_{\mathcal{P}, \mathcal{Q}}$ matches the utility of $\pi$ in $G$ \textit{exactly}. Because the action played by policy $\pi$ on observation pair $\omega_A, \omega_B$ only depends on the parts of $\mathcal{P}$ and $\mathcal{Q}$ that $\omega_A$ and $\omega_B$ lie in, it follows that $\pi$ is $(\sigma(\mathcal{P}) \otimes \sigma(\mathcal{Q}))$-measurable. Therefore, by \Cref{lem:coarsening-has-no-error}, $\pi(G_{\mathcal{P}, \mathcal{Q}}) = \pi(G)$.

    All in all, \Cref{alg:One-Way-Communication} outputs, in polynomial time, a communication policy $\pi$ that sends $\bO{\nicefrac{1}{\epsilon^2}}$ bits and achieves utility
    \begin{align*}
        \pi(G) &= \pi(G_{\mathcal{P}, \mathcal{Q}}) && \text{(By \Cref{lem:coarsening-has-no-error})} \\
             &\geq \pi^*(G_{\mathcal{P}, \mathcal{Q}}) && \text{(By \Cref{claim:claim-1-protocol})} \\
             &\geq \pi^*(G) - \epsilon \cdot \min\left(N_\alpha(G), 4 K_{\mathrm{Gr}} \sqrt{m N_\alpha(G)}\right). && \text{(By \Cref{lem:indist-weak} and \Cref{lem:indist-strong})}
    \end{align*}
    This completes the proof.
\end{proof}

From here, to complete the proof of the protocol, we only need to instantiate \Cref{thm:one-way} with the correct choice of $\epsilon$. To set $\epsilon$ appropriately in one shot, we need to know $N_\alpha(G)$. An algorithm for computing this exactly would give an algorithm for approximating the communication complexity. Computing the communication complexity exactly is known to be hard under cryptographic assumptions \citep{kushilevitz_complexity_2009}, and recently has been discovered to be NP-hard \citep{gaspers2025np, hirahara2025communication}. Nevertheless, we can use a doubling trick and achieve the same utility without this privileged information, suffering no more than a log factor in $N_\alpha(G)$ in running time, and a constant multiplicative factor in the number of bits sent.

We are finally ready to prove \Cref{thm:comm-exists}.
\begin{proof}[Proof of \Cref{thm:comm-exists}]
    When $\epsilon$ is set to a value at most $$\frac{\delta}{\min\left(4K_{\mathrm{Gr}} \sqrt{N_\alpha(G)m},~ N_\alpha(G)\right)},$$ the utility achieved by \Cref{thm:one-way} is at least
    \begin{equation*}
        \alpha -\frac{\delta}{\min(4K_{\mathrm{Gr}} \sqrt{N_\alpha(G)m},~ N_\alpha(G))}\cdot \min\left(4K_{\mathrm{Gr}} \sqrt{N_\alpha(G)m},~ N_\alpha(G)\right) = \alpha - \delta.
    \end{equation*}
    This uses $$\bO{\frac{1}{\epsilon^2}} = \bO{\frac{N_\alpha(G) \cdot \min(m, N_\alpha(G))}{\delta^2}}$$ bits.

    Computing $N_\alpha(G)$ is hard, so instead, $\epsilon$ can initially be set to $\nicefrac{1}{2}$, and the protocol is simulated repeatedly with $\epsilon$ halving each time. The average utility of the protocol can always be computed in $\bO{n^2m}$ time by simply running the protocol for each observation pair. We can keep halving $\epsilon$ until the average utility is at least $\alpha - \delta$. Once a small enough choice of $\epsilon$ is found so that in expectation, the protocol results in $\alpha - \delta$ utility, the agents send bits by running the protocol with that choice of $\epsilon$. In the worst case, this uses an $\epsilon$ that is half as small as it needs to be, meaning a constant-factor blow up in the bit complexity.
    This only requires $$\log\left(\frac{\min\left(4K_{\mathrm{Gr}} \sqrt{N_\alpha(G)m},~ N_\alpha(G)\right)}{\delta}\right) \in \bO{\log(nm/\delta)}$$ repetitions of the protocol to be simulated, so the choice of $\epsilon$ can be computed in polynomial time.
\end{proof}

%% file: sections/05-Hardness-of-Communication.tex
\section{The Computational Intractability of Perfect Communication}
\label{sec:hardness}

We now provide evidence that the communication problem in its most general form is intractable. Hardness results for the problem of computing the numerical value of the communication complexity have been given in the literature \citep{kushilevitz_complexity_2009, ilango2020np, hirahara2025communication, gaspers2025np}. A particularly strong hardness result available is the following.

\begin{proposition}\citep[Corollary 34]{ilango2020np}
    Unless $\mathrm{P} = \mathrm{NP}$, for any $\epsilon \in (0, 1)$, there is no algorithm that can, in $\mathrm{poly}(n, m)$ time, output a communication policy for a communication game $G$ achieving perfect utility in less than $(1 - \epsilon)\log(n) + CC(G)$ bits.
\end{proposition}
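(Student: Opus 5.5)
The plan is to derive this as a direct reduction from the hardness of approximating deterministic communication complexity proved in \cite{ilango2020np}. I would not reprove that hardness here. The idea is to show that any polynomial-time algorithm beating the stated bit bound for communication games would let us approximate the communication complexity of an explicitly given (partial) function within the additive gap that \cite{ilango2020np} shows is NP-hard to achieve.

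\textbf{Step 1: from functions to games.} Given the truth table of a function $f:\Omega_A\times\Omega_B\to\mathcal{O}$, possibly defined only on a domain $\mathcal{D}\subseteq\Omega_A\times\Omega_B$, I build the communication game $G_f$ as follows:
\begin{itemize}
    \item the action set is $\acts=\mathcal{O}$;
    \item the distribution $\dist$ is uniform on $\mathcal{D}$;
    \item the reward is $r(a;\omega_A,\omega_B)=\mathds{1}(a=f(\omega_A,\omega_B))$.
\end{itemize}
These rewards are one-hot, so they satisfy the $\ell_2$-normalization, as already noted in the footnote of the preliminaries. The construction takes polynomial time in the truth-table size.

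In $G_f$ we have $\alpha^*(G_f)=1$. A deterministic policy $\pi$ attains value $1$ if and only if $\pi(\omega_A,\omega_B)=f(\omega_A,\omega_B)$ for every $(\omega_A,\omega_B)$ in the support $\mathcal{D}$, since each miss costs $1/|\mathcal{D}|>0$. Hence perfect-utility policies for $G_f$ are exactly the deterministic protocols computing $f$ on $\mathcal{D}$, and $CC(G_f)=D(f)$, the deterministic communication complexity of $f$. Randomization does not help, by the earlier remark that randomized policies are mixtures of deterministic ones.

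\textbf{Step 2: the gap argument.} Suppose algorithm $\mathcal{M}$ always outputs, in $\mathrm{poly}(n,m)$ time, a perfect-utility policy for $G$ of length $L<(1-\epsilon)\log n+CC(G)$. Run $\mathcal{M}$ on $G_f$ and read off $L$. Correctness of the output policy gives $D(f)\le L$, and the guarantee on $\mathcal{M}$ gives $L<D(f)+(1-\epsilon)\log n$. So $L$ separates the instances with $D(f)\le k$ from those with $D(f)>k+(1-\epsilon)\log n$. Corollary 34 of \cite{ilango2020np} states that this gap problem is NP-hard, so $\mathcal{M}$ would give $\mathrm{P}=\mathrm{NP}$.

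\textbf{Main obstacle: matching the parameters.} The delicate step is aligning the parameters and model of the cited corollary with ours. I would check three things.
\begin{itemize}
    \item Whether the hard instances are partial or total functions. A partial function is handled by putting $\dist$ on the domain $\mathcal{D}$ only.
    \item Whether ``$n$'' refers to the side length, as in our $n=\max(n_A,n_B)$, or to the truth-table size $n_An_B$. In the latter case the $\log$ of the truth-table size is about $2\log n$, and the constant in $(1-\epsilon)\log n$ must be re-derived.
    \item That the number of outputs $m$ stays polynomial, which holds because $m\le |\mathcal{D}|$.
\end{itemize}
If the cited gap is stated in terms of the partition number rather than $CC$, I would use \Cref{prop:communication-tiles} to translate between the two, at the cost of the logarithm, and check that the additive $(1-\epsilon)\log n$ slack survives this translation.
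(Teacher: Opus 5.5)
Your proposal is correct and takes essentially the paper's route. The paper gives no argument beyond importing Corollary 34 of \cite{ilango2020np}, and your embedding of a partial function as a game (one-hot rewards, $\dist$ uniform on the domain, i.e.\ exactly the graph games of \Cref{def:induced-graph} with $n=|V|$ the side length) is the implicit translation: perfect-utility policies are then exactly protocols for $f$, so an output of length $L<CC(G)+(1-\epsilon)\log n$ is an additive approximation of $D(f)$, which the corollary rules out. Your partition-number fallback would not preserve the additive slack for general games, but on the coloring instances behind the cited result it is harmless, since there $\log N(G)$ and $CC(G)$ differ by less than $2$ by \Cref{lem:graph-reduction}.
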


Using the same construction as \cite[Theorem 32]{ilango2020np}, we can strengthen the argument to show that the required number of bits communicated must blow up exponentially on games with constant communication complexity.

\begin{theorem}[Fixed Parameter Lower Bound of Perfect Play in Communication Games]
    \label{thm:perfect-hardness}
    Unless $\mathrm{P} = \mathrm{NP}$, for any $\epsilon \in (0, \nicefrac{1}{2})$ and $k \geq 3$, every algorithm that runs in $\mathrm{poly}(n, m)$ time and outputs on communication games of communication complexity $CC(G) \leq k$ a communication policy achieving perfect utility must output, in the worst case, policies of length at least
    \begin{equation*}
        \frac{1}{2}\cdot\left(\frac{1}{2} - \epsilon\right) \cdot 2^{CC(G)}
    \end{equation*}
    bits, and at least
    \begin{equation*}
        \frac{1}{4}\cdot\left(\frac{1}{2} - \epsilon\right)\cdot N(G)
    \end{equation*}
    bits.
\end{theorem}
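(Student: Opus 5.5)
The plan is a reduction from approximate graph colouring, reusing the gadget of \cite[Theorem 32]{ilango2020np}, and arguing that any short, efficiently computed perfect policy can be turned in polynomial time into a good colouring. The colouring hardness I would invoke is the NP-hardness of colouring graphs of constant chromatic number $\chi$ with fewer than roughly $(1 - 2\epsilon)\chi$ colours. I have not pinned down which known result gives exactly this factor; it is what the constant $\tfrac12 - \epsilon$ appears to encode. Given a graph $H$, I would build a game $G_H$ with $n, m = \mathrm{poly}(|V(H)|)$ in which Alice and Bob each observe a vertex of $H$. The distribution $\dist$ is supported on the diagonal pairs $(u,u)$ and the edge pairs $(u,v)$ with $uv \in E(H)$, and Charlie's action set is arranged so that a perfect action must distinguish diagonal pairs from edge pairs (e.g.\ an equality-type reward, possibly augmented as in Ilango et al.).

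The first step is an upper bound: a proper colouring $c$ of $H$ with $\chi$ colours yields a perfect policy. Alice announces $c(u)$ in $\ceil{\log \chi}$ bits and Bob replies with one bit saying whether $c(v) = c(u)$. On the support of $\dist$, equal colours force $u = v$, so Charlie acts perfectly. Hence $CC(G_H) \le \ceil{\log \chi} + 1$, giving $2^{CC(G_H)} \le 4\chi$ and $N(G_H) = \bO{\chi}$; choosing $\chi$ small relative to $k$ keeps $CC(G_H) \le k$. The second step is the structural lower bound. By \Cref{prop:communication-tiles}, any perfect policy $\pi$ induces a rectangular partition. Any rectangle $R \times C$ on which Charlie plays the ``diagonal'' action cannot contain an edge pair in the support, so $R \cap C$ is an independent set. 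The diagonal is covered by such rectangles, so the leaves of $\pi$ labelled ``diagonal'' give, in polynomial time, a proper colouring of $H$ with at most $N(\pi)$ colours. Together with the hardness of colouring with fewer than $(1 - 2\epsilon)\chi$ colours, this already yields the stated bound in terms of $N(G)$: no efficient algorithm can output perfect policies with $N(\pi) \le (\tfrac12 - \epsilon) N(G)/\bO{1}$.

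The final step, and the one I expect to be the main obstacle, is converting this leaf-count lower bound into a lower bound on the \emph{length} of $\pi$. Naively $N(\pi) \le 2^{CC(\pi)}$ only gives a logarithmic length bound, whereas the theorem asks for length linear in $2^{CC(G)}$ and in $N(G)$. I would try to exploit the specific construction of \cite[Theorem 32]{ilango2020np}, arranging the gadget so that along any root-to-leaf path each communicated bit can isolate at most one new independent-set class. For instance, I would pad the action space or the support so that every ``diagonal'' leaf must sit at a distinct depth of a caterpillar-shaped tree, or so that distinct diagonal rectangles must be separated by distinct bits on the diagonal inputs. The colouring extracted from a depth-$L$ policy would then use at most $2L$ colours. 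Combined with the hardness step and $2^{CC(G)} \le 4\chi$, this gives $L \ge \tfrac12(\tfrac12 - \epsilon)\, 2^{CC(G)}$ and, analogously, $L \ge \tfrac14(\tfrac12 - \epsilon)\, N(G)$. If the caterpillar forcing fails, my fallback is to analyse the transcripts on diagonal inputs $(u,u)$ directly, showing that the set of transcripts on the diagonal can be re-encoded as a colouring whose size is linear in $L$.
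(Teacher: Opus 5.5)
There is a genuine gap. It sits where you suspected, but it comes from the hardness ingredient, not the gadget.

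\textbf{The hardness ingredient is wrong.} As stated, the colouring hardness you invoke is vacuous: no algorithm, efficient or not, can properly colour a graph of chromatic number $\chi$ with fewer than $\chi$ colours. Even a genuine constant-factor gap (e.g.\ hardness of $c\chi$-colouring $\chi$-colourable graphs) would only give $L\ge\log_2(c\chi)=CC(G)+\bO{1}$ through $N(\pi)\le 2^{L}$, which is far from $2^{CC(G)}$.

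\textbf{Your structural step bounds leaves, not bits.} The step itself is correct: diagonal leaves give a proper colouring with at most $N(\pi)$ colours. But it only yields the unconditional fact $N(\pi)\ge\chi(H)$. So it does not ``already yield'' the $N(G)$ bound, because the theorem's bound is on the length of the policy.

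\textbf{The caterpillar forcing contradicts your own upper bound.} Suppose every perfect depth-$L$ policy produced a proper colouring with at most $2L$ colours. Then every perfect policy would have depth at least $\chi/2$, contradicting $CC(G_H)\le\ceil{\log\chi}+1$: the colour-announcing protocol has $\chi$ distinct diagonal leaves at depth $\ceil{\log\chi}+1$. The same objection kills your fallback of re-encoding diagonal transcripts as a colouring linear in $L$. Any modification of the game that enforced such structure would destroy the logarithmic bound on $CC(G_H)$ that the argument needs.

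\textbf{The missing idea is an \emph{exponential} colouring gap.} With it, the naive conversion $N(\pi)\le 2^{CC(\pi)}$ suffices and no length-versus-leaves forcing is needed. The paper uses \cite{wrochna2020improved}: for $q\ge 4$ it is NP-hard to colour a $q$-colourable graph with $\binom{q}{\floor{q/2}}-1$ colours. The argument then runs as follows.
\begin{enumerate}
    \item Take $q=2^{k-1}$ and the plain EQ/NEQ game $G_H$, so $CC(G_H)\le 1+\ceil{\log_2 q}=k$ and $N(G_H)\le 4q$.
    \item A perfect policy of length $L$ has at most $2^{L}$ leaves, so by \Cref{lem:graph-reduction} it yields a $2^{L}$-colouring of $H$ in polynomial time.
    \item We have $\log_2\binom{q}{\floor{q/2}}\ge\floor{q/2}=2^{k-2}$. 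Any $L\le(\tfrac12-\epsilon)2^{k-1}<2^{k-2}$ therefore gives at most $\binom{q}{\floor{q/2}}-1$ colours, beating the hardness threshold.
    \item Since $(\tfrac12-\epsilon)2^{CC(G)-1}\le(\tfrac12-\epsilon)2^{k-1}$ and $\tfrac14(\tfrac12-\epsilon)N(G)\le(\tfrac12-\epsilon)q$, both stated bounds follow.
\end{enumerate}
So the constant $\tfrac12-\epsilon$ comes from $\log_2\binom{q}{\floor{q/2}}\ge q/2$ together with $q=2^{k-1}$. It does not come from a $(1-2\epsilon)$ colouring gap. The logarithm lost in $N(\pi)\le 2^{L}$ is paid for by the exponential gap between $q$ and $\binom{q}{\floor{q/2}}$.
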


Any algorithm that finds optimal policies in games $G$ where $CC(G) \leq k$ for every $k$ must either take superpolynomial time or output communication schemes of length $2^{\Omega(CC(G))}$, matching, up to a constant factor, the blow up in the communication cost of \Cref{thm:one-way}.

To prove this theorem, we will reduce communication to the problem of graph coloring. For any graph $H$, we can construct an associated communication game $G_H$ as follows.

\begin{definition}[Graph Communication Games]
    \label{def:induced-graph}
    Given a simple undirected graph $H = (V, E)$, define the induced communication game $G_H$ to be the communication game with observations $\Omega_A = V$ for Alice and $\Omega_B = V$ for Bob, with actions $\mathcal{A} = \{\mathrm{EQ}, \mathrm{NEQ}\}$. The reward function $r: \mathcal{A} \times V \times V \to \{0, 1\}$ is defined as
    \begin{equation*}
        r(a; u, v) = \begin{cases}
            1 & \text{If $u = v$ and $a = \textrm{EQ}$}, \\
            1 & \text{If $u \neq v$ and $a = \textrm{NEQ}$}, \\
            0 & \text{Otherwise}.
        \end{cases}
    \end{equation*}
    The distribution over observations $\dist$ is defined so that observation pairs are drawn from the set
    \begin{equation*}
        \{(u, u) : u \in V\} \cup \{(u, v), (v, u) : \{u, v\} \in E\}
    \end{equation*}
    uniformly at random.
\end{definition}

As in \cite[Corollary 2]{ilango2020np}, we can prove that there is a tight correspondence between colorings and communication policies.

\begin{lemma}[Short Communication Policies Induce Graph Colorings]
    \label{lem:graph-reduction}
    For any graph $H = (V, E)$, let $G$ be the induced communication game. If for some $q \in \mathbb{N}$, the graph $H$ has a $q$-coloring, then $N(G) \leq 4q$ and $CC(G) \leq 1 + \ceil{\log_2(q)}$.

    Conversely, given a rectangular partition $D$ of the communication game $G$ into $q$ parts that achieves utility 1, there exists a $\mathrm{poly}(n)$ time algorithm that outputs a $q$-coloring of graph $H$.
    
    In particular, this implies that any communication policy for game $G$ that uses $\ell$ bits and achieves a utility of 1 induces a $2^\ell$ coloring of graph $H$.
\end{lemma}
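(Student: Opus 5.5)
Both directions hinge on one observation about $G_H$. Utility $1$ means Charlie's action is correct on every pair in the support of $\dist$. Those pairs are the diagonal $(u,u)$ and both orientations $(u,v),(v,u)$ of every edge. So a rectangle $R\times C$ that is used with positive probability must have a constant correct answer on its support. If it plays $\mathrm{EQ}$, it contains no edge pair $(u,v)$ with $u\in R$, $v\in C$. If it plays $\mathrm{NEQ}$, it contains no diagonal pair $(u,u)$, so $R\cap C=\emptyset$.

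\emph{Forward direction.} Fix a proper $q$-coloring $c:V\to[q]$ and encode colors in $t=\lceil\log_2 q\rceil$ bits. The protocol runs as follows.
\begin{enumerate}
\item Alice sends $c(\omega_A)$ using $t$ bits.
\item Bob sends a single bit, namely whether $c(\omega_B)=c(\omega_A)$.
\item If the colors differ, Charlie plays $\mathrm{NEQ}$. If they agree, Charlie plays $\mathrm{EQ}$.
\end{enumerate}
On the support, equal colors force $u=v$: two distinct vertices in the support are adjacent, and a proper coloring gives adjacent vertices different colors. So this protocol achieves utility $1$ with $1+\lceil\log_2 q\rceil$ bits, which gives $CC(G)\le 1+\lceil\log_2 q\rceil$.

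For $N(G)\le 4q$, a leaf count of this protocol gives $2\cdot 2^{t}\le 4q$, and \Cref{prop:communication-tiles} turns this into a partition of that size. A direct construction gives $2q$ parts. For each color $i$, take the rectangle $c^{-1}(i)\times c^{-1}(i)$ playing $\mathrm{EQ}$, and the rectangle $c^{-1}(i)\times (V\setminus c^{-1}(i))$ playing $\mathrm{NEQ}$. Either way the bound $4q$ follows.

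\emph{Converse direction.} Take a partition $D$ with $q$ rectangles and utility $1$. For each vertex $u$, the pair $(u,u)$ has positive probability and lies in a unique rectangle $R_i\times C_i$; color $u$ by that index $i$. The map $u\mapsto i$ is computable in $\mathrm{poly}(n)$ time by scanning the rectangles.

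To check the coloring is proper, suppose $\{u,v\}\in E$ and both $(u,u)$ and $(v,v)$ lie in $R_i\times C_i$. Then $u,v\in R_i$ and $u,v\in C_i$, so the edge pair $(u,v)$ also lies in $R_i\times C_i$. This rectangle contains a diagonal pair, so it must play $\mathrm{EQ}$, yet it contains the positive-probability pair $(u,v)$ with $u\neq v$. That contradicts utility $1$, so the coloring uses at most $q$ colors and is proper.

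\emph{The ``in particular'' clause.} An $\ell$-bit policy with utility $1$ yields, by \Cref{prop:communication-tiles}, a partition with at most $2^\ell$ rectangles. Its value is $1$, because the value of a partition is at least the value of any policy measurable with respect to it. The converse direction then gives a $2^\ell$-coloring.

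The one subtle point is that the partition covers zero-probability cells, where the action is unconstrained. The argument above only ever uses support pairs, namely diagonal pairs and edge pairs, so those cells never matter.
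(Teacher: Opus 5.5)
Your proof is correct and follows the paper's argument essentially step for step. It uses the same color-then-equality-bit protocol, with \Cref{prop:communication-tiles} giving $N(G)\le 4q$, and the same fooling-set coloring that sends $u$ to the index of the rectangle containing $(u,u)$; your direct $2q$-rectangle partition and your explicit check of the ``in particular'' clause are small additions the paper leaves implicit.
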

\begin{proof}
    We start with the easy direction. Let $c: V \to [q]$ be a $q$-coloring of graph $H$. Consider the communication policy that operates as follows when Alice receives vertex $u$ and Bob receives vertex $v$. First, Alice sends $m_1 = c(u)$. Next, Bob sends $m_2 = \mathds{1}(m_1 = c(v))$. Finally, Charlie plays $\mathrm{EQ}$ when $m_2 = 1$ and $\mathrm{NEQ}$ otherwise.

    If $m_2 = 0$, then the agents cannot have received the same vertex, as $c(v) \neq c(u)$ implies $v \neq u$. If $m_2 = 1$, it must be that $c(u) = c(v)$. Because $c$ is a valid graph coloring, it cannot be that $\{u, v\} \in E$. Because $\{u, v\}$ is only in the support of the observation distribution when $u = v$ or $\{u, v\} \in E$, it follows that $u = v$. Thus, we have shown the correctness of the protocol.
    
    This protocol uses $1 + \ceil{\log_2(q)}$ bits, and therefore $CC(G) \leq 1 + \ceil{\log_2(q)}$ and, by \Cref{prop:communication-tiles}, $N(G) \leq 2^{1 + \ceil{\log_2(q)}} \leq 2^{2 + {\log_2(q)}} \leq 4q$.

    Now we show the other direction. Let $D = \{R_1 \times C_1, \dots, R_q \times C_q\}$ be a rectangular partition of the observation space $V \times V$. Let $\delta_D$ be the corresponding indicator function that maps $(u, v)$ to the index $i$ such that $(u, v) \in R_i \times C_i$. Consider an optimal $D$-measurable policy $\pi: V \times V \to \{\mathrm{EQ}, \mathrm{NEQ}\}$---that is, a policy of the form $\pi(u, v) = \tilde{\pi}(\delta_D(u, v))$ for a function $\tilde{\pi}: [q] \to \{\mathrm{EQ}, \mathrm{NEQ}\}$. Because $D$ achieves utility 1,
    \begin{equation*}
        \E_{u, v \sim \dist}[r(\pi(u, v); u, v)] = 1.
    \end{equation*}
    Consider the induced $q$-coloring function $c: V \to [q]$, defined so that the color of a vertex $u \in V$ is the index of the rectangle in $D$ that the pair $(u, u)$ is in, i.e., $\delta_D(u, u)$. It remains to show that this is a valid coloring.

    Notice that for vertices $u, v \in V$, if $\delta_D(u, u) = \delta_D(v, v)$, then it must be the case that $\delta_D(u, v) = \delta_D(u, u) = \delta_D(v, v)$. This is the standard fooling set argument, which follows because $u \in R_{\delta_D(u, u)}$ and $v \in C_{\delta_D(v, v)}$, so $(u, v) \in R_{\delta_D(u, u)} \times C_{\delta_D(v, v)} = R_{\delta_D(u, u)} \times C_{\delta_D(u, u)}$ and therefore $\delta_D(u, v) = \delta_D(u, u)$.

    Because they are in the same rectangle in $D$, it must be that $\pi$ plays the same action for all of $(u, u)$, $(v, v)$, and $(u, v)$. But for $(u, u)$ and $(v, v)$ the only action with utility 1 is to play $\mathrm{EQ}$, and for $(u, v)$ the action with utility 1 is to play the different $\mathrm{NEQ}$. Therefore, any tiling that places all three observation pairs in the same rectangle must get at least one pair wrong. By the contrapositive, for any edge $\{u, v\} \in E$, if all of $\pi(u, u) = \mathrm{EQ}, \pi(v, v) = \mathrm{EQ}, \pi(u, v) = \mathrm{NEQ}$ are true, then $\delta_D(u, u) \neq \delta_D(v, v)$, and so the color given to the two vertices is different, respecting the edge constraint.
\end{proof}

With this reduction in hand, we can use known results on the hardness of graph coloring even when exponentially more colors than the chromatic number are allowed to show that any efficient communication protocol must use exponentially more bits than the communication complexity.

\begin{proof}[Proof of \Cref{thm:perfect-hardness}]
    By \cite{wrochna2020improved}, it is NP-hard to color a graph $H$ with $\binom{q}{\floor{q/2}} - 1$ colors in polynomial time when $H$ is colorable with $q \geq 4$ colors. Let $k \geq 3$ be arbitrary, let $q = 2^{k - 1}$ and consider the communication game $G$ induced by a $q$-colorable graph $H$ (\Cref{def:induced-graph}). 
    By \Cref{lem:graph-reduction}, we can see that $CC(G) \leq k$.

    Suppose that there exists an algorithm that outputs, in polynomial time, a communication policy with, for some fixed $\epsilon > 0$, at most
    \begin{equation*}
        \left(\frac{1}{2} - \epsilon\right)\cdot 2^{CC(G) - 1} \leq \left(\frac{1}{2} - \epsilon\right) \cdot 2^{k - 1}
    \end{equation*}
    bits. By \Cref{lem:graph-reduction}, this induces a perfect coloring of graph $H$ with
    \begin{equation*}
        2^{\left(\frac{1}{2} - \epsilon\right) 2^{k - 1}}
    \end{equation*}
    colors. Note that the logarithm of the minimum number of colors an efficient algorithm needs in this graph is
    \begin{equation*}
        \log_2\binom{q}{\floor{q/2}} \geq \log_2\left(\left(\frac{q}{\floor{q/2}}\right)^{\floor{q/2}}\right) \geq \log_2(2^{\floor{q/2}}) = \floor{q/2} = 2^{k -2}.
    \end{equation*}
    But the logarithm of the number of colors in the coloring we output is
    \begin{equation*}
        \left(\frac{1}{2} - \epsilon\right) 2^{k - 1},
    \end{equation*}
    which for any fixed $\epsilon > 0$, must be smaller than $2^{k - 2}$, i.e., smaller than the logarithm of the minimum number of colors needed for any polynomial time algorithm. Contradiction!
    
    We can run the same argument for the tiling number. Indeed, $N(G) \leq 4q$. Suppose a perfect communication policy with $\frac{1}{4}\left(\frac{1}{2} - \epsilon\right)N(G)$ bits is found. This means a perfect policy with $\left(\frac{1}{2} - \epsilon\right)\cdot q$ bits is found, which induces by \Cref{lem:graph-reduction} a perfect coloring with $2^{\left(\frac{1}{2} - \epsilon\right)\cdot q}$ colors. By the same argument above, this is impossible unless $\mathrm{P} = \mathrm{NP}$.
\end{proof}

As a remark, the hardness result here strongly relies on the communication game using a non-product distribution over observations. We suspect that proving a hardness result for communication games with product distributions and only one correct action per observation pair is extremely difficult, and there is a possibility that communication is possible without an exponential blowup in complexity in this case. This is because a proof that estimating the minimum number of bits needed for perfect play is computationally hard in this case would immediately falsify the famous log-rank conjecture \citep{ilango2020np, gaspers2025np}.\footnote{The log-rank conjecture, if true, would bound the communication complexity of a function $f: \mathcal{X} \times \mathcal{Y} \to \mathbb{Z}_2$ polynomially by the log of the rank of the matrix $M_f \in \mathbb{Z}_2^{\abs{\mathcal{X}} \times \abs{\mathcal{Y}}}$ with $M_f(i, j) = f(i, j)$ \citep{lovasz1988lattices}. The rank of $M_f$ can easily be computed in polynomial time, and so this would define a procedure for efficiently approximating the communication complexity. If the communication complexity of a function is hard to approximate, then the log-rank conjecture must therefore be false.}

%% file: sections/06-Aumann-Agreement-Guarantees.tex
\section{Aumann Agreement in Low Complexity Games}

In this section, we will study the Aumann agreement protocol in the low complexity regime. The section will proceed in three parts. First, we adapt Aumann agreement to this discrete communication game, having the agents come to agreement over their belief of the optimal action. Second, we will tightly characterize the performance of agreement in terms of the communication complexity: we will show the number of bits communicated can be arbitrarily large on low complexity games, and we will show that the utility of agreement decays exponentially in the communication complexity. Finally, we will observe that weak-learning and rectangular substitutes assumptions in prior work~\cite{frongillo2023agreement,kong_false_2022,collina2025collaborative}
 imply that the communication game has communication complexity at most a constant, meaning that the protocol introduced in this paper works under strictly weaker assumptions than prior work.

Because it will be easy to do so, in this section we will state the results for the more general $N \geq 2$ agent communication game. In an $N$-agent communication game, there is a collection of observation sets $\Omega_1, \dots, \Omega_N$, a joint distribution over observations $\dist \in \Delta(\Omega_1 \times \cdots \times \Omega_N)$, and a reward function $r: \acts \times \Omega_1 \times \cdots \times \Omega_N \to [-1, 1]$. An action taker, Charlie, takes an action at the end of communication.

Moreover, Aumann agreement, instead of achieving additive guarantees on utility, will achieve multiplicative ones. So in this section all results will be stated for communication games with non-negative reward functions. Note that these results can still apply for the more general communication games studied above after a rescaling of the reward function.

We study a natural instantiation of the Aumann agreement protocol for decision problems, stated in \Cref{alg:AA-communication}, where instead of communicating posteriors, the agents communicate what they expect the best action to be. Agents break ties when there are multiple best actions using a public, fixed tie-breaking rule.

\begin{algorithm}[ht]
\caption{The Aumann Agreement Protocol For Two Agents}
\label{alg:AA-communication}
\begin{algorithmic}[1]
\Require A number of iterations $T$, and a communication game $G_1$.
\For{$T$ iterations}
    \State Let $a^*_A$ be an optimal action from Alice's perspective
    \begin{equation*}
        a^*_A \in \argmax_{a \in \acts} \E_{\omega_B \sim \dist_{t}}[r(a; \omega_A, \omega_B) \mid \omega_A].
    \end{equation*}
    \State Let $a^*_B$ be an optimal action from Bob's perspective
    \begin{equation*}
         a^*_B \in \argmax_{a \in \acts} \E_{\omega_A \sim \dist_{t}}[r(a; \omega_A, \omega_B) \mid \omega_B].
    \end{equation*}
    \State Alice communicates $a^*_A$ and Bob communicates $a^*_B$. This takes $\log(m)$ bits each.
    \State Let $G_{t + 1}$ be the new subgame after communication in $G_t$. Let $\mu_{t + 1}$ be the distribution over observations in $G_{t + 1}$.
\EndFor
\end{algorithmic}
\end{algorithm}

\subsection{Defining Agreement}

The standard way of defining agreement between two agents --- as is done in \cite{aaronson_complexity_2005, frongillo2023agreement, collina_tractable_2024, collina2025collaborative} ---  is to say that on Line 4 of \Cref{alg:AA-communication}, the messages that Alice and Bob send are exactly the same.

But agreement is not always lasting in general; agents can fall in and out of agreement, and the first time they enter agreement may not be the last. For a fairer comparison, we instead study the guarantees on the agents' actions after they have reached agreement for the final time, what we call lasting agreement. Agreement is lasting when running the Aumann agreement protocol indefinitely can never change what the agents report as their expected best action. That is, when it is the case that for all agents, reporting their expected best action reveals no more information because at this point in the conversation their expected best action is the same regardless of what they could be observing.

\begin{definition}[$\epsilon$-Lasting Agreement]
    In a communication game $G$, the agents are said to be in $\epsilon$-lasting agreement that some action $a \in \acts$ is optimal when for all agents $i \in [N]$ and all observations $\omega_i \in \Omega_i$ with $\Pr_\dist(\omega_i) > 0$:
    \begin{equation*}
        \E_{\omega_{-i}}[r(a; \omega_{-i}, \omega_i) \mid \omega_i] \geq \max_{b \in \acts}\E_{\omega_{-i}}[r(b; \omega_{-i}, \omega_i) \mid \omega_i] - \epsilon.
    \end{equation*}
    When $\epsilon = 0$, we will say the agents have reached lasting agreement.
\end{definition}

Because more information never hurts, any upper bound we prove about the utility of the agents' actions at lasting agreement must also upper bound the utility at first agreement.

\subsection{Quantifying the Quality of Decisions Made Under Agreement}

Notice that, eventually, running \Cref{alg:AA-communication} for long enough must make the agents come to agreement. This was first shown by \cite{geanakoplos_we_1982}. However, unlike what intuition would suggest, agents having reached agreement over what action they think is best is no guarantee that this action will be any good.

\begin{proposition}[Agreement Does Not Imply Accuracy]
    \label{prop:AA-upper-bound}
    For any choice of $k, N \in \mathbb{N}$, for any $\delta \in (0, 1)$, and any choice of $\alpha \in (0, 1]$, there exists a communication game $G$ with $N$ agents and $N_{\alpha(1 - \delta)}(G) \in \Theta(k)$ where perfect play results in utility at least $1 - \delta$, such that running the Aumann agreement protocol for any number of steps results in an expected utility of no more than
    \begin{equation*}
        \left(\frac{1}{N_{\alpha(1 - \delta)}(G)}\right)^{\frac{N - 1}{N}}.
    \end{equation*}
\end{proposition}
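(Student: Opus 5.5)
We will build an explicit $N$-agent game in which every agent's conditional best action is the same ``safe'' action from the very start, for every possible observation. Agreement is then lasting at time zero, the protocol never reveals anything, and Charlie plays the safe action whatever the transcript. The construction is a direct-product game over a hidden index; it is tuned so that the safe action earns about $k^{-(N-1)}$ while perfect play earns $1-\delta$ with only $\Theta(k^N)$ rectangles. The identity $(k^{N})^{-(N-1)/N} = k^{-(N-1)}$ then gives the stated bound.

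\textbf{Construction.} Each agent $i$ privately observes a uniform independent index $x_i \in [k]$. There are $k^N$ ``specific'' actions $a_{\vec{x}}$, one for each tuple $\vec{x}\in[k]^N$, plus one safe action $a_0$.
\begin{itemize}
\item The specific action $a_{\vec{x}}$ pays $1-\delta$ exactly when the realized tuple equals $\vec{x}$, and $0$ otherwise.
\item The safe action $a_0$ pays a constant $\beta$ slightly above $(1-\delta)k^{-(N-1)}$ on every tuple. We keep $\beta \le k^{-(N-1)}$, which fixes how small $\delta$ must be.
\end{itemize}

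\textbf{Checking lasting agreement.} Fix agent $i$ with observation $x_i$. Any specific action consistent with $x_i$ has conditional value $(1-\delta)k^{-(N-1)} < \beta$, and any inconsistent one has conditional value $0$. So $a_0$ is the strict best response for every agent and every observation. By induction, no announcement is informative, every later subgame equals $G$, and Charlie ends up playing $a_0$, earning $\beta$.

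\textbf{Bounding the partition number.} For the upper bound, the $k^N$ singleton boxes $\{x_1\}\times\cdots\times\{x_N\}$ let Charlie play $a_{\vec{x}}$ on each and earn $1-\delta$. This meets every target $\alpha(1-\delta)$ with $\alpha\le1$, so $N_{\alpha(1-\delta)}(G)\le k^N$.

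For the matching lower bound $N_{\alpha(1-\delta)}(G)=\Omega(k^N)$, take any partition into $\ell$ boxes. On a box of mass $p$ the best action earns at most $\max(\beta p,\,(1-\delta)k^{-N})$. Summing over boxes gives utility at most
\[
\beta+\ell(1-\delta)k^{-N}.
\]
Reaching $\alpha(1-\delta)$ therefore forces $\ell \ge k^N(\alpha-\beta/(1-\delta)) = \Omega(k^N)$ for fixed $\alpha$ and large $k$. This shows $N_{\alpha(1-\delta)}(G)=\Theta(k^N)$.

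Because the statement asks for $\Theta(k)$, I would reparametrize by choosing the per-agent index size $k'$ with $k'^N\approx k$. Taking $k'=\lceil k^{1/N}\rceil$ only changes the count by constant factors.

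\textbf{Final bound.} Agreement yields utility
\[
\beta\approx (1-\delta)k'^{-(N-1)} = (1-\delta)\,(k'^{N})^{-(N-1)/N} \le N_{\alpha(1-\delta)}(G)^{-(N-1)/N},
\]
provided constants are tuned so that $\beta \le N_{\alpha(1-\delta)}(G)^{-(N-1)/N}$ exactly.

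\textbf{Main obstacle.} Exact constant-matching is the delicate step. The lower bound on $N_{\alpha(1-\delta)}$ has to hold for every $\alpha\in(0,1]$, including small $\alpha$, where $\alpha(1-\delta)$ may fall below $\beta$ so that the trivial one-box partition already suffices. If that happens, I would instead scale the specific rewards up to $1$ with probability mass adjusted, or restrict attention to $\alpha$ above $k^{-(N-1)}$ and absorb the rest into the $\Theta$. I also expect to need to perturb $\beta$ by a vanishing slack $\eta$ relative to $\delta$ so that the inequality $\beta\le N^{-(N-1)/N}$ holds exactly rather than only asymptotically.
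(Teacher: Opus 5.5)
Your construction is correct and proves the statement, but with a different game from the paper's. The ``main obstacle'' you flag does not actually arise. Take $\beta = k'^{-(N-1)}$ exactly. Since $\delta>0$, this is strictly above $(1-\delta)k'^{-(N-1)}$, so $a_0$ is the strict best response for every $\delta\in(0,1)$. There is therefore no constraint on $\delta$, contrary to your remark. The final inequality also holds exactly with no slack $\eta$, because $N_{\alpha(1-\delta)}(G)\le k'^N$ gives $\beta=(k'^N)^{-(N-1)/N}\le N_{\alpha(1-\delta)}(G)^{-(N-1)/N}$.

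The small-$\alpha$ issue is likewise harmless. For fixed $\alpha,\delta$ and $N\ge 2$, your count gives at least $\alpha k'^N - k'/(1-\delta)$ boxes. This is at least $\alpha k/2$ once $k'^{N-1}\ge 2/(\alpha(1-\delta))$. The paper's game degenerates in exactly the same way for small $k$: its lower bound $(\alpha(1-\delta))^{N/(N-1)}k$ then drops below $1$. So no rescaling is needed. Both constructions also need $N\ge 2$ for the $\Theta(k)$ claim.

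The paper's game works differently. Each agent receives a uniform element of $\mathbb{Z}_\ell$, with $\ell=\lceil k^{1/N}\rceil$, and Charlie must name the $N-1$ consecutive partial sums. So there are only $\ell^{N-1}$ actions, and conditional on any single observation every answer is equally likely. Agreement on $(0,\dots,0)$ is forced by raising that answer's reward from $1-\delta$ to $1$. The paper's lower bound on $N_{\alpha(1-\delta)}(G)$ genuinely uses rectangularity. An action is correct on at most $\min_j|R_i^{(j)}|$ points of a box $R_i$, and the geometric-mean bound plus Hölder then give $N_{\alpha(1-\delta)}(G)\ge(\alpha(1-\delta))^{N/(N-1)}k$.

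Your ``name the whole tuple'' game with a constant safe action buys simplicity. The agreement step is trivial (strict preference, no tie-breaking), and the lower bound is a one-line count that never uses that the parts are rectangles. The price is that you need $k'^N+1\approx N_{\alpha(1-\delta)}(G)$ actions, each paying on a single cell. So your large partition number is essentially forced by the size of the action set. The paper's example exhibits the same phenomenon with roughly a factor $\ell$ fewer actions than rectangles, where the bound comes from the geometry of rectangles. Since the statement places no constraint on $m$, both arguments suffice.
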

\begin{proof}
    Let $\ell = \ceil{k^{\frac{1}{N}}}$. The $N$ agents are each given a number uniformly at random in $\mathbb{Z}_\ell$, with agent $i$ receiving number $\omega_i$. The set of actions is $\mathcal{A} = \mathbb{Z}_\ell^{N - 1}$. The goal is for the agents to play
    \begin{equation*}
        (\omega_1 + \omega_2,~ \omega_2 + \omega_3,~ \dots,~ \omega_{N - 1} + \omega_N),
    \end{equation*}
    a list of $N - 1$ partial sums modulo $\ell$. They receive a reward of $1 - \delta$ when they succeed and a reward of 0 otherwise. To break ties, when the answer is $(0, 0, \dots, 0)$, they receive a reward of 1 when they succeed.

    Notice that because there are only $N - 1$ sums and $N$ variables, every $(N-1)$-tuple in $\mathbb{Z}_\ell^{N - 1}$ is equally likely to be the answer when numbers are sampled uniformly at random, even conditioned on any observation any agent has received.
    
    Because being correct when the answer is all zero has slightly higher reward, as soon as the game begins, they are all in lasting agreement that $(0, 0,\dots, 0)$ is the best action. Playing action $(0, 0, \dots, 0)$ always in this game results in a utility of
    \begin{equation*}
        \frac{1}{\ell^{N - 1}}.
    \end{equation*}
    However, had the agents played perfectly, they could have achieved a utility of at least $1 - \delta$.

    Notice that $CC_{\alpha(1 - \delta)}(G) \leq N \log(\ell)$, as Charlie can act optimally when each agent reveals their full $\log(\ell)$ bit observation. By partitioning as finely as possible so that each observation tuple is its own part, Charlie can act optimally. This bounds the partition number by $N_{\alpha(1 - \delta)}(G) \leq \ell^N$. This gives a total utility of at most
    \begin{equation*}
        \frac{1}{\ell^{N-1}}
    =
    \left(\frac{1}{\ell^N}\right)^{(N-1)/N}
    \leq\left(\frac{1}{N_{\alpha(1 - \delta)}(G)}\right)^{\frac{N - 1}{N}}
    \end{equation*}
    for Aumann agreement.

    It remains to show that $N_{\alpha(1 - \delta)}(G) \in \Omega(k)$. Let $D = \{R_1, \dots, R_{N_{\alpha(1 - \delta)}(G)}\}$ be a rectangular tiling of the observation space $\Omega_1 \times \Omega_2 \times \cdots \times \Omega_N$. Suppose the tiling had value at least $\alpha(1 - \delta)$ and let $\pi: [N_{\alpha(1 - \delta)}(G)] \to \acts$ be the associated policy. Write each rectangle as $R_i = R_i^{(1)} \times \cdots \times R_i^{(N)}$ where $R_i^{(j)} \subseteq \Omega_j$ for all $j \in [N]$.

    Notice that for all $i \in [N_{\alpha(1 - \delta)}(G)]$, $\pi(i)$ can only be correct on at most $\min_{j \in [N]} \abs{R_i^{(j)}}$ observations. Indeed, for any agent $j \in [N]$, for any fixed observation $\omega_j \in \Omega_j$, action $\pi(i) = (a_1, \dots, a_{N - 1})$ can only be correct on one possible observation tuple $\omega_{-j} \in \Omega_{-j}$. This is because the equation
    \begin{equation*}
        (\omega_1 + \omega_2,~ \omega_2 + \omega_3,~ \dots,~ \omega_{N - 1} + \omega_N) = (a_1, \dots, a_{N - 1})
    \end{equation*}
    becomes fully determined once any observation is fixed for any agent.

    Thus, the utility of policy $\pi$ is at most
    \begin{align*}
        &\frac{1}{\ell^N}\sum_{i = 1}^{N_{\alpha(1 - \delta)}(G)} \min_{j \in [N]} \abs{R_i^{(j)}} \\
        &\leq \frac{1}{\ell^N}\sum_{i = 1}^{N_{\alpha(1 - \delta)}(G)} \left(\prod_{j \in [N]} \abs{R_i^{(j)}}\right)^{1/N} && (\text{Because it is a geometric mean)} \\
        &= \frac{1}{\ell^N}\sum_{i = 1}^{N_{\alpha(1 - \delta)}(G)} 1 \cdot \abs{R_i}^{1/N} && (\text{Because the $R_i$ are rectangles})\\
        &\leq \frac{1}{\ell^N}\left(\sum_{i = 1}^{N_{\alpha(1 - \delta)}(G)} 1^{N/(N - 1)}\right)^{(N - 1)/N} \cdot \left(\sum_{i = 1}^{N_{\alpha(1 - \delta)}(G)} \abs{R_i}\right)^{1/N} && (\text{By Hölder's inequality}) \\
        &\leq \frac{1}{\ell^N} \cdot N_{\alpha(1 - \delta)}(G)^{(N - 1)/N} \cdot \left(\ell^N\right)^{1/N} && (\text{Because the $R_i$ tile the space}) \\
        &= \frac{N_{\alpha(1 - \delta)}(G)^{(N - 1)/N}}{\ell^{N - 1}} \\
        &\leq \left(\frac{N_{\alpha(1 - \delta)}(G)}{k}\right)^{(N - 1)/N}.
    \end{align*}
    Since the utility of the policy by assumption is $\alpha(1 - \delta)$, it follows after rearranging that
    \begin{equation*}
        N_{\alpha(1 - \delta)}(G) \geq (\alpha(1 - \delta))^{N/(N - 1)} \cdot k \in \Omega(k). \qedhere
    \end{equation*}
\end{proof}

Up to a multiplicative factor in $\alpha$, this is the strongest lower bound you can prove. When Aumann agreement between two agents terminates, it is guaranteed to result in an expected utility of at least $\alpha N_\alpha(G)^{-1/2}$.

\begin{theorem}[Agreement is Accurate On Low Complexity Games]
    \label{thm:AA-guarantee}
    Let $\alpha \in [0, 1]$ and $\epsilon > 0$ be arbitrary and $G$ be a communication game with only non-negative rewards.

    Suppose Aumann Agreement is run until the agents arrive at $\epsilon$-lasting agreement. Then they must achieve a utility of at least
    \begin{equation*}
        \frac{\alpha}{N_\alpha(G)^{\frac{N - 1}{N}}} - \epsilon \cdot N_\alpha(G)^{1/N}.
    \end{equation*}
\end{theorem}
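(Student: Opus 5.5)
The plan is to fix the point at which the agents are in $\epsilon$-lasting agreement on some action $a$, and to compare the utility $U := \E_\dist[r(a;\omega)]$ of playing $a$ against an optimal rectangular partition. I would first argue within a single subgame $G'$ reached by the protocol and then average over transcripts. Since $G'$ is the restriction of $G$ to a rectangle, intersecting an optimal partition $D=\{R_1,\dots,R_K\}$ of $G$, with $K=N_\alpha(G)$, with that rectangle yields at most $K$ rectangles in $G'$. The bound I aim for in each subgame is linear in the value of the restricted partition, so averaging over subgames with their probabilities recovers the claim with value $\alpha$. Inside a subgame, write $R_i = R_i^{(1)}\times\cdots\times R_i^{(N)}$, let $a_i$ be the action played on $R_i$, set $v_i := \E[\mathds{1}_{R_i} r(a_i;\omega)]$ so that $\sum_i v_i \ge \alpha$, and let $p_{ij} := \Pr(\omega_j\in R_i^{(j)})$.

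The key step is a per-agent bound obtained from lasting agreement. Fix a rectangle $i$ and an agent $j$. Multiply the agreement inequality $\E[r(a)\mid\omega_j]\ge \E[r(a_i)\mid \omega_j]-\epsilon$ by $\mathds{1}\{\omega_j\in R_i^{(j)}\}$ and take expectations. Then use non-negativity of rewards to shrink the event $\{\omega_j\in R_i^{(j)}\}$ to $R_i$ on the right-hand side. This gives
\[
v_i \le U_{ij} + \epsilon\, p_{ij}, \qquad U_{ij} := \E\left[r(a;\omega)\,\mathds{1}\{\omega_j\in R_i^{(j)}\}\right].
\]
Taking the minimum over $j$ and bounding the minimum by a geometric mean yields $v_i \le \left(\prod_j U_{ij}\right)^{1/N} + \epsilon$.

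Summing over $i$, the $\epsilon$ terms contribute at most $\epsilon K$. For the main term I would follow the same Hölder pattern as in the proof of \Cref{prop:AA-upper-bound}:
\[
\sum_i \Big(\prod_j U_{ij}\Big)^{1/N} \le K^{(N-1)/N}\Big(\sum_i \prod_j U_{ij}^{\,?}\Big)^{1/N},
\]
aiming to show that the quantity inside the last bracket is at most $U$ (in the normalized form $\sum_i \prod_j U_{ij} \le U^{N}$, after dividing through by $U^{N-1}$). This would give $\alpha \le K^{(N-1)/N} U + \epsilon K$. Rearranging then gives $U \ge \alpha/K^{(N-1)/N} - \epsilon K^{1/N}$, which is exactly the statement.

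The main obstacle is the disjointness inequality $\sum_i \prod_j U_{ij}\le U^N$, that is, $\sum_i \prod_j \nu(\omega_j\in R_i^{(j)}) \le \nu(\Omega)^N$ for the measure $\nu = r(a)\cdot\dist$. When $\nu$ is a product measure this is immediate: the product equals $\nu^{\otimes}(R_i)$, and the $R_i$ are disjoint. For a general joint $\dist$ it is the delicate point. I would first try to prove it as the statement that $\prod_j \nu_j(R_i^{(j)})$ is the mass of $R_i$ under the product of the marginals $\nu_1\otimes\cdots\otimes\nu_N$, which has total mass $U^N$. Disjointness of the $R_i$ then gives the bound directly, since only the marginals $\nu_j$ enter $U_{ij}$. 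If this identification holds, as I expect, the obstacle dissolves. Otherwise, I would fall back on iterating the agreement inequality one agent at a time.
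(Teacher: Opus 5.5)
Your proposal is correct and is essentially the paper's proof. It uses the same per-agent bound $v_i \le U_{ij} + \epsilon p_{ij}$ from agreement plus non-negativity, the same geometric-mean and H\"older steps, and the same identity $\sum_i \prod_j U_{ij} = U^N$, which the paper proves by exactly your product-of-marginals double counting (so the identification you hedge on does hold), followed by averaging the restricted partition's value $\alpha_\tau$ over transcripts. One small slip: you write $\sum_i v_i \ge \alpha$ inside a single subgame, which holds only in expectation over transcripts, and your averaging plan already handles that.
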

\begin{proof}
    Let $z = N_\alpha(G)$ and let $D = \{R_1, \dots, R_z\}$ be a rectangular tiling of $\Omega_1 \times \cdots \times \Omega_N$ of value $\alpha$ and let $a_1, \dots, a_z$ be the associated actions played in each rectangle. Run the Aumann agreement protocol until $\epsilon$-lasting agreement is reached. Let $\tau$ be the transcript of messages produced, let $G_\tau$ be the subgame after Aumann Agreement produces transcript $\tau$, let $\dist_\tau$ be the distribution over observations in $G_\tau$, and let $\alpha_\tau$ be the utility $D$ gets under distribution $\dist_\tau$. We start by showing that the average utility in this subgame must be large enough.

    Let $U(a)$ be the average utility of playing action $a$ across all possible observation tuples,
    \begin{equation*}
        U(a) = \sum_{\omega \in \Omega_1 \times \cdots \times \Omega_N} r(a; \omega)\dist_\tau(\omega).
    \end{equation*}
    The key quantity we will use in this proof is the sum of an action along an axis. Define
    \begin{equation*}
        U_i(a; \omega_i) = \sum_{\omega_{-i} \in \Omega_{-i}} r(a; \omega_i, \omega_{-i})\dist_\tau(\omega_i, \omega_{-i})
    \end{equation*}
    to be the average utility agent $i$ receives when she plays $a$ while observing $\omega_i$.

    Because the agents are in $\epsilon$-lasting agreement, it must be the case that there exists an action $a^* \in \acts$ such that for all agents $i \in [N]$, all positive probability observations $\omega_i \in \Omega_i$, and all actions $a \in \acts$: 
    \begin{equation*}
        \E_{\omega_{-i}}[r(a^*; \omega_{-i}, \omega_i) \mid \omega_i] \geq \E_{\omega_{-i}}[r(a; \omega_{-i}, \omega_i) \mid \omega_i] - \epsilon,
    \end{equation*}
    or written in terms of the $U_i$,
    \begin{equation*}
        U_i(a^*; \omega_i) + \epsilon \cdot \Pr(\omega_i) \geq U_i(a; \omega_i).
    \end{equation*}

    The key idea will be to write all relevant quantities in terms of $U_i(a^*; \omega)$, and the lower bound on the performance of Aumann agreement will pop out.

    Define $U(R_j; a_j)$ to be the average utility achieved when playing action $a_j$ in rectangle $R_j$,
    \begin{equation*}
        U(R_j; a_j) = \sum_{\omega \in R_j} r(a_j; \omega)\dist_\tau(\omega).
    \end{equation*}
    Then $\alpha_\tau = \sum_{j = 1}^z U(R_j;a_j)$.

    Because it is a rectangle, we can write each $R_j$ as a product $R_j^{(1)} \times \cdots \times R_j^{(N)}$, where each $R_j^{(i)} \subseteq \Omega_i$ is a set of observations that agent $i$ could receive.

    Notice that because utilities are non-negative, we can expand the sum along any axis. For any agent $i$,
    \begin{equation*}
        U(R_j; a_j) \leq \sum_{\omega_i \in R_j^{(i)}} \sum_{\omega_{-i} \in \Omega_{-i}} r(a_j; \omega_i, \omega_{-i})\dist_\tau(\omega_i, \omega_{-i}) = \sum_{\omega_i \in R_j^{(i)}} U_i(a_j; \omega_i).
    \end{equation*}
    Because the agents are in $\epsilon$-agreement,
    \begin{equation*}
        U(R_j; a_j) - \epsilon \cdot \Pr\left(\omega_i \in R_j^{(i)}\right) \leq \sum_{\omega_i \in R_j^{(i)}} U_i(a^*; \omega_i).
    \end{equation*}
    We can lower bound the left hand side by
    \begin{equation*}
        U(R_j; a_j) - \epsilon \leq \sum_{\omega_i \in R_j^{(i)}} U_i(a^*; \omega_i).
    \end{equation*}
    Because utilities are non-negative, we can additionally say
    \begin{equation*}
        \max(0, U(R_j; a_j) - \epsilon) \leq \sum_{\omega_i \in R_j^{(i)}} U_i(a^*; \omega_i).
    \end{equation*}
    Multiplying these for all agents, we get that
    \begin{equation*}
        \max(0, U(R_j; a_j) - \epsilon)^N \leq \prod_{i \in [N]} \left(\sum_{\omega_i \in R_j^{(i)}}U_i(a^*; \omega_i)\right).
    \end{equation*}
    This quantity is important, so we'll give it the name $Z_j = \prod_{i \in [N]} \left( \sum_{\omega_i \in R_j^{(i)}}U_i(a^*; \omega_i)\right)$. So, $\max(0, U(R_j; a_j) - \epsilon)^N \leq Z_j$ for all $j$ and so $\max(0, U(R_j; a_j) - \epsilon) \leq Z_j^{1/N}$ for all $j$.

    A useful property about the $Z_j$ is that they sum to $U(a^*)^N$. We can see this by a double counting argument. Indeed, for each agent $i$, we can sum the utilities from their perspective to get that
    \begin{equation*}
        U(a^*) = \sum_{\omega_i \in \Omega_i} U_i(a^*; \omega_i).
    \end{equation*}
    Multiplying these equalities, we see that
    \begin{equation*}
        U(a^*)^N = \sum_{\omega_1 \in \Omega_1}\cdots\sum_{\omega_N \in \Omega_N} U_1(a^*; \omega_1)\cdot \dots \cdot U_N(a^*; \omega_N).
    \end{equation*}
    The rectangles partition the space, and so the sum above can be written as
    \begin{equation*}
        U(a^*)^N = \sum_{j = 1}^z \sum_{(\omega_1, \dots, \omega_N) \in R_j} U_1(a^*; \omega_1)\cdot \dots \cdot U_N(a^*; \omega_N).
    \end{equation*}
    The key property that makes this theorem hold is that the $R_j$ are rectangles, and this is where we will use this property. Because we can write $R_j = R_j^{(1)} \times \cdots \times R_j^{(N)}$, we can factorize the sum above into
    \begin{equation*}
        U(a^*)^N = \sum_{j = 1}^z \prod_{i = 1}^{N}\sum_{\omega_i \in R_j^{(i)}} U_i(a^*; \omega_i) = \sum_{j = 1}^z Z_j,
    \end{equation*}
    as claimed.

    From here, we are almost done. Recall that $\alpha_\tau = \sum_{j = 1}^z U(R_j; a_j)$. So, $\alpha_\tau - z \epsilon = \sum_{j = 1}^z (U(R_j; a_j) - \epsilon) \leq \sum_{j = 1}^z \max(0, U(R_j; a_j) - \epsilon)$. Recall that for any $j$, $\max(0, U(R_j; a_j) - \epsilon) \leq Z_j^{1/N}$. Combining the two inequalities, we get
    \begin{align*}
        \alpha_\tau - z\epsilon &\leq \sum_{j = 1}^z Z_j^{1/N}.
    \end{align*}
    Applying Hölder's inequality, we get
    \begin{align*}
        \alpha_\tau - z\epsilon &\leq \sum_{j = 1}^z Z_j^{1/N} \\
                           &= \sum_{j = 1}^z \left(1 \cdot Z_j^{1/N}\right) \\
                           &\leq \left(\sum_{j = 1}^z 1^{N/(N - 1)}\right)^{(N - 1)/N} \cdot \left(\sum_{j = 1}^z Z_j\right)^{1/N} \\
                           &= z^{(N - 1)/N}\left(\sum_{j = 1}^z Z_j\right)^{1/N}.
    \end{align*}
    Recall that $\sum_{j = 1}^z Z_j = U(a^*)^N$, and so we finally get
    \begin{equation*}
        \alpha_\tau - z\epsilon \leq z^{(N - 1)/N} \cdot U(a^*).
    \end{equation*}
    Rearranging, this means that the expected utility that Aumann agreement gets when transcript $\tau$ is produced is at least
    \begin{equation*}
        \frac{\alpha_\tau}{N_\alpha(G)^{\frac{N - 1}{N}}} - \epsilon \cdot N_\alpha(G)^{1/N}.
    \end{equation*}
    Finally, notice that $\E_\tau[\alpha_\tau] \geq \alpha$. This is because $$\alpha \leq \E[r(\pi_D(\omega); \omega)] = \E_\tau[\E[r(\pi_D(\omega); \omega) \mid \tau]] = \E_\tau[\alpha_\tau],$$ where $\pi_D$ is the utility maximizing $D$-measurable policy.
    Taking an expectation over the transcript produced by Aumann agreement $\tau$, the expected utility is therefore at least
    \begin{equation*}
        \frac{\alpha}{N_\alpha(G)^{\frac{N - 1}{N}}} - \epsilon \cdot N_\alpha(G)^{1/N},
    \end{equation*}
    as claimed.
\end{proof}

Intuitively, \Cref{prop:AA-upper-bound} and \Cref{thm:AA-guarantee} together show that when information is split up among more and more agents, agreement becomes less and less reliable.

\subsection{Lasting Agreement Can Take Exponentially Too Long to Reach}

Moreover, in the worst case, despite results like \cite{aaronson_complexity_2005} proving that in a constant number of steps Alice and Bob will both report the same action, arriving at lasting true agreement may take on the order of $\abs{\Omega_A}$ steps in the worst case, meaning Aumann agreement sometimes sends exponentially more bits than just communicating the observation directly!

\begin{proposition}
    [Lasting Agreement Can Take Too Long To Reach]
    \label{prop:AA-too-long}
    For any $k \geq 3$, there exist communication games $G$ with $k$ observations for each agent, constant communication complexity $CC(G) \leq 2$, and constant partition number $N(G) \leq 4$ such that to arrive at lasting agreement, the agents must communicate for $\Omega(k)$ steps in expectation.
\end{proposition}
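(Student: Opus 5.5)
The plan is to adapt the classical ``email game'' or chain construction of \cite{geanakoplos_we_1982}. The support of $\dist$ will be a path in $\Omega_A\times\Omega_B$ whose two endpoints are the only places where an agent is certain of the best action. Each round of \Cref{alg:AA-communication} should then peel off only $O(1)$ nodes from the ends of the path, so lasting agreement is reached only after $\Omega(k)$ rounds.

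\textbf{Construction.} Take $\Omega_A=\Omega_B=[k]$ and let $\dist$ be uniform on the chain
\begin{equation*}
(1,1),(2,1),(2,2),(3,2),\dots,(k,k-1),(k,k),
\end{equation*}
which has $2k-1$ pairs. Define $f(x,y)=(x \bmod 2)\oplus(y \bmod 2)$, so $f=0$ on the diagonal pairs and $f=1$ on the off-diagonal pairs. There are three actions, $\mathcal{A}=\{0,1,S\}$. Actions $0$ and $1$ pay $1$ when they equal $f(x,y)$ and $0$ otherwise, and the ``safe'' action $S$ pays a constant $c\in(1/2,1)$, say $c=3/4$.

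\textbf{Complexity bounds.} Alice sends $x \bmod 2$, Bob sends $y \bmod 2$, and Charlie plays their XOR. This earns the maximal utility $1$, so $CC(G)\le 2$, and \Cref{prop:communication-tiles} gives $N(G)\le 4$.

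\textbf{Dynamics.} I will argue by induction on rounds. After $t$ rounds, the transcript reveals exactly whether the realized pair lies among the first or last $O(t)$ nodes of the chain. Conditioned on being in the surviving middle segment, the public distribution is uniform on a shorter subpath.

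An agent whose observation is interior to the surviving subpath sees exactly two consistent pairs, one with $f=0$ and one with $f=1$, each with posterior $1/2$. Since $1/2<c$, that agent reports $S$. An agent at an endpoint of the surviving subpath sees a single consistent pair, so it is certain of $f$ and reports $f$ (since $1>c$). Hence the reports in a round reveal only whether the pair sits at one of the two current endpoints. After conditioning on ``not an endpoint report,'' the new endpoints are the old endpoints' neighbours. Each round therefore removes at most $O(1)$ pairs from each end, which is the inductive step.

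\textbf{Round count.} While the surviving subpath has length at least $3$, the agents are not in lasting agreement. Let $a^*$ be any candidate action. If $a^*=S$, an endpoint agent strictly prefers $f\neq S$. If $a^*\in\{0,1\}$, an interior agent's conditional value of $a^*$ is $1/2<c$. So lasting agreement requires the pair to have been peeled, and the realized pair at distance $d$ from the nearer chain end is peeled only after $\Omega(d)$ rounds. Under the uniform distribution on $2k-1$ pairs, $\E[d]=\Omega(k)$, which gives the expected round bound of $\Omega(k)$.

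The main obstacle is making the peeling argument rigorous under the protocol's public tie-breaking and its simultaneous reports. Two things must be checked. First, no round reveals more than the endpoint information; this holds because interior agents all report $S$ identically, whatever their observation. Second, the induction hypothesis ``the surviving subgame is uniform on a subpath'' must be preserved after conditioning. The choice $c>1/2$ avoids ties at interior nodes, and $c<1$ keeps endpoints revealing. I would also double-check the boundary case when the two peeling fronts meet, where the final subgame becomes a single pair and lasting agreement trivially holds.
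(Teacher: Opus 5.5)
Your proof is correct and follows the same approach as the paper. Both use a path-like support on which only boundary observations give a revealing best response, so each round of \Cref{alg:AA-communication} peels off only the ends, the realized pair survives for a number of rounds proportional to its distance from the boundary, and the parity protocol gives $CC(G)\le 2$ and $N(G)\le 4$. The gadget differs: the paper keeps two actions $\{\mathrm{EQ},\mathrm{NEQ}\}$ on the full tridiagonal support and puts weight $\lambda\in(1,2)$ on the diagonal, so boundary observations strictly prefer $\mathrm{EQ}$ and interior ones strictly prefer $\mathrm{NEQ}$; you instead use a uniform staircase support with a third safe action $S$ to keep interior agents silent, and you argue more explicitly than the paper that lasting agreement fails on the surviving subpath.
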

\begin{proof}
    Consider the following communication game. Alice and Bob's observations are sampled with probabilities proportional to
    \begin{equation*}
        \mu(i, j) \propto \begin{cases}
            \lambda & i = j, \\
            1 & \abs{i - j} = 1, \\
            0 & \text{otherwise},
        \end{cases}
    \end{equation*}
    for some fixed $\lambda \in (1, 2)$. Charlie's action set is $\acts = \{\mathrm{EQ}, \mathrm{NEQ}\}$. Charlie must play action $\mathrm{EQ}$ if Alice and Bob's numbers are equal and action $\mathrm{NEQ}$ if they are not. The agents receive a utility of 1 if Charlie is right and 0 otherwise.

    Notice if Alice has received a 1, then the probability that Bob has received a 1 is
    \begin{equation*}
        \Pr(\mathrm{EQ}\mid \omega_A=1)=\frac{\lambda}{\lambda+1} > 1/2.
    \end{equation*}
    So from Alice's perspective, the best action is $\mathrm{EQ}$ when she observes a 1. By this principle, if either agent receives a number on the boundary of $[k]$, then from their perspective the best action to play in expectation is $\mathrm{EQ}$. 

    Otherwise, if Alice has received a 2, then the probability that Bob has received a 2 is
    \begin{equation*}
        \Pr(\mathrm{EQ}\mid \omega_A=2)=\frac{\lambda}{\lambda+2} < 1/2.
    \end{equation*}
    So from Alice's perspective, the best action is $\mathrm{NEQ}$. The same is true whenever Alice or Bob receive an observation in the interior of $[k]$.

    Therefore, at the first step of Aumann agreement, the agents effectively communicate to each other whether or not their observation is on the boundary of $[k]$. If one of them is, the agents become in agreement on what the best action is and the protocol terminates. Otherwise, we enter the subgame where Alice and Bob's observations lie in $\{2, \dots, k - 1\}$. This continues until one of Alice's or Bob's observation lies on the boundary. This is a smaller version of the same game with two fewer observations, so the same analysis applies and again Alice and Bob simply communicate whether or not they are on the boundary.

    So, Aumann agreement takes $ \min(\omega_A, \omega_B, k + 1 - \omega_A, k + 1 - \omega_B)$ steps.
    In expectation, it can be seen that the number of steps grows linearly. Notice that the probability that $\omega_A = i$ for any $i$ is at least
    \begin{equation*}
        \Pr(\omega_A=i) \geq \frac{\lambda + 1}{k\lambda+2(k-1)} \geq \frac{1}{4k}.
    \end{equation*}
    Therefore, the probability that $\omega_A$ is contained in $[k/4, 3k/4]$ initially is at least $1/8$.
    In this case, the number of steps Aumann agreement runs for is at least $k/4 - 1$. Therefore, the expected value is at least $(1/8) \cdot (k/4 - 1) \in \Omega(k)$.
    
    However, consider the following optimal communication policy. Alice and Bob communicate their observations modulo $2$ in one bit each. Because their numbers differ by at most 1, their numbers are equal if and only if they are equal modulo two. Thus, Charlie can act optimally given this information, which implies that $CC(G) \leq 2$ and $N(G) \leq 2^2 = 4$, as claimed!
\end{proof}
\begin{remark}
    As a side note, notice that \Cref{prop:AA-too-long} does not contradict \cite{aaronson_complexity_2005}. The agents are in agreement over their belief of the optimal action in no more than two iterations. The problem is that agreement, while easy to reach, is often not lasting.    
\end{remark}

\subsection{Weak-Learning Implies Constant Communication Complexity}

\cite{collina2025collaborative} get around these results by showing that when the communication game satisfies a property they call $w$-weak learning, Aumann agreement always converges in constant time to meaningful solutions. We will show that when a communication game is $w$-weak learnable, it must also have a constant (in $w$) communication complexity, meaning $w$-weak learning is a low communication complexity assumption in disguise.

\begin{definition}
    [$w$-weak learning]
    \citep{collina2025collaborative}
    \label{def:weak-learning}
    Let $w: [0, 1] \to [0, 1]$ be a convex increasing function with $w(\gamma) \leq \gamma$ for all $\gamma \in [0, 1]$. A communication game $G$ is said to be $w$-weak learnable when the following property holds.

    Let $\gamma \in (0, 1]$ be arbitrary.
    For all possible distributions over observations $\dist \in \Delta(\Omega_A \times \Omega_B)$, if the utilities of all actions are small enough, i.e.
    \begin{equation*}
        \max_{a \in \acts} \E_{\omega_A, \omega_B \sim \dist}\left[r(a; \omega_A, \omega_B)\right] \leq 1 - \gamma,
    \end{equation*}
    then at least one of Alice's or Bob's optimal policy achieves a utility at least $w(\gamma)$ greater than the best constant action, i.e.
    \begin{align*}
        &\max\left(\max_{\pi_A \in \Pi_A} \E_{\omega_A, \omega_B \sim \dist}[r(\pi_A(\omega_A); \omega_A, \omega_B)],~ \max_{\pi_B \in \Pi_B} \E_{\omega_A, \omega_B \sim \dist}[r(\pi_B(\omega_B); \omega_A, \omega_B)]\right) \\&\qquad \qquad \geq \max_{a \in \acts} \E_{\omega_A, \omega_B \sim \dist}[r(a; \omega_A, \omega_B)] + w(\gamma),
    \end{align*}
    where $\Pi_A$ and $\Pi_B$ are the sets of all functions of the form $\pi_A: \Omega_A \to \acts$ and $\pi_B: \Omega_B \to \acts$ respectively.
\end{definition}

\begin{proposition}
    [Weakly learnable games have constant communication complexity]\citep{collina2025collaborative}
    \label{prop:cc-weak}
    Suppose a communication game $G$ with utilities in $[0, 1]$ is $w$-weak learnable. Let $\gamma \in (0, 1]$ be arbitrary. If $w(\gamma) > 0$, the communication complexity is at most the constant $CC_{1 - \gamma}(G) \leq 2\ceil{\log(m)}\ceil{\frac{(1 - \gamma)}{w(\gamma)}}$ and by extension the partition number is at most $N_{1 - \gamma}(G) \leq (2m)^{2\ceil{(1 - \gamma)/w(\gamma)}}$.
\end{proposition}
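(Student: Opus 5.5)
We will build an explicit communication policy by iterated boosting, in the spirit of the agreement dynamics of \cite{collina2025collaborative}. Each round uses weak learnability on the current conditional distribution and strictly increases the value of the best constant action.

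\textbf{The potential.} Maintain a current subgame, given by the distribution $\dist_\tau$ conditioned on the transcript $\tau$ so far. Track the potential
\[
\Phi_t = \E_\tau\bigl[\max_{a} \E_{\dist_\tau}[r(a;\cdot)]\bigr],
\]
which is the value Charlie obtains by stopping after round $t$ and playing the best constant action in each subgame. Since utilities lie in $[0,1]$, we have $\Phi_t \le 1$ always.

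\textbf{One round.} In a subgame $\tau$ with best constant value at most $1-\gamma$, apply \Cref{def:weak-learning} with the distribution $\dist_\tau$. This gives an agent, say Alice, and a policy $\pi_A:\Omega_A\to\acts$ whose value under $\dist_\tau$ is at least the best constant value plus $w(\gamma)$. Which agent is chosen, and the policy itself, depend only on $\tau$, so they are common knowledge. Alice announces $\pi_A(\omega_A)$ using $\ceil{\log m}$ bits. In the refined subgames $\tau'=(\tau,a)$, playing the announced action $a$ is a constant action there. Hence
\[
\E\bigl[\max_b \E_{\dist_{\tau'}}[r(b)] \,\big|\, \tau\bigr] \ge \text{value of } \pi_A \text{ under } \dist_\tau .
\]
So the conditional potential rises by at least $w(\gamma)$ on every subgame that is still below $1-\gamma$. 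Subgames already at value at least $1-\gamma$ stop communicating.

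To keep the length uniform and match the stated constant, I would let each round spend $2\ceil{\log m}$ bits: one slot for the identity and message of whichever agent is chosen, padded. Alternatively both agents can send their best-response actions, as in \Cref{alg:AA-communication}.

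\textbf{Counting rounds.} The main obstacle is the accounting, because active and stopped subgames mix. I would argue per path instead of through $\Phi_t$. Along any root-to-leaf path, the quantity $v_\tau=\max_a \E_{\dist_\tau}[r(a)]$ is not monotone pathwise; it increases only in conditional expectation. So I would run the same construction for exactly $T=\ceil{(1-\gamma)/w(\gamma)}$ rounds and split $\Phi_T$ according to the subgames that stopped early.

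Let $p_t$ be the probability that the current subgame is still active (value below $1-\gamma$) at round $t$. Then $\Phi_{t+1} \ge \Phi_t + p_t w(\gamma)$. If the final policy has value below $1-\gamma$, then at least a constant fraction of the mass is active in enough rounds to drive $\Phi$ above $1$, a contradiction. The intended inequality is
\[
1 \ge \Phi_T \ge \Phi_0 + w(\gamma)\sum_t p_t .
\]
Combining this with $\Phi_T \le (1-p_T)\cdot 1 + p_T(1-\gamma)$ should bound the final shortfall. A clean contradiction requires $\Phi_0\ge 0$ and $T w(\gamma) \ge 1-\gamma$, which the choice of $T$ provides. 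If the constant does not quite close, I would double $T$ or loosen to $\E[\text{value}] \ge 1-\gamma$ via convexity of $w$, which lets us treat the average shortfall $\gamma'$ directly.

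\textbf{Conclusion.} The total length is $T\cdot 2\ceil{\log m}$, which gives $CC_{1-\gamma}(G) \le 2\ceil{\log m}\ceil{(1-\gamma)/w(\gamma)}$. The bound on $N_{1-\gamma}$ then follows from \Cref{prop:communication-tiles}, using $2^{2\ceil{\log m}} \le (2m)^2$.
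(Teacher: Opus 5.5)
Your per-round step is fine. Announcing $\pi_A(\omega_A)$ does lift the conditional best-constant value to at least the value of $\pi_A$ under $\dist_\tau$. The bit count and the passage to $N_{1-\gamma}$ via \Cref{prop:communication-tiles} are also correct. The counting step, however, has a genuine gap, not just a loose constant. You use only the gain $w(\gamma)$ on active subgames and freeze subgames once they reach $1-\gamma$. The two facts you then have, $\Phi_T \ge \Phi_0 + w(\gamma)\sum_t p_t$ and the \emph{upper} bound $\Phi_T \le 1-\gamma p_T$, cannot yield $\Phi_T\ge 1-\gamma$. Moreover, a final value below $1-\gamma$ only forces $p_T>0$, not a constant active fraction. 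Concretely, take $c=1-\gamma$ with $w(\gamma)<c$ (the only case with $T\ge 2$), a small $\eta>0$, and $q=w(\gamma)/(c+\eta)$. Start from $\Phi_0=0$, and suppose that in every round a $q$-fraction of the active mass, all sitting at value $0$, jumps to value $c+\eta$ and stops, while the rest stays at $0$. Each active subgame then gains exactly $w(\gamma)$ in expectation, so every inequality your argument invokes holds. Yet $\Phi_T=(c+\eta)\bigl(1-(1-q)^T\bigr)$, which tends to $c\bigl(1-(1-w(\gamma)/c)^T\bigr)<c$ as $\eta\to 0$, for every finite $T$. So doubling $T$ does not rescue it either.

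The missing idea is to use weak learnability at full strength together with convexity of $w$, as the paper does.
\begin{itemize}
\item Do not stop any subgame. In each subgame $\tau$, apply the definition with its own parameter $1-v_\tau$. This gives conditional gain at least $w(1-v_\tau)$, which is far more than $w(\gamma)$ when $v_\tau$ is well below target.
\item Average over $\tau$ and apply Jensen to the convex $w$ to get $\Phi_{t+1}\ge\Phi_t+w(1-\Phi_t)$.
\item Since $w\ge 0$, $\Phi_t$ is non-decreasing. So if $\Phi_T<1-\gamma$, then $1-\Phi_t>\gamma$ for all $t<T$.
\item By monotonicity, $w(1-\Phi_t)\ge w(\gamma)$, hence $\Phi_T\ge Tw(\gamma)\ge 1-\gamma$, a contradiction.
\end{itemize}
The paper runs this with both agents announcing best responses each round, which is where the $2\ceil{\log m}$ bits per round come from. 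Your closing remark about convexity points in this direction, but it is incompatible with your stopping rule. Frozen subgames gain $0$ instead of $w(1-v_\tau)$, and the truncated gain $v\mapsto w(1-v)\mathds{1}\{v<1-\gamma\}$ is not convex. The Jensen step therefore breaks unless every subgame keeps communicating.
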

\begin{proof}
    This result is implied by \cite[Theorem 7.7]{collina2025collaborative}, but for completeness, we reprove it directly here. To show this, it suffices to show that after $(1 - \gamma)/w(\gamma)$ steps of Aumann agreement, the agents must achieve a utility of at least $1 - \gamma$.

    Let $\tau_t$ be the random variable taking the value of the sequence of actions sent back and forth by Alice and Bob up to iteration $t$ of the Aumann Agreement protocol. Let $\Phi$ be the function mapping the sequence of actions that Alice and Bob send to each other to the utility Charlie achieves at time step $t$. That is, $\Phi: \acts^* \to \mathbb{R}$, where $\acts^* = \bigcup_{k \geq 0} \acts^k$, and
    \begin{equation*}
        \Phi(\tau) = \begin{cases}
            \max_{a \in \acts} \E_{\omega_A, \omega_B}[r(a; \omega_A, \omega_B) \mid \tau] & \text{If } \Pr(\tau) > 0, \\
            0 & \text{Otherwise.}
        \end{cases}
    \end{equation*}
    By the $w$-weak learning assumption (cf. \Cref{def:weak-learning}), at least one of Alice or Bob must outperform a constant action by $w(1 - \Phi(\tau_t))$.

    That is, let $\pi^*_A = \argmax_{\pi_A \in \Pi_A} \E[r(\pi_A(\omega_A); \omega_A, \omega_B) \mid \tau_t]$ be Alice's best policy, and $\pi^*_B = \argmax_{\pi_B \in \Pi_B} \E[r(\pi_B(\omega_B); \omega_A, \omega_B) \mid \tau_t]$ be Bob's current best policy. We know
    \begin{align*}
        \max\left( \E[r(\pi_A^*(\omega_A); \omega_A, \omega_B) \mid \tau_t],~ \E[r(\pi_B^*(\omega_B); \omega_A, \omega_B)\mid \tau_t]\right) \geq \Phi
        (\tau_t)+ w(1 - \Phi(\tau_t)).
    \end{align*}
    After a step of Aumann agreement, Alice and Bob both reveal $\pi^*_A(\omega_A)$ and $\pi^*_B(\omega_B)$ to Charlie, and so it follows that the optimal policy Charlie can play now is at least as good as Alice's or Bob's best policy.

    To see this formally, consider the partition of the observation space $\Omega_A \times \Omega_B$ into best-response blocks for Alice and Bob, i.e. into sets $B_{a, b} = \{(\omega_A, \omega_B) \mid \pi_A^*(\omega_A) = a, \pi^*_B(\omega_B) = b\}$ for all $a, b \in \acts$. Let $B$ be the $\sigma$-algebra induced by the sets $B_{a, b}$.

    If, after a step of Aumann agreement, Alice communicates action $b \in \acts$ and Bob communicates action $c \in \acts$, the communication protocol enters the subgame $B_{b, c}$, with the distribution over observations $\dist_{t + 1}$ being $\dist_t$ conditioned on being in $B_{b, c}$. Note that $\pi^*_A$ and $\pi^*_B$ are both $B$-measurable.

    Because in each $B_{b, c}$ block, Alice and Bob play the fixed action $b$ and $c$ respectively, the best utility Charlie can achieve when she knows they are in the $B_{b, c}$ block is at least as good as Alice or Bob: it follows that $$\max_{a \in \acts} \E[r(a; \omega_A, \omega_B) \mid B_{b, c}, \tau_t] \geq \max(\E[r(\pi_A^*(\omega_A); \omega_A, \omega_B) \mid B_{b, c}, \tau_t], \E[r(\pi_B^*(\omega_B); \omega_A, \omega_B) \mid B_{b, c}, \tau_t]).$$

    So, in expectation, Charlie's utility becomes
    \begin{align*}
        &\E\left[\max_{a \in \acts} \E[r(a; \cdot, \cdot) \mid B, \tau_t](\omega_A, \omega_B) ~\middle
        |~ \tau_t\right] \\
        &= \sum_{b, c \in \acts} \dist(B_{b, c} \mid \tau_t)\max_{a \in \acts} \E[r(a; \omega_A, \omega_B) \mid B_{b, c}, \tau_t] \\
        &\geq \sum_{b, c \in \acts} \dist(B_{b, c} \mid \tau_t) \max\left(\E[r(\pi_A^*(\omega_A); \omega_A, \omega_B) \mid B_{b, c}, \tau_t],~ \E[r(\pi_B^*(\omega_B); \omega_A, \omega_B) \mid B_{b, c}, \tau_t]\right) \\
        &\geq \max\left(\sum_{b, c \in \acts} \dist(B_{b, c} \mid \tau_t)\E[r(\pi_A^*(\omega_A); \omega_A, \omega_B) \mid B_{b, c}, \tau_t],~ \sum_{b, c \in \acts} \dist(B_{b, c} \mid \tau_t)\E[r(\pi_B^*(\omega_B); \omega_A, \omega_B) \mid B_{b, c}, \tau_t]\right) \\
        &= \max\left(\E[r(\pi_A^*(\omega_A); \omega_A, \omega_B)\mid \tau_t],~ \E[r(\pi_B^*(\omega_B); \omega_A, \omega_B) \mid \tau_t]\right),
    \end{align*}
    
    By the $w$-weak learning assumption, it follows that
    \begin{align}
        \E[\Phi(\tau_{t + 1}) \mid \tau_t] &= \E\left[\max_{a \in \acts} \E[r(a; \cdot, \cdot) \mid B](\omega_A, \omega_B) ~\middle
        |~ \tau_t\right] \notag\\
        &\geq \max\left(\E[r(\pi_A^*(\omega_A); \omega_A, \omega_B) \mid \tau_t],~ \E[r(\pi_B^*(\omega_B); \omega_A, \omega_B) \mid \tau_t]\right) \notag \\
        &\geq \Phi(\tau_t) + w(1 - \Phi(\tau_t)).  \label{eq:induction-step-weak-learning}
    \end{align}
    Let $T = \ceil{(1 - \gamma)/w(\gamma)}$. Suppose for the sake of contradiction that $\E_{\omega_A, \omega_B \sim \dist}[\Phi(\tau_T)] < 1 - \gamma$. After $T$ steps of Aumann agreement, it must be the case that Charlie's utility in expectation is 
    \begin{align*}
        \E_{\omega_A, \omega_B \sim \dist}[\Phi(\tau_T)] &= \E_{\tau_{T - 1}}\left[\E_{\omega_A, \omega_B \sim \dist}[\Phi(\tau_T) \mid \tau_{T - 1}]\right] \\
                         &\geq \E\left[\Phi(\tau_{T - 1}) + w(1 - \Phi(\tau_{T - 1}))\right] && (\text{By \Cref{eq:induction-step-weak-learning}}) \\
                         &\geq \E[\Phi(\tau_{T - 1})] + w(1 - \E[\Phi(\tau_{T - 1})]) && (\text{By the convexity of $w$)} \\
                         &\geq \E[\Phi(\tau_0)] + \sum_{t = 0}^{T - 1} w(1 - \E[\Phi(\tau_t)]) &&(\text{Applying the first three steps $T$ times}) \\
                         &\geq \sum_{t = 0}^{T - 1} w(1 - \E[\Phi(\tau_t)]).  && (\text{Because utilities are non-negative})
    \end{align*}
    Notice that expected utilities are non-decreasing, as more information can never hurt. Indeed, because $\E[\Phi(\tau_T)] \geq \E\left[\Phi(\tau_{T - 1}) + w(1 - \Phi(\tau_{T - 1}))\right]$, and $w$ is non-negative, $\E[\Phi(\tau_T)] \geq \E\left[\Phi(\tau_{T - 1}) \right]$. Because $\E_{\omega_A, \omega_B \sim \dist}[\Phi(\tau_T)] < 1 - \gamma$, it follows that $\E_{\omega_A, \omega_B \sim \dist}[\Phi(\tau_t)] < 1 - \gamma$ for all $t \in [T]$.

    Because $w$ is non-decreasing by assumption, this means that
    \begin{equation*}
        \E_{\omega_A, \omega_B \sim \dist}[\Phi(\tau_T)] \geq \sum_{t = 0}^{T - 1} w(1 - \E[\Phi(\tau_t)]) \geq \sum_{t = 0}^{T - 1} w(1 - (1 - \gamma)) \geq T \cdot w(\gamma) \geq 1 - \gamma.
    \end{equation*}
    
    Contradiction! Thus, after $T$ steps, the expected utility for the agents must be at least $1 - \gamma$.
    Each step of Aumann agreement communicates $2\ceil{\log(m)}$ bits, and we are done!
\end{proof}

%% file: sections/07-related-work.tex
\section{Additional Related Work}
Beyond the agreement literature discussed in the introduction, related work comes from both communication-complexity approaches to protocol design and economic models of strategic communication. We discuss the closest connections here.

\paragraph{Strategic communication and information design.}
A separate line of work in economics and computer science studies communication among agents whose objectives, information, abilities, or communication costs may be partially misaligned. In cheap-talk, Bayesian persuasion, and information-design models, the central question is how a sender, mediator, or platform can shape what agents learn in order to influence downstream actions \citep{crawford1982strategic,kamenica2011bayesian,bergemann2019information}. Algorithmic work in this area studies the computational complexity of designing such information structures under a range of constraints, including public and private persuasion, limited or non-robust communication channels, and constrained communication languages \citep{dughmi2016algorithmic,le2019persuasion,gradwohl2022algorithms,haghtalab2022communicating,haghtalab2024platforms,haghtalab2024leakage}. Our work is orthogonal to this literature by taking a different perspective: we remove the strategic aspects that arise from mismatches in agents' goals, abilities, or costs, and focus instead on the algorithmic problem of collaborative communication itself. We believe an interesting direction for future work is to study multi-agent communication with limited bandwidth and computational complexity that also suffers from some misalignment of objectives.

\paragraph{Related problems in clustering.} Communication can be seen as a very general clustering problem, and clustering in its full generality --- with arbitrary distance functions that, for example, do not necessarily correspond to a metric --- has been studied in the literature from time to time. For example, a reduction from the clustering-like problem of segmentation, as studied by \cite{kleinberg2004segmentation, alon1999two}, can be shown to the problem of communication. \cite{balcan2008discriminative} show how the Frieze-Kannan weak regularity lemma can be used to solve general clustering problems. Clustering to facilitate communication in cooperative settings has been studied empirically in work like \cite{lauffer_who_2023}.

\paragraph{Regularity, indistinguishability, and multicalibration.}
Our use of coarsenings as small abstractions that preserve the behavior of a class of tests is naturally connected to outcome indistinguishability and to recent work relating graph regularity, pseudorandomness, outcome indistinguishability, and multicalibration \citep{dwork2021outcome,dwork2023pseudorandomness,casacuberta2024complexity,casacuberta2025global}. This perspective is part of a broader line of work using multicalibration and indistinguishability-style guarantees to obtain downstream decision guarantees, including omniprediction, loss outcome indistinguishability, online multicalibration, panprediction, and multi-objective learning \citep{gopalan2021omnipredictors,gopalan2022loss,garg2023oracle,balakrishnan2025panprediction}. The models and guarantees are different: in our setting the tests are induced by short communication protocols, and indistinguishability by itself is not sufficient. Rather, the indistinguishable approximation must itself be a coarsening of the agents' observation spaces.

Moreover, \cite{tao2006szemer} has already noted a connection between regularity and information, introducing an information-theoretic variant of the Szemerédi regularity lemma. The author does not give an efficient algorithm to compute this. Our strengthening of the weak regularity lemma can be seen as a computationally efficient instantiation of the information-theoretic variant for finding cut-norm decompositions. As a remark, the vicious cycle we encounter when attempting to use indistinguishability without coarsening has also been encountered by \cite{dwork2026efficient}. Their proposed solution, a ``supersimulator'', uses a clever variation of the boosting argument in which the complexity of the tests increases with the complexity of the object being constructed. This is strong enough to allow for $\epsilon$-indistinguishability to $\mathrm{poly}(1/\epsilon)$ combinations of test functions. For communication, where we ask for indistinguishability with respect to exponentially in $1/\epsilon$ more complex test functions, a supersimulator would create partitions of size $2 \uparrow\uparrow (1/\epsilon^2)$, a power tower with $1/\epsilon^2$ exponentiations. Coarsening strikes a balance, allowing us to get away with a partition with much, much fewer parts when we only need to be $\epsilon$-indistinguishable to a certain class of exponentially more complex test functions.

\paragraph{One-way protocols and learning.}
\cite{kremer_randomized_1999} provide an efficient algorithm for one-way communication of Boolean functions, where Bob takes the action but only Alice is allowed to speak. They do this by connecting one-way communication under product distributions over observations to PAC learning. In particular, their approach gives a general efficient communication protocol whose bit complexity is small when the VC dimension of the rows of the communication matrix is small.

This is close in spirit to our goal of algorithmically finding useful protocols, but the setting and benchmark are different. The product-distribution assumption is essential for their guarantee, while our model allows correlated observations. Their setting is also a function-computation setting, whereas in our model Charlie may have many possible actions with different utilities for the same observation pair. Moreover, VC dimension is a worst-case structural parameter, whereas in our setting the relevant communication complexity can be much smaller for the particular game and distribution at hand. \cite{jain2009new} extend this learning-based approach to joint distributions, but their communication bound depends linearly on the mutual information between Alice's and Bob's observations, which can be as large as $\log n$ even when the communication complexity is constant. Relatedly, \cite{feldman2014sample} prove an equivalence between one-way communication under general distributions over inputs and private PAC learning.

\paragraph{Low-rank protocols for functions.}
Another source of general-purpose communication protocols comes from work around the log-rank conjecture. For a function $f$ with communication matrix $M_f$, a number of protocols exploit low rank of $M_f$ \citep{nisan1995rank, lovett2016communication, rothvoss2014direct, sudakov2025matrix}. When $M_f$ has constant rank, there is an efficient $\bO{\mathrm{rank}(M_f)}$-bit protocol: the agents agree on a set of $\rank(M_f) \leq 2^{CC(f)}$ columns forming a basis, and Alice sends her observed row's entries on those basis columns. This already gives a general-purpose $\bO{2^{CC(f)}}$-bit protocol for functions. The Nisan--Wigderson protocol \citep{nisan1995rank}, together with observations from \citep{lovett2016communication}, improves this to $\bO{\sqrt{\mathrm{rank}(M_f)\log(\mathrm{rank}(M_f))}}$ bits, and this was later improved to $\bO{\sqrt{\mathrm{rank}(M_f)}}$ by \citep{sudakov2025matrix}. These results induce general-purpose $\bO{2^{CC(f)/2}}$-bit protocols for functions.

These protocols are powerful in the function setting, but they do not directly extend to our decision-centric model with correlated observations and potentially many actions of varying utility. Our lower bound shows that no analogue with subexponential dependence on $CC(G)$ can hold in this generality: efficient algorithms may require $\Omega(2^{CC(G)})$ bits. Thus the low-rank improvements for functions cannot transfer wholesale to our setting. There are also computational obstacles in some of these function-protocol approaches. For example, one route uses a large-submatrix step \cite[Lemma 1.3]{lovett2016communication}, which is related to submatrix optimization problems that are hard to approximate in polynomial time under the Small Set Expansion Hypothesis and $\mathrm{NP} \not\subseteq \mathrm{BPP}$ \citep{manurangsi2018inapproximability}. Balancing the resulting Nisan--Wigderson protocol tree can also take time exponential in its size; in the worst case this can be as large as $2^{\sqrt{\mathrm{rank}(M_f)}} \leq 2^{2^{CC(f)/2}}$, since $\mathrm{rank}(M_f)$ can be as large as $2^{CC(f)}$. While \cite{rothvoss2014direct} give an efficient convex-optimization-based way of finding large monochromatic rectangles for constant-rank matrices, these techniques remain tied to the function-computation setting.

\paragraph{Hardness of communication complexity.}
\citep{harsha2007communication} study the relationship between communication complexity and computation, giving functions with low communication complexity where every low enough complexity communication protocol requires superpolynomial compute to execute. From a different perspective on the computational hardness of communication, a line of work gives lower bounds on the time required to compute the value of the communication complexity itself. Because finding the shortest protocol would be one way of computing the communication complexity, these results are closely related to the problem of effective communication. \cite{kushilevitz_complexity_2009} prove that it is cryptographically hard to compute communication complexity. \cite{hirahara2021hardness} prove that it is NP-hard to compute the minimum number of bits needed by a communication protocol with a limited number of rounds. Recently, \cite{hirahara2025communication} and \cite{gaspers2025np} have shown that computing communication complexity is NP-hard. Our hardness result is complementary: rather than asking whether the optimal communication complexity can be computed, we show that even when a short high-utility protocol exists, computational efficiency may force an exponential blowup in the number of communicated bits.

%% file: sections/08-Discussion.tex
\section{Discussion}

This paper introduces the problem of effective multi-agent communication and shows that coarsening observations can be used as an algorithmic method for keeping communication simultaneously short, useful, and efficiently computable. The question is a natural one for the TCS community, combining communication, computation, and information abstraction in multi-agent environments. It also points toward a broader motivation: a principled perspective on \emph{agentic memory}. How should an agent, or a collection of agents, decide what information from a long stream of observations, computations, or interactions is worth carrying forward? This question is increasingly concrete in systems built from large language model agents such as Codex~\citep{openai_codex} and Claude~\citep{anthropic_claude}, where useful information may be distributed across tools, sub-agents, documents, previous reasoning steps, or long interaction histories that cannot all fit in the context window. The results in this paper suggest one possible principle: memory should be judged not by whether it reconstructs the past, but by whether it enables the agent to make good decisions in the future. Coarsening the information space is a natural way to communicate in a language native to the models, rather than by designing arbitrary encoding and decoding schemes.

We believe a principled perspective on \emph{agentic memory} is fertile ground for theoretical computer science, with the potential to address algorithmic questions at the frontier of AI agentic systems. We end with a brief discussion of directions where the TCS community could help build this foundation.

\paragraph{Stateful environments and agentic memory.}
The communication protocols in this paper are non-adaptive: the messages Alice and Bob send do not depend on each other's earlier messages. This raises the question of when adaptivity is useful for effective communication, and whether adaptive protocols can outperform \Cref{alg:One-Way-Communication}. Investigating adaptivity is a natural next step.

The model in this paper is also a one-shot information gathering model in which agents observe private information, communicate, then take an action. However, many natural agentic systems are stateful, in which information gathering itself is correlated with earlier messages and actions. This suggests studying multi-agent communication and coarsenings in sequential settings such as MDPs and  POMDPs.

\paragraph{Mixed motives and strategic communication.}
The model in this paper assumed that all agents share the same payoff (the game of common interest). While this is perhaps the right starting point for isolating the algorithmic problem of useful communication, many multi-agent settings involve partially aligned or mixed-motive agents.
In such settings, an agent may want to communicate information that is useful for one objective while withholding information relevant to another. Understanding how mixed motives shape information revelation under communication and computational constraints, perhaps by bridging Bayesian persuasion with the algorithmic perspective on communication introduced here, could provide a principled way to reason about what agents should reveal, remember, or compress when incentives are not perfectly aligned.

\paragraph{Decision-preserving regularity broadly.}
At a technical level, our main tool is a regularity lemma that produces a small abstraction of a game while preserving the value of all short protocols. We believe that understanding the broader connections between notions of agentic memory via coarsenings, regularity, indistinguishability (as well as their connections to multicalibration, omniprediction, and the broader perspective on multi-objective learning) presents an exciting agenda for the TCS community in line with recent developments connecting these notions ~\citep{hebert2018multicalibration,dwork2021outcome,gopalan2021omnipredictors,haghtalab2023unifying,garg2023oracle,balakrishnan2025panprediction,casacuberta2024complexity,casacuberta2025global}.

\section*{Acknowledgments}
The authors would like to thank Cam Allen, Natalie Collina, Parikshit Gopalan, Brian Lee, and Abhishek Shetty for helpful discussions and useful feedback on earlier drafts of this work.
This work was supported in part by the National Science Foundation under grant CCF-2145898, by the Office of Naval Research under grant N00014-24-1-2159, a Google Research Scholar Award, an Alfred P. Sloan fellowship, a Schmidt Science AI2050 fellowship, and by a gift from Coefficient Giving (formerly Open Philanthropy) supporting the work of the Center for Human-Compatible AI at Berkeley. Mark Bedaywi is supported by a Cooperative AI PhD Fellowship and a Canada Graduate Research Scholarship.

%% file: sections/09-extra-proofs.tex
\section{The Random Communication Policy Performs Poorly}

Here we show that randomly partitioning the observation space is very ineffective. We can prove that for any $n$, there exist games with $n$ observations and a communication complexity of 1 such that a randomly sampled protocol either sends $\Omega(\log(n))$ bits --- on the order of that required to represent the full observation --- or essentially achieves the same utility as not sending anything.

\begin{restatable}[The Average Protocol Fails]{proposition}{RandomProtocol}
    \label{prop:grid-must-fail}
    For any $n, k \in \mathbb{N}$, there exists a communication game $G$ with $n$ observations, two actions, and $CC(G) \leq 1$, where the optimal one-bit scheme achieves a utility of 1, but a uniformly random partition of the observation space into 
    $2^k$ parts induces a protocol whose expected utility is at most
    \begin{equation*}
        \frac{1}{2} \cdot \left(1 + \sqrt{\frac{2^k}{n}}\right).
    \end{equation*}
    So, if $k \in o(\log(n))$, a randomly sampled communication protocol achieves in expectation no more than $\nicefrac{1}{2}\cdot (1 + o(1))$ utility.
\end{restatable}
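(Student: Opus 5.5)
The plan is to build a game in which all the relevant information sits on Alice's side and is a balanced binary label. Then compute the utility of a random partition exactly as a sum of per-part majority terms, and bound the deviation from $\nicefrac{1}{2}$ by a second-moment (Cauchy--Schwarz/Jensen) argument.

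\textbf{Construction.} Take $n$ even (for odd $n$, put a single point of zero probability aside, or absorb the $O(1/n)$ correction). Let $\Omega_A=[n]$ with $\omega_A$ uniform, and let Bob's observation be a fixed dummy value. Take $\acts=\{0,1\}$ and a balanced labeling $f:[n]\to\{0,1\}$ with exactly $n/2$ ones. Set $r(a;\omega_A,\omega_B)=\mathds{1}(a=f(\omega_A))$.
\begin{itemize}
\item This reward is $\ell_2$-normalized, since for each observation exactly one action has reward $1$.
\item Alice sending the single bit $f(\omega_A)$, with Charlie playing that bit, achieves utility $1$, so $CC(G)\le 1$.
\end{itemize}

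\textbf{Exact utility of a random partition.} Assign each $x\in[n]$ independently and uniformly to one of $2^k$ parts. The induced protocol has Alice announce her part, after which Charlie best-responds. For part $j$, let $n_j$ be its size and $s_j$ the number of ones it contains. Charlie's utility on part $j$ is $\max(s_j,n_j-s_j)/n$, so the total utility is
\[
\frac1n\sum_{j}\max(s_j,n_j-s_j)=\frac12+\frac1n\sum_j\Bigl|s_j-\tfrac{n_j}{2}\Bigr| .
\]
Write $\varepsilon_x=f(x)-\tfrac12\in\{\pm\tfrac12\}$. Then
\[
s_j-\tfrac{n_j}{2}=\sum_x \mathds{1}(x\in j)\,\varepsilon_x .
\]

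\textbf{Second-moment bound.} By Jensen, $\E|s_j-n_j/2|\le\sqrt{\E(\sum_x\mathds{1}(x\in j)\varepsilon_x)^2}$. Expanding the square:
\begin{itemize}
\item The diagonal terms contribute $n\cdot\tfrac14\cdot 2^{-k}$.
\item The off-diagonal terms contribute $2^{-2k}\sum_{x\neq y}\varepsilon_x\varepsilon_y=2^{-2k}\bigl((\sum_x\varepsilon_x)^2-\tfrac n4\bigr)=-\tfrac n4 2^{-2k}\le 0$. This uses balancedness, which gives $\sum_x\varepsilon_x=0$.
\end{itemize}
Hence $\E|s_j-n_j/2|\le \tfrac12\sqrt{n/2^k}$. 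Summing over the $2^k$ parts and dividing by $n$ gives expected utility at most
\[
\tfrac12+\tfrac1n\cdot 2^k\cdot\tfrac12\sqrt{n/2^k}=\tfrac12\bigl(1+\sqrt{2^k/n}\bigr).
\]
The corollary for $k=o(\log n)$ follows immediately, since then $2^k/n\to 0$.

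The only delicate point is the cross-term computation, which is exactly where balancedness of $f$ is needed; everything else is routine. If one instead intends "random partition" to mean random partitions of both players' spaces, the same argument applies, because Bob's observation is uninformative.
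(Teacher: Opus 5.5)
Your argument is correct, but it puts the randomness on the opposite side from the paper.

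\textbf{The two routes.} The paper fixes an arbitrary partition $T_1,\dots,T_{2^k}$ and randomizes the target set $S$, including each point independently with probability $\nicefrac{1}{2}$. The binomial variance bound gives $\E_S V(T_i)\le \frac12+\frac{1}{2\sqrt{\abs{T_i}}}$ for each part. Cauchy--Schwarz over the parts then gives $\frac12\bigl(1+\sqrt{2^k/n}\bigr)$, and averaging shows some bad $S$ exists. You instead fix a balanced $S$ and randomize the partition. All the content then sits in the cross term $\sum_{x\neq y}\Pr[x,y \text{ share a part}]\,\varepsilon_x\varepsilon_y$, which balancedness makes nonpositive.

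\textbf{What each buys.} The paper's bound holds for \emph{every} fixed partition, and hence for any distribution over partitions. However, its bad $S$ is only shown to exist and may depend on that distribution. Your construction is explicit: every balanced $S$ is bad. Keeping the $-\frac n4 2^{-2k}$ term even sharpens the bound to $\frac12\bigl(1+\sqrt{(2^k-1)/n}\bigr)$. The price is that your argument genuinely needs the partition to be random and oblivious to $S$, since any fixed partition refining $\{S,S^C\}$ attains utility $1$.

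\textbf{Two small points to tighten.}
\begin{enumerate}
\item Your i.i.d.-label model allows empty parts, so it is not literally the uniform distribution over partitions of $[n]$ into $2^k$ parts that the paper uses. This does not hurt you. You only need every pair $x\neq y$ to share a part with the same probability $q$, which holds for any partition distribution invariant under permutations of $[n]$, including the uniform one. Then $\E\sum_j\bigl(\sum_{x\in T_j}\varepsilon_x\bigr)^2=\frac n4+q\bigl((\sum_x\varepsilon_x)^2-\frac n4\bigr)\le\frac n4$, and Cauchy--Schwarz over the $2^k$ parts finishes as before. You should say this explicitly.
\item Odd $n$ needs no ``$O(1/n)$ correction''; as you phrased it, that would not give the exact stated bound anyway. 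A labeling with $(n-1)/2$ ones has $(\sum_x\varepsilon_x)^2=\frac14\le\frac n4$, so the cross term is still nonpositive and the same computation goes through unchanged.
\end{enumerate}
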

\begin{proof}
    Consider the following family of communication games. Some set $S$ is fixed and known to all agents. Alice receives a number $\omega$ from 1 to $n$ uniformly at random, Bob receives no observation, and Charlie must guess whether $\omega \in S$. The communication complexity is at most $1$, as Alice can simply communicate $\mathds{1}(\omega \in S)$. This is true regardless of the choice of $S$. The partition number of this game is 2, as the partition simply consists of $S$ and $S^C$.

    We will construct the sequence of games one game at a time. For any choice of $n$, we will show that some set $S$ exists for which Alice communicating which part in a random partition her secret number is in provides Charlie with as little information as possible.

    We will use the notation $\mathcal{P}_i(A)$ to mean the collection of all partitions of $A$ into $i$ parts.
    Consider a random $k$-bit partitioning protocol instead. This is a random partitioning of $[n]$ into $2^k$ parts, $(T_1, \dots, T_{2^k}) \sim \mathrm{Unif}\left[\mathcal{P}_{2^k}([n])\right]$, where Alice communicates which part she is in. The argument will be that $k$ must be substantially large before signaling which part of a random partition you are in is useful to Charlie on average.
    We use the probabilistic method to simplify the analysis below: we will assume that $S$ is randomly sampled for now. The course of the argument will be to show that random $S$ using random partitions must fail and to conclude that there must exist some bad set $S$ for which random partitions fail.

    Suppose $S$ is sampled uniformly at random, so that each element in $[n]$ is in $S$ with probability $0.5$, and these events are independent. Let $V(T)$ be the reward Charlie can achieve under the knowledge that Alice's observation is contained in $T$. The expected utility of a randomly chosen partition of size $2^k$ is
    \begin{equation*}
        \E_S\E_{T_1, \dots, T_{2^k} \sim \mathrm{Unif}[\mathcal{P}_{2^k}([n])]}\left[\sum_{i = 1}^{2^k}V(T_i)\cdot\frac{\abs{T_i}}{n} \right]
        = \E_{T_1, \dots, T_{2^k} \sim \mathrm{Unif}[\mathcal{P}_{2^k}([n])]}\left[\sum_{i = 1}^{2^k}\frac{\abs{T_i}}{n}\E_SV(T_i) \right].
    \end{equation*}

    For an arbitrary set $T \subseteq [n]$, what is the value of knowing $T$ for a random $S$? Knowing Alice's observation is contained in $T$, your possible actions are to guess that $\omega \in S$ and in expectation receive $\frac{\abs{T \cap S}}{\abs{T}}$, or guess that $\omega \notin S$ and receive $\frac{\abs{T \setminus S}}{\abs{T}}$. This means
    \begin{align}
        \E_SV(T)
        &= \E_S \left[\max\left(\frac{\abs{T \cap S}}{\abs{T}}, \frac{\abs{T \setminus S}}{\abs{T}}\right)\right] \nonumber\\
        &= \frac{1}{2}\E_S \left[\frac{\abs{T \cap S}}{\abs{T}} + \frac{\abs{T \setminus S}}{\abs{T}} + \abs{\frac{\abs{T \cap S}}{\abs{T}} - \frac{\abs{T \setminus S}}{\abs{T}}}\right] && \left(\text{Since} \max(a, b) = \frac{1}{2}(a + b + \abs{a - b})\right)\nonumber\\
        &= \frac{1}{2} + \frac{1}{2\abs{T}}\E_S \left[\big|\abs{T \cap S} - \abs{T \setminus S}\big|\right] \nonumber\\
        &= \frac{1}{2} + \frac{1}{2\abs{T}}\E_S \left[\big|\abs{T \cap S} - (\abs{T} - \abs{T \cap S})\big|\right] \nonumber\\ 
        &= \frac{1}{2} + \E_S \left[\left|\frac{\abs{T \cap S}}{\abs{T}} - \frac{1}{2}\right|\right]. \label{eq:grid-fails-continue}
    \end{align}
    Let $X_i$ be the indicator random variable that $i \in S$. By construction, $X_i \sim \text{Bern}(0.5)$, and $X_i \bot X_j$ for each $i \neq j$. This means $\E_S\sum_{t \in T} X_t = 0.5\abs{T}$ and $\text{Var}_S\left(\sum_{t \in T} X_t\right) = \sum_{t \in T}\text{Var}_S\left(X_t\right) = 0.25\abs{T}$. By Cauchy-Schwarz
    \begin{align*}
        \E_S \left[\left|\frac{\sum_{t \in T}X_t}{\abs{T}} - \frac{1}{2}\right|\right]
        &= \E_S \left[\sqrt{\left(\frac{\sum_{t \in T}X_t}{\abs{T}} - \frac{1}{2}\right)^2}\right] \\
        &\leq \sqrt{\E_S \left[\left(\frac{\sum_{t \in T}X_t}{\abs{T}} - \frac{1}{2}\right)^2\right]} \\
        &= \sqrt{\text{Var}_S\left(\frac{1}{\abs{T}}\sum_{t \in T}X_t\right)} \\
        &= \frac{1}{2\sqrt{\abs{T}}}. 
    \end{align*}
    Plugging this back into \Cref{eq:grid-fails-continue} and using Cauchy-Schwarz twice more, we get that
    \begin{align*}
        &\E_S\E_{\left(T_1, \dots, T_{2^k}\right) \sim \mathrm{Unif}\left[\mathcal{P}_{2^k}([n])\right]}\left[\sum_{i = 1}^{2^k}V(T_i)\cdot\frac{\abs{T_i}}{n} \right] \\
        &\leq \E_{\left(T_1, \dots, T_{2^k}\right) \sim \mathrm{Unif}\left[\mathcal{P}_{2^k}([n])\right]}\left[\sum_{i = 1}^{2^k}\frac{\abs{T_i}}{n} \left(\frac{1}{2} + \frac{1}{2\sqrt{\abs{T_i}}}\right) \right] \\
        &= \frac{1}{2n}\E_{\left(T_1, \dots, T_{2^k}\right) \sim \mathrm{Unif}\left[\mathcal{P}_{2^k}([n])\right]}\left[\sum_{i = 1}^{2^k}\abs{T_i}\right] + \frac{1}{2n}\E_{\left(T_1, \dots, T_{2^k}\right)\sim \mathrm{Unif}\left[\mathcal{P}_{2^k}([n])\right]}\left[\sum_{i = 1}^{2^k}\sqrt{\abs{T_i}} \right] \\
        &\leq \frac{1}{2} + \frac{1}{2n}\E_{\left(T_1, \dots, T_{2^k}\right) \sim \mathrm{Unif}\left[\mathcal{P}_{2^k}([n])\right]}\left[\sqrt{2^k\sum_{i = 1}^{2^k}\abs{T_i}} \right] \\
        &\leq \frac{1}{2} + \frac{1}{2n}\sqrt{2^k\cdot \E_{\left(T_1, \dots, T_{2^k}\right) \sim \mathrm{Unif}\left[\mathcal{P}_{2^k}([n])\right]}\left[\sum_{i = 1}^{2^k}\abs{T_i} \right]} \\
        &= \frac{1}{2} + \frac{1}{2}\sqrt{\frac{2^k}{n}}.
    \end{align*}
    Since this is the expected value of the random partition communication protocol for a randomly chosen set $S$, there must then exist a set $S$ such that the value Charlie gets on average is at most $(1 + \sqrt{\frac{2^k}{n}})/2$. If $k$ grows at least on the order of $\log(n)$, you might as well just communicate the full state you are in. Otherwise, if $k$ grows asymptotically slower than $\log(n)$, this tends towards $1/2$ for large enough games, performing virtually no better than not communicating at all! 
\end{proof}